\newcommand{\wt}{\widetilde}
\newcommand{\mb}{\mathbb}
\newcommand{\mc}{\mathcal}
\newcommand{\supp}{\mathrm{supp}}
\renewcommand{\ketbra}[2]{\mathinner{|{#1}\rangle\,\langle{#2}|}}
\newcommand{\added}[1]{#1}
\newtheorem{theorem}{Theorem}[section]
\newtheorem{definition}[theorem]{Definition}
\newtheorem{remark}[theorem]{Remark}
\newtheorem{lemma}[theorem]{Lemma}
\newtheorem{prop}[theorem]{Proposition}
\newtheorem{example}[theorem]{Example}
\newtheorem{cor}[theorem]{Corollary}
\DeclareMathOperator{\id}{id}
\begin{document}
\title[Reverse-type Data Processing Inequality]{Reverse-type Data Processing Inequality}

\author[1,2]{Paula Belzig}
\author[3]{Li Gao}
\author[1,2]{Graeme Smith}
\author[1,2,*]{Peixue Wu\footnote[0]{*Contact author: p33wu@uwaterloo.ca}}
\affil[1]{Department of Applied Mathematics, University of Waterloo, Waterloo, Ontario, Canada}
\affil[2]{Institute for Quantum Computing, University of Waterloo, Waterloo, Ontario, Canada}
\affil[3]{School of Mathematics and Statistics, Wuhan University, Wuhan, Hubei, China}

\abstract{
The quantum data processing inequality asserts that two quantum states become harder to distinguish when a noisy channel is applied. On the other hand, a reverse quantum data processing inequality characterizes whether distinguishability is preserved after the application of a noisy channel. In this work, we explore these concepts through contraction and expansion coefficients of the relative entropy of quantum channels. Our first result is that quantum channels with an input dimension greater than or equal to the output dimension do not have a non-zero expansion coefficient, which means that they cannot admit a reverse data-processing inequality. We propose a comparative approach by introducing a relative expansion coefficient, to assess how one channel expands relative entropy compared to another. We show that this relative expansion coefficient is positive for three important classes of quantum channels: depolarizing channels, generalized dephasing channels, and amplitude damping channels. As an application, we give the first rigorous construction of level-1 less noisy quantum channels that are non-degradable.

}

\keywords{Data Processing Inequality, Relative entropy, Quantum Channels, Quantum information.}

\maketitle
\tableofcontents

\section{Introduction}

\noindent
The noisy nature of a quantum channel is reflected in the fact that measures of distance in quantum information decrease when the channel is applied. This property of an information measure is referred to as monotonicity under quantum operations, or data processing inequality (DPI).
One notable quantum information measure with wide applications is the relative entropy of two quantum states $\rho$ and $\sigma$,
\begin{equation*}
D(\rho||\sigma)= \begin{cases}
 \trace\Big(\rho\big(\log(\rho)-\log(\sigma)\big)\Big) &\text{ if } \text{supp}(\rho)\subseteq \text{supp}(\sigma), \\
\infty &\text{ else}.
\end{cases}
\end{equation*}
The DPI for quantum relative entropy states that for any quantum channel $\mc N$, 
\begin{align*}
     D(\mc N(\rho)\|\mc N(\sigma)) \leq  D( \rho\| \sigma),\ \forall \rho, \sigma.
\end{align*}
Since the relative entropy quantifies how well a quantum state $\rho$ can be distinguished from $\sigma$ in the context of quantum hypothesis testing (see e.g. \cite{HP91,ON00}), the DPI implies that two quantum states can only become less distinguishable after a channel is applied. The DPI of quantum relative entropy was first proven in \cite{lindblad1975} (see \cite{Uhlmann77, Petz03, Muller_Hermes_2017, HT24} for later alterative proofs) and is now a fundamental tool in quantum information processing.

Given a channel $\mc N$, its contraction coefficient \cite{LR99,HR15,HRS22,H2024,caputo2024entropycontractionsmarkovchains} quantifies to what extent the DPI can be improved for this channel. The contraction coefficient is defined as the smallest constant $\eta_{\mc N}$ such that,
\begin{align*}
     D(\mc N(\rho)\|\mc N(\sigma)) \leq \eta_{\mc N} D( \rho\| \sigma),\quad \forall \rho, \sigma
\end{align*}
or equivalently 
\begin{equation}\label{def:contraction coefficient}
   \eta_{\mc N}:= \sup_{\rho \neq \sigma} \frac{D(\mc N(\rho) || \mc N(\sigma))}{D(\rho || \sigma)}.
\end{equation}
The contraction coefficient $\eta_{\mc N}$ characterizes how much harder it becomes to distinguish quantum states after the channel $\mc N$ is applied. By DPI, it is clear that $\eta_{\mc N}\in [0,1]$ for any channel $\mc N$. If $\eta_{\mc N}<1$, the distinguishability of $\rho$ and $\sigma$ decays exponentially fast under the repeated applications of the same channel. In this case, we say that $\mc N$ obeys \textit{strong data processing inequality} (SDPI).

On the other hand, one can also ask whether a channel $\mc N$ must necessarily destroy all distinguishability in the worst case, or if it preserves at least a fixed fraction of information. To capture this, the \emph{expansion coefficient} was introduced in \cite{RISB21,LS23,H2024}:
\begin{equation} \label{def:expansion coefficient}
    \widecheck{\eta}_{\mc N}:= \inf_{\rho \neq \sigma} \frac{D(\mc N(\rho) || \mc N(\sigma))}{D(\rho || \sigma)} \in [0,1].
\end{equation}
If the expansion coefficient of a channel is strictly greater than 0, it must necessarily preserve some information about the states. This can be interpreted as a reverse data processing inequality for $\mc N$ (in direct analogy with reverse Doeblin coefficients \cite{H2024}, reverse Pinsker inequality \cite{Sason15,SV16}, and reverse log-Sobolev inequality \cite{Sontz99}). Operationally, since relative entropy governs exponential error‐decay rates in quantum hypothesis testing (via Stein’s lemma, see~\cite{hiai1991proper}), \(\widecheck\eta_{\mathcal{N}}\) measures the channel’s worst‐case ability to preserve information.
 
\added{Moreover, a positive expansion coefficient guarantees exponential convergence of iterative algorithms such as quantum Blahut–Arimoto method for channel‐capacity computation \cite{RISB21}. In this work, we show that requiring \(\widecheck\eta_{\mathcal{N}}>0\) imposes strong dimension‐dependent constraints, and we prove that beyond certain input‐output size thresholds, one must have \(\widecheck\eta_{\mathcal{N}}=0\): }
\medskip
\begin{theorem}[c.f. Theorem \ref{main:impossibility}]
  For a quantum channel $\mc N:\mb B(\mc H_A)\to \mb B(\mc H_B)$ with $d_A \ge d_B$, we have  
    \begin{equation}
    \widecheck{\eta}_{\mc N}=
    \begin{cases}
        1,\quad d_A =d_B \text{ and } \ \mc N(\rho) = U\rho U^{\dagger} \text{ for some unitary U}, \\
        0,\quad \text{otherwise}.
    \end{cases}
    \end{equation}
\end{theorem}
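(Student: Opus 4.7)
The case $\widecheck{\eta}_{\mc N}=1$ is immediate from unitary invariance of the relative entropy: if $\mc N(\rho)=U\rho U^\dagger$ for a unitary $U$, then $D(\mc N(\rho)\|\mc N(\sigma))=D(\rho\|\sigma)$ for all $\rho,\sigma$, so the defining ratio is identically $1$. In the remaining cases with $d_A\geq d_B$ my goal is to show $\widecheck{\eta}_{\mc N}=0$.

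If $d_A>d_B$, I would use dimension counting. Viewed as a real-linear map between the $d_A^2$- and $d_B^2$-dimensional spaces of Hermitian operators, $\mc N$ must have a nontrivial kernel, and trace preservation forces every kernel element to be traceless. Picking any nonzero Hermitian $X\in\ker\mc N$ and Jordan-decomposing $X=X_+-X_-$, the normalised states $\rho_0:=X_+/\mathrm{tr}(X_+)$ and $\sigma_0:=X_-/\mathrm{tr}(X_-)$ are distinct and satisfy $\mc N(\rho_0)=\mc N(\sigma_0)$. Their convex combinations with the maximally mixed state, $\rho_\epsilon:=(1-\epsilon)\rho_0+\epsilon I/d_A$ and $\sigma_\epsilon$ analogously, are distinct full-rank states with identical images, hence $D(\mc N(\rho_\epsilon)\|\mc N(\sigma_\epsilon))=0$ while $D(\rho_\epsilon\|\sigma_\epsilon)>0$ is finite; this forces $\widecheck{\eta}_{\mc N}=0$.

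If $d_A=d_B=:d$ and $\mc N$ is not a unitary conjugation, $\mc N$ may still be injective as a superoperator (e.g.\ amplitude damping or weak depolarising noise), so I would use a limit argument. After first reducing to the case $\mathrm{supp}(\mc N(I))=\mc H_B$ (otherwise the range of $\mc N$ sits in a strictly smaller subspace and we are back in the previous case), I would pick a pure $|\psi_0\rangle$ for which $\mc N(I-|\psi_0\rangle\langle\psi_0|)$ has full rank and set $\rho=|\psi_0\rangle\langle\psi_0|$, $\sigma_\epsilon:=\epsilon\rho+(1-\epsilon)\mu$ with $\mu:=(I-|\psi_0\rangle\langle\psi_0|)/(d-1)$. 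Then $D(\rho\|\sigma_\epsilon)=-\log\epsilon\to\infty$, while the full-rank hypothesis on $\mc N(\mu)$ gives $\mathrm{supp}(\mc N(\rho))\subseteq\mc H_B=\mathrm{supp}(\mc N(\mu))$, so $D(\mc N(\rho)\|\mc N(\sigma_\epsilon))\to D(\mc N(\rho)\|\mc N(\mu))<\infty$, and the ratio tends to $0$.

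The main obstacle is the existence of such a pure state $|\psi_0\rangle$, equivalently the contrapositive claim that if $\mc N(I-|\psi\rangle\langle\psi|)$ is rank-deficient for every pure $|\psi\rangle$, then $\mc N$ must be a unitary conjugation. Each rank-deficiency produces $|v_\psi\rangle\neq 0$ with $\langle v_\psi|\mc N(|\psi\rangle\langle\psi|)|v_\psi\rangle=\langle v_\psi|\mc N(I)|v_\psi\rangle$, and a Cauchy--Schwarz argument applied term-by-term to any Kraus decomposition $\mc N(\cdot)=\sum_i K_i(\cdot)K_i^\dagger$ forces $K_i^\dagger|v_\psi\rangle\in\mathbb{C}|\psi\rangle$ for every Kraus operator $K_i$. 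Passing to a Kraus representation in which some $K_1$ is invertible (possible whenever $\mc N(I)$ is full rank), the family of relations then forces each $K_i^\dagger(K_1^\dagger)^{-1}$ to be a scalar multiple of the identity, so all Kraus operators are proportional to $K_1$; trace preservation together with $d_A=d_B$ then makes $K_1$ unitary, contradicting the non-unitary hypothesis.
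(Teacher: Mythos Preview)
Your kernel argument for $d_A>d_B$ is correct and arguably cleaner than the paper's uniform treatment of both cases. The limit construction in the $d_A=d_B$ case is also sound \emph{once} you have a pure state $\ket{\psi_0}$ with $\mc N(I-\ket{\psi_0}\bra{\psi_0})$ of full rank.

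The genuine gap is in your proof that such a $\ket{\psi_0}$ exists. The parenthetical claim ``passing to a Kraus representation in which some $K_1$ is invertible (possible whenever $\mc N(I)$ is full rank)'' is false for $d\ge 3$. Take the Werner--Holevo--type channel on $\mb C^3$ with Kraus operators
\[
K_1=\tfrac{1}{\sqrt 2}\big(\ket{0}\bra{1}-\ket{1}\bra{0}\big),\quad
K_2=\tfrac{1}{\sqrt 2}\big(\ket{0}\bra{2}-\ket{2}\bra{0}\big),\quad
K_3=\tfrac{1}{\sqrt 2}\big(\ket{1}\bra{2}-\ket{2}\bra{1}\big).
\]
One checks $\sum_a K_a^\dagger K_a=I$ and $\mc N(I)=I$, yet the Kraus space is exactly the antisymmetric $3\times 3$ matrices, every one of which is singular (odd-dimensional antisymmetric matrices have determinant zero). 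So the step ``$K_i^\dagger(K_1^\dagger)^{-1}$ is scalar'' cannot be run. Your contrapositive lemma may well be true --- for this particular channel one computes $\mc N(I-\ket{\psi}\bra{\psi})=\tfrac12\big(I+\ket{\bar\psi}\bra{\bar\psi}\big)$, which is indeed full rank for every $\ket{\psi}$ --- but your argument does not establish it in general.

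The paper avoids this issue entirely. It first invokes a purity-preservation lemma (a purity-preserving channel is either an isometric conjugation or a replacer) to obtain a pure state $\ket{\varphi_1}$ whose image is mixed, and then runs a pigeonhole on the chain of support dimensions $\dim\operatorname{supp}\mc N(P_k)$ along an increasing flag $P_1\le P_2\le\cdots\le P_{d_A}$: since these dimensions start at $\ge 2$ and are bounded by $d_B\le d_A$, two consecutive ones must coincide, yielding $P_A=P_{k_0}$ and $\ket{\psi}=\ket{\varphi_{k_0+1}}$ with $\operatorname{supp}\mc N(P_A)=\operatorname{supp}\mc N(P_A+\ket{\psi}\bra{\psi})$. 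This handles $d_A\ge d_B$ in one stroke and requires no invertibility hypothesis on Kraus operators.
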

\noindent For channels with greater output than input dimension, the same can not be true because it is easy to construct flagged channels such as erasure channels with $\widecheck{\eta}_{\mc N}>0$.

The above result suggests that the expansion coefficient $\widecheck{\eta}_{\mc N}$ does not serve as a reliable standalone measure of information preservation. 
In this work, we propose a comparative approach, where two quantum channels $\mc N$ and $\mc M$ are analyzed based on how they contract or expand the relative entropy \emph{relative to each other}. To formalize this, we introduce the \emph{relative expansion coefficient} $\widecheck{\eta}_{\mc N,\mc M}$ and \emph{relative contraction coefficient} $\eta_{\mc N,\mc M}$:
\begin{align}
 \widecheck{\eta}_{\mc N,\mc M} := \inf_{\rho \neq \sigma, \supp(\rho) \subseteq \supp(\sigma)} \frac{D(\mc N(\rho) \| \mc N(\sigma))}{D(\mc M(\rho) \| \mc M(\sigma))}\ , \  \ \   \eta_{\mc N,\mc M} := \sup_{\rho \neq \sigma,\supp(\rho) \subseteq \supp(\sigma)} \frac{D(\mc N(\rho) \| \mc N(\sigma))}{D(\mc M(\rho) \| \mc M(\sigma))}. \label{def:rel-expan}
\end{align}These two are essentially the same definition by noting that $\widecheck{\eta}_{\mc N,\mc M}=\eta_{\mc M,\mc N}^{-1}$.
The relative contraction coefficient $\eta_{\mc N,\mc M}$ is also referred to as the \emph{less noisy domination factor} in \cite{MP18, HRS22}.

We present systematic tools for analyzing the relative expansion of two quantum channels, including a BKM metric comparison (Section \ref{sec:BKM}) and a complete positivity comparison (Section \ref{sec:CP}). We also provide an sufficient condition for qubit channels such that the relative expansion coefficients $\widecheck{\eta}_{\mc N,\mc M}$ is positive (Section \ref{sec:qubit calculation}). Based on those methods, we investigate the cases when $\mc N$ and $\mc M$ are a pair of depolarizing channels, generalized dephasing channels, and amplitude damping channels respectively. These are three most important classes of quantum channels studied in the literature \cite{preskill1998lecture}. Our results show that the relative expansion coefficient $\widecheck{\eta}_{\mc N,\mc M}$ is often positive (non-trivial) when $\mc N$ and $\mc M$ are related by the degrading condition that $\mc D \circ \mc M = \mc N$ for some quantum channel $\mc D$, establishing a reverse-type data processing inequality for the following cases: 
\medskip
\begin{theorem} We have the following estimates of relative contraction and expansion coefficients. (see Section \ref{sec: examples} for details)
\begin{table}[h]
\centering
\renewcommand{\arraystretch}{1.5} 
\label{tab1}
\begin{tabularx}{\textwidth}{>{\raggedright\arraybackslash}X 
                        >{\centering\arraybackslash}X       
                        >{\centering\arraybackslash}X}      
\toprule
\textbf{Channels} $(\mathcal{N}, \mathcal{M})$ &  $\eta_{\mathcal{N}, \mathcal{M}}$ &  $\widecheck{\eta}_{\mathcal{N}, \mathcal{M}}$ \\
\midrule
\textbf{d-dimension depolarizing}: $(\mathcal{D}_{p_1}, \mathcal{D}_{p_2})$ & $ \le \left( \dfrac{1-p_1}{1-p_2} \right)^2 \dfrac{1 - \frac{d-1}{d}p_2}{1 - \frac{d-1}{d}p_1}$ & $ \ge \left( \dfrac{1-p_1}{1-p_2} \right)^2 \dfrac{p_2}{p_1}$ \\
\addlinespace
\textbf{Qubit depolarizing}: $(\mathcal{D}_{p_1}, \mathcal{D}_{p_2})$ & $=\left( \dfrac{1-p_1}{1-p_2} \right)^2$ & $= \left( \dfrac{1-p_1}{1-p_2} \right)^2 \dfrac{p_2(2-p_2)}{p_1(2-p_1)}$ \\
\addlinespace
\textbf{Generalized dephasing}: $(\Phi_{\Gamma}, \Phi_{\Gamma'})$ & $= 1$ & $> 0$ if $\Gamma$ and $\Gamma'$ are close \\
\addlinespace
\textbf{Qubit dephasing}: $(\Phi_{p}, \Phi_{p'})$ & $= 1$ & $> 0$ \\
\addlinespace
\textbf{Qubit amplitude damping}: $(\mathcal{A}_{\gamma_1}, \mathcal{A}_{\gamma_2})$ & $\le \sqrt{\dfrac{1- \gamma_1}{1 - \gamma_2}}$ & $> 0$ \\
\bottomrule
\end{tabularx}
\end{table}
\end{theorem}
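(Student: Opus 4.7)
The plan is to derive each row of the table by specializing the three general comparison tools developed earlier: the BKM metric comparison of Section~\ref{sec:BKM}, the complete-positivity comparison of Section~\ref{sec:CP}, and the qubit-specific sufficient condition of Section~\ref{sec:qubit calculation}. At a high level, each upper bound on $\eta_{\mc N,\mc M}$ will come from an infinitesimal BKM estimate integrated along a geodesic (or from a direct two-state comparison at the extremal pair), while each lower bound on $\widecheck{\eta}_{\mc N,\mc M}$ will come from writing the noisier channel as the composition of the less noisy one with a degrading map and tracking how much relative entropy survives that composition.

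For depolarizing channels $\mc D_p(\rho)=(1-p)\rho+p\,\mathrm{tr}(\rho)\,I/d$, I would use unitary covariance to reduce the analysis to fixed reference states, then exploit the fact that $\mc D_p$ acts as multiplication by $(1-p)$ on the traceless part of a state. The BKM operator of $\mc D_{p_1}$ relative to $\mc D_{p_2}$ then scales like $(1-p_1)^2/(1-p_2)^2$ on the traceless subspace, producing the leading factor in both bounds; the secondary factors $(1-(d-1)p_2/d)/(1-(d-1)p_1/d)$ and $p_2/p_1$ track the weight of the reference state and the degrading map strength, respectively. Concretely, for the lower bound on $\widecheck{\eta}$ I would use the semigroup relation $\mc D_{p_1}=\mc D_r\circ \mc D_{p_2}$ with $r=(p_1-p_2)/(1-p_2)$ and apply the CP comparison. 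In the qubit case the Bloch-sphere parametrization lets one write the output relative entropy as a binary-entropy type expression, which sharpens the upper bound to equality and produces the $p(2-p)$ factor in the lower bound.

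For generalized dephasing channels $\Phi_\Gamma$, which multiply off-diagonal matrix elements by $\Gamma_{ij}$ and leave diagonal entries invariant, the upper bound $\eta_{\Phi_\Gamma,\Phi_{\Gamma'}}=1$ is almost immediate by testing on diagonal states, for which both channels act as the identity. Positivity of $\widecheck{\eta}$ when $\Gamma$ and $\Gamma'$ are close follows from a BKM comparison: the relevant comparison operator is continuous in the parameters $\Gamma,\Gamma'$ and strictly positive on the diagonal $\Gamma=\Gamma'$, so a perturbation argument produces a positive lower bound in a neighborhood. In the qubit case only one off-diagonal parameter exists and the BKM comparison reduces to a one-parameter calculation, which can be checked directly to give unconditional positivity.

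The qubit amplitude damping case $\mc A_\gamma$ is where I expect the main difficulty, because this channel is neither unital nor covariant under any large group, so neither BKM symmetry nor a clean degrading semigroup structure is directly available. The upper bound $\eta_{\mc A_{\gamma_1},\mc A_{\gamma_2}}\le\sqrt{(1-\gamma_1)/(1-\gamma_2)}$ should follow from how $\mc A_\gamma$ contracts off-diagonal coherences by $\sqrt{1-\gamma}$, producing a square-root scaling rather than the quadratic one seen for depolarizing; one can verify this by comparing the two channels on states obtained from small coherent perturbations of the ground state. For the lower bound on $\widecheck{\eta}$ the plan is to invoke the qubit sufficient condition of Section~\ref{sec:qubit calculation} and verify the required operator inequality between the Choi data of $\mc A_{\gamma_1}$ and $\mc A_{\gamma_2}$. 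The absence of symmetry means this verification will likely require a careful spectral analysis of a two-parameter family of $2\times 2$ blocks rather than a clean closed-form argument, and this is the step I anticipate to be the bottleneck of the entire proof.
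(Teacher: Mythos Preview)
Your high-level plan to distribute the three tools across the five rows is reasonable, but you have the allocation partly inverted and you underestimate where the real work lies. In the paper, the depolarizing bounds come not from the CP comparison (Lemma~\ref{lemma: comparison dephasing}) but from the simpler operator comparison (Lemma~\ref{lemma: comparison}): since $\mc D_p$ multiplies the traceless part by $1-p$, one has $g_{\mc D_{p_i}(\omega)}(\mc D_{p_i}(X))=(1-p_i)^2 g_{\mc D_{p_i}(\omega)}(X)$, and then the bounds $\frac{p_1}{p_2}\mc D_{p_2}(\omega)\ge \mc D_{p_1}(\omega)\ge \frac{1-\frac{d-1}{d}p_1}{1-\frac{d-1}{d}p_2}\mc D_{p_2}(\omega)$ give the secondary factors directly via Lemma~\ref{lemma: comparison}. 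Your semigroup idea $\mc D_{p_1}=\mc D_r\circ\mc D_{p_2}$ gives the \emph{upper} bound on $\eta$ via ordinary contraction of $\mc D_r$, but it cannot give the lower bound on $\widecheck{\eta}$, since the expansion coefficient of $\mc D_r$ alone is zero by Theorem~\ref{main:impossibility}. Conversely, the generalized dephasing result is precisely where the paper \emph{does} use the CP comparison (Lemma~\ref{lemma: comparison dephasing}), writing $\Phi_{\Gamma'}=(1-\varepsilon)\Phi_\Gamma+\varepsilon\Phi_{\tilde\Gamma}$ and exhibiting an explicit degrading map; your proposed continuity argument (``strictly positive on the diagonal $\Gamma=\Gamma'$, so a perturbation argument\ldots'') has a real gap, because the BKM ratio must be bounded uniformly over all $(\sigma,X)$, and this uniformity fails at singular $\sigma$ where both BKM metrics blow up.

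The singular-state issue is in fact the crux of both qubit proofs, and your sketch does not engage with it. For qubit dephasing and amplitude damping the paper applies Lemma~\ref{lemma:qubit case general result}, and conditions (1) and (2) there are routine; the entire content is condition (3), controlling the ratio $(\cos^2\theta_{\mc N}+\sin^2\theta_{\mc N}f(|\vec w_{\mc N}|))/(\cos^2\theta_{\mc M}+\sin^2\theta_{\mc M}f(|\vec w_{\mc M}|))$ as $\vec w$ approaches the points where $|\vec w_{\mc N}|\to 1$ (namely $\pm e_3$ for dephasing, $e_3$ for amplitude damping). There the na\"ive term-by-term bound $\inf \cos^2\theta_{\mc N}/\cos^2\theta_{\mc M}$ is zero, and one must instead show that when $\cos^2\theta_{\mc N}$ vanishes it does so no faster than $f(|\vec w_{\mc N}|)\sim -(1-|\vec w_{\mc N}|^2)\ln(1-|\vec w_{\mc N}|^2)$; this requires an explicit estimate on $|\vec w_{\mc N}\cdot\vec y_{\mc N}|^2-|\vec w_{\mc M}\cdot\vec y_{\mc M}|^2$ in terms of $1-|\vec w_{\mc M}|^2$. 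Your phrase ``can be checked directly'' for qubit dephasing and ``operator inequality between the Choi data'' for amplitude damping both miss this; in particular, the paper explicitly notes that amplitude damping channels do \emph{not} satisfy any CP-order comparison $c_1\mc A_{\gamma_1}\le_{cp}\mc A_{\gamma_2}$, so a Choi-based argument cannot work there. Finally, the amplitude damping upper bound $\eta_{\mc A_{\gamma_1},\mc A_{\gamma_2}}\le\sqrt{(1-\gamma_1)/(1-\gamma_2)}$ is obtained not by a state perturbation but by combining the semigroup relation $\mc A_{\gamma_1}=\mc A_{\tilde\gamma}\circ\mc A_{\gamma_2}$ with the trace-norm bound $\eta_{\mc A_{\tilde\gamma}}\le\eta^{\tr}_{\mc A_{\tilde\gamma}}=\sqrt{1-\tilde\gamma}$ (Lemmas~\ref{lemma:contraction coefficient entropy and trace} and~\ref{lemma:contraction amplitude damping}).
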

As an application, we utilize the framework of contraction coefficients and relative expansion coefficients to construct quantum channels that are (level-1) less noisy but not degradable. Roughly speaking, a quantum channel is considered (level-1) less noisy if the information in the output system is not less than the information contained in the environment system when we allow the system to couple with an arbitrary classical system (see Section~\ref{sec: preliminaries} for the rigorous definition). Our construction is as follows:
\medskip
\begin{theorem}[c.f. Proposition \ref{prop:ampdamp regions}]
Suppose $\mc A_{\gamma}$ is the amplitude damping channel defined in \eqref{def:amplitude damping channel}. The quantum channel $$\Psi_{p,\gamma_1,\gamma_2}(\rho) = p \ketbra{0}{0}\otimes \mc A_{\gamma_1}(\rho) + (1-p) \ketbra{1}{1} \otimes \mc A_{\gamma_2}(\rho)$$ is less noisy if 
\vspace{5pt}
\begin{itemize}
\item $\gamma_1 + \gamma_2 >1$ and $\gamma_1 < \frac{1}{2}$, and 
\begin{align*}
    \displaystyle p \in \left[ \frac{1}{1 + \widecheck{\eta}_{\mc A_{\gamma_1}, \mc A_{1-\gamma_2}} (1- \eta_{\mc A_{\wt \gamma_1}})} , 1\right],\quad \wt \gamma_1 = \frac{1-2\gamma_1}{1-\gamma_1}.
\end{align*}
\item $\gamma_1 + \gamma_2 >1$ and $\gamma_2 < \frac{1}{2}$, and 
\begin{align*}
    p \in \left[0, \frac{\widecheck{\eta}_{\mc A_{\gamma_2}, \mc A_{1-\gamma_1}} (1- \eta_{\mc A_{\wt \gamma_2}})}{1 + \widecheck{\eta}_{\mc A_{\gamma_2}, \mc A_{1-\gamma_1}} (1- \eta_{\mc A_{\wt \gamma_2}})}\right],\quad \wt \gamma_2 = \frac{1-2\gamma_2}{1-\gamma_2}.
\end{align*}
\end{itemize}
\end{theorem}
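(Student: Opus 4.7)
The plan is to translate the level-$1$ less noisy property of $\Psi_{p,\gamma_1,\gamma_2}$ into a single relative-entropy inequality, exploit the classically-flagged structure of $\Psi$ and its complement $\Psi^{c}$, and then invoke the contraction and relative expansion coefficients for the amplitude damping family established in Section~\ref{sec: examples}. Using the characterization in Section~\ref{sec: preliminaries}, $\Psi$ is level-$1$ less noisy iff $D(\Psi(\rho)\|\Psi(\sigma)) \geq D(\Psi^{c}(\rho)\|\Psi^{c}(\sigma))$ for all states $\rho,\sigma$. A direct Stinespring computation shows that $\Psi^{c}$ inherits the flagged structure,
\[
\Psi^{c}(\rho) \;=\; p\,\ketbra{0}{0}\otimes \mc A^{c}_{\gamma_{1}}(\rho) + (1-p)\,\ketbra{1}{1}\otimes \mc A^{c}_{\gamma_{2}}(\rho),
\]
and $\mc A^{c}_{\gamma} = \mc A_{1-\gamma}$ up to an environment isometry. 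Since relative entropy is additive across block-diagonal components, the less noisy inequality reduces to
\[
p\bigl[D(\mc A_{\gamma_1}(\rho)\|\mc A_{\gamma_1}(\sigma)) - D(\mc A_{1-\gamma_1}(\rho)\|\mc A_{1-\gamma_1}(\sigma))\bigr] \;\geq\; (1-p)\bigl[D(\mc A_{1-\gamma_2}(\rho)\|\mc A_{1-\gamma_2}(\sigma)) - D(\mc A_{\gamma_2}(\rho)\|\mc A_{\gamma_2}(\sigma))\bigr].
\]

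Next, I would exploit the composition rule $\mc A_{\alpha}\circ \mc A_{\beta} = \mc A_{\alpha+\beta-\alpha\beta}$ to factor $\mc A_{1-\gamma_{1}} = \mc A_{\wt\gamma_{1}} \circ \mc A_{\gamma_{1}}$ with $\wt\gamma_{1} = (1-2\gamma_{1})/(1-\gamma_{1})$, which is nonnegative precisely when $\gamma_{1}<1/2$. Applying the contraction coefficient of $\mc A_{\wt\gamma_{1}}$ to the pair $(\mc A_{\gamma_{1}}(\rho),\mc A_{\gamma_{1}}(\sigma))$ yields the lower bound $(1-\eta_{\mc A_{\wt\gamma_{1}}})\, D(\mc A_{\gamma_{1}}(\rho)\|\mc A_{\gamma_{1}}(\sigma))$ on the left bracket. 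For the right bracket I drop the subtracted (nonnegative) term and control the remainder from above via the relative expansion coefficient \eqref{def:rel-expan},
\[
D(\mc A_{1-\gamma_{2}}(\rho)\|\mc A_{1-\gamma_{2}}(\sigma)) \;\leq\; \widecheck{\eta}_{\mc A_{\gamma_{1}},\,\mc A_{1-\gamma_{2}}}^{\,-1}\, D(\mc A_{\gamma_{1}}(\rho)\|\mc A_{\gamma_{1}}(\sigma)).
\]
Dividing through by the common factor $D(\mc A_{\gamma_{1}}(\rho)\|\mc A_{\gamma_{1}}(\sigma))$ (the degenerate case where this vanishes is trivial) reduces the less noisy condition to $p(1-\eta_{\mc A_{\wt\gamma_{1}}})\,\widecheck{\eta}_{\mc A_{\gamma_{1}},\,\mc A_{1-\gamma_{2}}} \geq 1-p$, which rearranges exactly to the threshold $p \geq 1/[1+\widecheck{\eta}_{\mc A_{\gamma_{1}},\,\mc A_{1-\gamma_{2}}}(1-\eta_{\mc A_{\wt\gamma_{1}}})]$ in the first bullet. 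The second bullet follows by the symmetric argument with $\gamma_{1}\leftrightarrow \gamma_{2}$ and $p\leftrightarrow 1-p$.

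The hard part will be ensuring that both quantities entering the threshold are actually nontrivial: one needs $\widecheck{\eta}_{\mc A_{\gamma_{1}},\,\mc A_{1-\gamma_{2}}}>0$ together with $\eta_{\mc A_{\wt\gamma_{1}}}<1$, since otherwise the stated interval for $p$ collapses to $\{1\}$. These positivity and strict contraction bounds are precisely the substantive outputs of the amplitude damping analysis in Section~\ref{sec: examples}, so the theorem's content effectively depends on importing those estimates. A minor technical point is that \eqref{def:rel-expan} is defined under the support restriction $\supp(\rho) \subseteq \supp(\sigma)$; promoting the resulting less noisy inequality to arbitrary states is handled by a routine continuity argument approximating $\sigma$ by full-rank states, using lower semicontinuity of relative entropy.
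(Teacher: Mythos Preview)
Your proof is correct and takes essentially the same approach as the paper's Proposition~\ref{main:less noisy}: both exploit the flagged structure of $\Psi$ and $\Psi^{c}$ to split the less-noisy inequality, then bound it via the contraction coefficient $\eta_{\mc A_{\wt\gamma_1}}$ of the degrading map together with the relative expansion coefficient $\widecheck\eta_{\mc A_{\gamma_1},\mc A_{1-\gamma_2}}$, after dropping the nonnegative $\mc A_{\gamma_2}$-term. The only cosmetic difference is that the paper phrases the argument in terms of Holevo mutual information $I(\mc X;B_i),\,I(\mc X;E_i)$ and invokes \cite[Proposition~2.3]{HRS22} to pass to relative-entropy ratios, whereas you apply that equivalence at the outset and work directly with relative entropies.
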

A concrete example of a less noisy but not degradable channel is $\Psi_{p,\gamma_1,\gamma_2}$ for parameters $p=0.75$, $\gamma_1=0.2$ and $\gamma_2=0.81$. In fact, we obtain a whole parameter region, which we illustrate in Figure~\ref{fig:rainbow}. 

Our motivation stems from a central problem in quantum information theory: determining the capacities of various quantum channels. While capacities are of fundamental importance, their computation is notoriously intractable because it often requires infinite regularization, or tensorization~\cite{Lloyd_1997, shor2002, devetak2005private, cubitt2015unbounded}. Degradable channels were introduced in \cite{devetak2005capacity} as the first class of channels whose quantum capacity does not require regularization, making them computable via optimization. Since then, it has been shown that even weaker conditions than degradability can preserve additivity, allowing capacities to be computed through optimization \cite{Watanabe12, Cross_2017, SW24}.

Recently, a hierarchy of ``less noisy'' channel classes was introduced in \cite{HRS22}. However, establishing clear separations between these classes remains an open problem. In fact, no known example demonstrates a channel that is ``less noisy'' but fails to be degradable, even under the weakest notion of ``less noisy''.

We address this problem by providing the first explicit construction of a less noisy but non-degradable channel that can be rigorously verified. Since the class of less noisy channels coincides with those having concave coherent information, our construction also confirms the existence of non-degradable channels with concave coherent information. The tools we introduce also provide a potential way to show the existence of a non-degradable channel that is informationally degradable as introduced in \cite{Cross_2017}. This is another characteristic of a quantum channel that lies in between less noisy and degradable and implies additivity of capacity. 
A key insight is that tensorizing a channel with the identity on an ancillary system performs better than repeatedly tensoring the channel with itself, and this perspective underlies our proposed extensions. 

As a summary, we highlight our main results in the following three aspects: 
\begin{enumerate}
    \item \textbf{Expansion coefficients}. We give a systematic study of the relative entropy expansion coefficient corresponding to a reverse data processing inequality. The contraction coefficient, often called strong data processing constant, has been well-studied in both classical and quantum setting over decades. The expansion coefficient, in contrast, has not been much considered in the literature. We fill this gap by giving the first systematic treatment of the expansion coefficient. 
    \item \textbf{No-go Theorem~\ref{main:impossibility}}. Under the assumption that the input dimension of the channel is not less than the output dimension, we prove the expansion coefficient with respect to relative entropy must be zero. This leads us to study the more meaningful relative expansion coefficients, and obtain non-trivial relative expansion coefficients between  depolarizing, dephasing, and amplitude-damping channels. This is in sharp contrast to the expansion coefficient with respect to the trace distance~\cite{chiribella2025maximum}.
    \item \textbf{Less-noisy but non-degradable channels}. We give a concrete and rigorous construction of less noisy but non-degradable channel using the relative expansion coefficient. To our knowledge, this is the first concrete and rigorous example of such a separation, filling a gap in the literature.
\end{enumerate}

\medskip


The rest of this manuscript is organized as follows. In Section \ref{sec: preliminaries}, we briefly review necessary preliminaries on quantum channels and degradability. In Section \ref{sec: impossibility of reverse data processing}, we prove that if the dimension of the input system of a quantum channel is not less than that of the output system, then the expansion coefficient is zero. Section \ref{sec: relative contraction and expansion} presents systematic tools for analyzing the relative expansion coefficients of two quantum channels. Section \ref{sec: examples} is devoted to explicit estimates of relative contraction and expansion coefficients of three important classes of quantum channels. We then present the construction of non-degradable channels that are less noisy in Section \ref{sec: applications}. Section \ref{sec:conclusion and open problems} concludes the manuscript with a discussion on open problems.

\section{Preliminaries}\label{sec: preliminaries}
\subsection{Quantum channel and its representation}
In this work, we denote $\mc H$ as a Hilbert space of finite dimension, and $\mc H^{\dagger}$ as the dual space of $\mc H$. $\ket{\psi}$ denotes a vector in $\mc H$ and $\bra{\psi} \in \mc H^{\dagger}$ a dual vector. For two Hilbert spaces $\mc H_A, \mc H_B$, the space of linear operators from $\mc H_A$ to $\mc H_B$ is denoted as $\mb B(\mc H_A, \mc H_B) \cong \mc H_B \otimes \mc H_A^{\dagger}$. When $\mc H_A = \mc H_B = \mc H$, we write $\mb B(\mc H, \mc H)$ shortly as $\mb B(\mc H)$. The set of density operators (positive semidefinite with trace one) on $\mc H$ is denoted as $\mc D(\mc H)$. The set of pure states (rank 1 projections) on $\mc H$ is denoted as $\mc P(\mc H)$. Denote $\mc L(\mb B(\mc H_A), \mb B(\mc H_B))$ as the class of super-operators which consists of linear maps from $\mb B(\mc H_A)$ to $\mb B(\mc H_B)$. A quantum channel $\mc N\in \mc L(\mb B(\mc H_A), \mb B(\mc H_B))$ is a super-operator which is completely positive and trace-preserving (CPTP). 

Let $\mc H_A,\mc H_B,\mc H_E$ be three Hilbert spaces of dimensions $d_A,d_B,d_E$ respectively.
 An isometry $V: \mc H_A \to \mc H_B \otimes \mc H_E$, meaning $V^{\dagger}V = I_{A}$ (identity operator on $\mc H_A$), generates a pair of quantum channels $(\mc N, \mc N^c)$, defined by 
\begin{equation}
    \mc N(\rho) = \tr_{E}(V \rho V^{\dag}), \quad \mc N^c(\rho) = \tr_{B}(V \rho V^{\dag}),
\end{equation}
where $\tr_E$ is the partial trace operator given by $\tr_E(X_B \otimes X_E) = \tr(X_E) X_B$. It is known from Stinespring theorem that every quantum channel $\mc N$ can be expressed as above, and 
the pair $(\mc N, \mc N^c)$ is called the \textit{complementary channel} of the other. 

The operator-sum representation of a quantum channel is called \textit{Kraus representation}:
\begin{equation}\label{eqn:Kraus representation}
    \mc N(X) = \sum_{i=1}^m A_i X A_i^{\dagger}, \quad X \in \mb B(\mc H_A),
\end{equation}
where $A_i \in \mb B(\mc H_A, \mc H_B)$ are called Kraus operators of $\mc N$.
Another representation of a super-operator in $\mc L(\mb B(\mc H_A), \mb B(\mc H_B))$ is its Choi–Jamio\l{}kowski operator. Given an orthonormal basis $\{\ket{i}\}_{i=0}^{d_A -1}$  of $\mc H_A$, a maximally entangled state on $\mc H_A \otimes \mc H_A$ is given by 
\begin{align*}
    \ket{\Phi} = \frac{1}{\sqrt{d_A}} \sum_{i=0}^{d_A -1} \ket{i} \otimes \ket{i}.
\end{align*}
The \textit{(unnormalized) Choi–Jamio\l{}kowski} operator of $\mc N \in \mc L(\mb B(\mc H_A), \mb B(\mc H_B))$ is a bipartite operator in $\mb B(\mc H_A \otimes \mc H_B)$ given by
\begin{equation}\label{eqn:Choi operator}
    \mc C_{\mc N} = d_A (\id_{\mb B(\mc H_A)} \otimes \mc N) (\ketbra{\Phi}{\Phi})= \sum_{i,j = 0}^{d_A - 1} \ketbra{i}{j} \otimes \mc N(\ketbra{i}{j}).
\end{equation}
A quantum channel $\mc N$ is completely positive if and only if its Choi–Jamio\l{}kowski operator $\mc C_{\mc N}$ is a positive operator in $B(\mc H_A \otimes \mc H_B)$, and $\mc N$ is trace-preserving if and only if $\tr_B(\mc C_{\mc N}) = I_A$. The rank of $\mc C_{\mc N}$ is called the \textit{Kraus rank} of the channel $\mc N$, which indicates the minimum number of Kraus operators to represent $\mc N$ in \eqref{eqn:Kraus representation}.

For two completely positive superoperator $\mc M$ and $\mc N$, we say $\mc N \le_{cp} \mc M $ if $\mc M-\mc N$ is completely positive. This is equivalent to
\begin{equation}\label{eqn: cp order}
    \mc C_{\mc N}\le \mc C_{\mc M} ,
\end{equation}
where $\mc C_{\mc N}\le \mc C_{\mc M}$ means $\mc C_{\mc M} - \mc C_{\mc N}$ is positive semidefinite.

\subsection{Degradable and less noisy channels}Let $\mc N$ be a quantum channel and $\mc N^c$ be its complementary channel.
We say that $\mc N$ is \textit{degradable} if there is a quantum channel $\mc D$ such that $\mc D \circ\mc N = \mc N^c$. That is, one can process the output system to get all the information about the environment system. Similarly, if there exists a quantum channel $\wt{\mc D}$ such that $\wt{\mc D} \circ\mc N^c = \mc N$, then we say that $\mc N$ is \textit{anti-degradable}. 

Given any additional quantum system $\mc H_V$ and a bipartite density operator $\rho_{VA}$ on $V\otimes A$, denote 
\begin{align*}
    \rho_{VB} = (\id_{\mb B(\mc H_V)} \otimes \mc N) (\rho_{VA}),\quad \rho_{VE} = (\id_{\mb B(\mc H_V)} \otimes \mc N^c) (\rho_{VA}).
\end{align*}
We say $\mc N$ is \textit{informationally degradable} (introduced in \cite{Cross_2017}) if for any quantum system $V$ and bipartite density operator $\rho_{VA}$, we have 
\begin{align*}
    I(V;B)_{(\id_{\mb B(\mc H_V)}\otimes \mc N) (\rho_{VA})} \ge I(V;E)_{(\id_{\mb B(\mc H_V)} \otimes \mc N^c)(\rho_{VA})} ,
\end{align*}
where $I(V;B) = S(V) + S(B) - S(VB)$ is the quantum mutual information and $S(V)=-\tr(\rho_V\log(\rho_V))$ denotes the von Neumann entropy of reduced density $\rho_V$ (and similarly defined for other terms). We say $\mc N$ is \textit{less noisy}, if for any classical-quantum state $\rho_{\mc X A} = \sum_{x \in \mc X} p_x \ketbra{x}{x} \otimes \rho_A^x$, we have 
\begin{align*}
    I(\mc X;B)_{(\id_{\mc X}\otimes \mc N) (\rho_{VA})} \ge I(\mc X;E)_{(\id_{\mc X}\otimes \mc N) (\rho_{VA})}. 
\end{align*}
In the following, we will often write $I(V;B)$
when the underlying state is clear from the context.

Note that there exist two different notions of less noisy quantum channels in the literature. One, which we are exclusively using in this work and in the definition above, refers to the classical-quantum mutual information with respect to a single application of the channel $\mc N$ and $\mc N^c$ \cite{TBR20,HRS22} (which can also be called \textit{level-1 less noisy}). This notion characterizes the class of quantum channels with concave coherent information. In fact, for a quantum channel $\mc N$ and an input state $\rho_A$ with purification $\ket{\psi}_{A'A}$, the coherent information is defined as the coherent information of the bipartite state $\rho_{A'B} = (\id_{A'} \otimes \mc N) (\ketbra{\psi}{\psi}_{A'A})$:
\begin{equation}\label{def:coherent information}
    I_c(\mc N,\rho_A):= I(A'\rangle B)_{\rho_{A'B}} = S(B) - S(A'B) = S(B) - S(E).
\end{equation}
Then, concavity of this quantity means for any ensemble of states $\{p_x,\rho_A^x\}_{x \in \mc X}$, we have 
\begin{align*}
    I_c(\mc N,\sum_xp_x\rho_A^x) \ge \sum_x p_x I_c(\mc N,\rho_A^x),
\end{align*}
which is equivalent to $I(\mc X;B) \ge I(\mc X;E)$. 

Another notion refers to a regularized version for many copies of $\mc N$ introduced in \cite{Watanabe12} (sometimes called \textit{regularized less noisy}), which implies that the private information and coherent information are weakly additive for this channel.

It is clear that informational degradablity implies less noisy by restricting the general bipartite density operators $\rho_{VA}$ to be a classical-quantum state. Moreover, via data processing inequality, degradablity implies informational degradablility. By this reasoning, any channel that is degradable is also less noisy. However, to the best of our knowledge, it was an open question whether there exists a level-1 less noisy quantum channel that is not degradable\footnote{It remains an open question whether there exists a regularized less noisy channel which is not degradable.}, which we resolve in this work. To this end, we propose a framework in Section \ref{sec: applications} for constructing such examples and give an explicit example in terms of amplitude damping channels. This framework may further be used to construct examples of non-degradable channels that are informationally degradable. 

\section{Impossibility of a reverse data processing inequality for non-unitary channels}
\label{sec: impossibility of reverse data processing}
In this section, we show that a reverse data processing inequality cannot hold for non-unitary channels $\mc N:\mb B(\mc H_A)\to \mb B(\mc H_B)$ with dimensions $d_A\geq d_B$. More precisely, we show that expansion coefficient $\widecheck{\eta}_{\mc N}$ in this setting generically equals to zero.
\medskip
\begin{theorem}\label{main:impossibility}
    Let $\mc N\in \mc L(\mb B(\mc H_A), \mb B(\mc H_B))$ be a quantum channel such that $d_A \ge d_B$. Then,  
    \begin{equation}
    \widecheck{\eta}_{\mc N}=
    \begin{cases}
        1,\quad d_A =d_B \text{ and } \ \mc N(\rho) = U\rho U^{\dagger} \text{ for some unitary U}, \\
        0,\quad \text{otherwise}.
    \end{cases}
    \end{equation}
\end{theorem}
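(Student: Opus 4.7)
The easy direction $\widecheck\eta_{\mc N}=1$ for unitary $\mc N=U\cdot U^\dagger$ follows from unitary invariance of relative entropy, so the substantive claim is that every non-unitary $\mc N$ with $d_A\ge d_B$ has $\widecheck\eta_{\mc N}=0$. I would separate this into two subcases based on whether $\mc N$ is injective as a linear map on $\mb B(\mc H_A)$.

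If $\mc N$ fails to be injective -- which is automatic whenever $d_A>d_B$ by dimension counting, since $\dim\mb B(\mc H_A)>\dim\mb B(\mc H_B)$ -- then pick a nonzero traceless Hermitian $H\in\ker\mc N$ (tracelessness follows from trace preservation, and the Hermitian reduction uses that $\mc N$ respects the Hermitian/anti-Hermitian decomposition). For small $\epsilon>0$, the states $\rho_\epsilon=I/d_A+\epsilon H$ and $\sigma_\epsilon=I/d_A-\epsilon H$ are distinct full-rank density matrices with $\mc N(\rho_\epsilon)=\mc N(\sigma_\epsilon)$, so $D(\mc N(\rho_\epsilon)\|\mc N(\sigma_\epsilon))/D(\rho_\epsilon\|\sigma_\epsilon)=0$, giving $\widecheck\eta_{\mc N}=0$ immediately.

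The remaining case is $d_A=d_B$ with $\mc N$ injective (hence bijective on $\mb B(\mc H_A)$) but not unitary. Injectivity forces $\mc N(I/d_A)$ to have full rank (otherwise some $|v\rangle\langle v|$ would be Hilbert--Schmidt orthogonal to the image, contradicting bijectivity onto a space of the same dimension). The strategy is to push $\sigma$ toward the boundary of $\mc D(\mc H_A)$ while keeping $\mc N(\sigma)$ well inside $\mc D(\mc H_B)$. Specifically, I would first exhibit a codimension-one subspace $S\subsetneq\mc H_A$ for which $\mc N(P_S)$ is still full rank; this is achieved for a generic excluded direction $|e\rangle\in\mc H_A$ via a Cauchy--Schwarz argument on the Kraus decomposition applied to $\mc N(P_S)=\mc N(I)-\mc N(|e\rangle\langle e|)$. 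With such $S$ fixed, form
\begin{equation*}
\sigma_\mu=(1-\mu)\frac{P_S}{d_A-1}+\mu\frac{I}{d_A},\qquad \mu\downarrow 0,
\end{equation*}
which is full rank with a single eigenvalue of order $\mu$ along $S^\perp$, while $\mc N(\sigma_\mu)$ has spectrum bounded below by a $\mu$-independent constant. Invoking the BKM (Kubo--Mori) Hessian representation of relative entropy near $\sigma_\mu$ gives, for any traceless Hermitian $X$,
\begin{equation*}
\lim_{\epsilon\to 0^+}\frac{D(\mc N(\sigma_\mu+\epsilon X)\|\mc N(\sigma_\mu))}{D(\sigma_\mu+\epsilon X\|\sigma_\mu)}=\frac{\langle \mc N(X),\mc N(X)\rangle^{\mathrm{BKM}}_{\mc N(\sigma_\mu)}}{\langle X,X\rangle^{\mathrm{BKM}}_{\sigma_\mu}}.
\end{equation*}
Choosing $X$ with a nonzero matrix element coupling $S$ and $S^\perp$, the explicit formula $\langle X,X\rangle^{\mathrm{BKM}}_\sigma=\sum_{i,j}|X_{ij}|^2(\log\lambda_i-\log\lambda_j)/(\lambda_i-\lambda_j)$ in the eigenbasis makes the denominator grow like $\log(1/\mu)$, while the numerator stays bounded by the uniform lower bound on the spectrum of $\mc N(\sigma_\mu)$. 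Hence the ratio tends to $0$ as $\mu\downarrow 0$, and a diagonal extraction $(\mu_n,\epsilon_n)\to 0$ produces state pairs witnessing $\widecheck\eta_{\mc N}=0$.

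The hardest step will be the injective non-unitary subcase, where one must (i) argue existence of the codimension-one $S$ with full-rank image $\mc N(P_S)$, using injectivity of $\mc N$ in a subtle algebraic way, and (ii) carefully track the logarithmic blow-up of the BKM norm along the near-null direction of $\sigma_\mu$ while checking that the image BKM norm stays uniformly bounded. The unitary and non-injective cases are essentially free once the setup is laid out.
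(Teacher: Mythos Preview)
Your route is genuinely different from the paper's, and the non-injective branch is a clean shortcut: for $d_A>d_B$ (or any non-injective $\mc N$) you immediately produce two distinct full-rank states with identical images, giving ratio $0$ without further work. The paper does not make this split and instead runs a single argument for all non-unitary channels.

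The substantive gap is your step (i) in the injective $d_A=d_B$ subcase. You assert that a codimension-one $S$ with $\mc N(P_S)$ full rank exists ``via a Cauchy--Schwarz argument on the Kraus decomposition,'' but unpacking that condition shows $\mc N(I-|e\rangle\langle e|)$ is singular iff there is $|v\rangle$ with $A_i^\dagger|v\rangle\in\mathbb C|e\rangle$ for all Kraus operators $A_i$, i.e.\ iff $\mc N^\dagger(|v\rangle\langle v|)$ is rank one and proportional to $|e\rangle\langle e|$. Ruling this out for \emph{some} $|e\rangle$ when the Kraus rank exceeds one is exactly the ``all $A_i^\dagger|v\rangle$ parallel forces $A_i$ proportional'' dichotomy, which is the heart of the paper's Lemma~\ref{pure state argument} on purity-preserving channels. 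In particular for $d_A=d_B=2$ your claim is literally ``$\mc N$ sends some pure state to a mixed state,'' and that requires the full lemma; your Cauchy--Schwarz hint only identifies \emph{when} singularity occurs, not that it fails generically. So you will end up re-proving Lemma~\ref{pure state argument} to close (i). Your BKM step (ii) is fine once (i) is granted.

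The paper's argument is organized differently and is lighter after the lemma. Having one pure input with mixed output, it runs a pigeonhole on the increasing chain $\operatorname{supp}\mc N(P_1)\subseteq\cdots\subseteq\operatorname{supp}\mc N(P_{d_A})$ to find \emph{some} $k_0<d_A$ with $\operatorname{supp}\mc N(P_{k_0})=\operatorname{supp}\mc N(P_{k_0+1})$; no full-rank image is needed. Then with $\rho=P_{k_0}/k_0$ and $\sigma_\varepsilon=(1-\varepsilon)\rho+\varepsilon|\psi\rangle\langle\psi|$ one has $D(\rho\|\sigma_\varepsilon)=-\log(1-\varepsilon)$ (first order in $\varepsilon$, since $\operatorname{supp}\rho\subsetneq\operatorname{supp}\sigma_\varepsilon$), whereas the support inclusion on the output side makes $\varepsilon\mapsto D(\mc N(\rho)\|\mc N(\sigma_\varepsilon))$ a two-sided minimum at $\varepsilon=0$, hence $O(\varepsilon^2)$. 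A single L'H\^opital gives ratio $\to 0$, avoiding your two-parameter $(\mu,\epsilon)$ diagonal extraction and the $\log(1/\mu)$ BKM blow-up bookkeeping. In short: your injectivity split buys a free case, but the hard case still funnels through the purity-preserving lemma, and the paper's first-order/second-order device is a cleaner endgame than the BKM route.
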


The same conclusion does not holds for channels with strictly greater output dimension than input dimension.  For example, the erasure channel with erasure probability $\nu \in [0,1)$ 
\[\mc N(\rho)= (1-\nu)\rho+ \nu \ketbra{e}{e}\]
is a simple counterexample with $\widecheck{\eta}_{\mc N}=1-\nu>0$. 

The key ingredient to prove the above theorem is the following lemma about purity-preserving quantum channels. 
The proof can be found in \cite[Theorem 3.1]{davies1976quantum}. For the convenience of the reader, we present an independent proof below.
\medskip
\begin{lemma}\label{pure state argument}
   If a quantum channel $\mc N\in \mc L(\mb B(\mc H_A), \mb B(\mc H_B))$ preserves the purity, i.e., it maps any pure state to a pure state, then $\mc N$ must either be an isometric embedding $\mc N(\rho) = V\rho V^{\dagger}, \ V^{\dagger} V = I_A$ or a replacer channel $\mc N(\rho) = \tr(\rho) \ketbra{\varphi}{\varphi}$ for some pure state $\ket{\varphi}$.
\end{lemma}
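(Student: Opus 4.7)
The plan is to exploit the Kraus representation $\mc N(\rho) = \sum_i A_i \rho A_i^{\dagger}$ with $\sum_i A_i^{\dagger} A_i = I_A$. Applying the purity hypothesis to an arbitrary pure state $\ketbra{\psi}{\psi}$, the output $\sum_i \ketbra{A_i\psi}{A_i\psi}$ is a sum of rank-one positive operators that must itself be rank one; hence for every $\ket{\psi} \in \mc H_A$ the vectors $\{A_i\ket{\psi}\}_i$ lie in a common one-dimensional subspace of $\mc H_B$. Equivalently, for any pair of Kraus operators $A, B$, the vectors $A\ket{\psi}$ and $B\ket{\psi}$ are linearly dependent for every $\ket{\psi}$.

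The next step is a clean linear-algebraic dichotomy: if $A, B \in \mb B(\mc H_A, \mc H_B)$ satisfy that $A\ket{\psi}$ and $B\ket{\psi}$ are linearly dependent for all $\ket{\psi}$, then either (i) one of $A$, $B$ is a scalar multiple of the other, or (ii) both $A$ and $B$ have rank at most one with image in a common one-dimensional subspace. I would prove this by viewing $\psi \mapsto (A\psi, B\psi)$ as a linear map from $\mc H_A$ into the variety of rank-$\le 1$ matrices in $\mathrm{Mat}_{2\times d_B}(\mb C)$; examining the sum of two rank-one matrices shows that a linear subspace of this variety must either have constant column-direction (case (i)) or constant row-direction (case (ii)).

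With the dichotomy in hand, two situations can occur. Either every $A_i$ is a scalar multiple $c_i A_1$ of a fixed Kraus operator, in which case $\mc N(\rho) = \bigl(\sum_i |c_i|^2\bigr) A_1 \rho A_1^{\dagger}$ and trace preservation turns $V := \bigl(\sum_i |c_i|^2\bigr)^{1/2} A_1$ into an isometry, yielding $\mc N(\rho) = V\rho V^{\dagger}$ with $V^{\dagger} V = I_A$. Otherwise, some $A_j$ is not proportional to $A_1$, so $A_1$ and $A_j$ both have rank one with common image direction $\ket{\phi}$; reapplying the dichotomy to each pair $(A_1, A_k)$ then forces every Kraus operator to have image in $\mathrm{span}(\ket{\phi})$, i.e.\ $A_k = \ket{\phi}\bra{\chi_k}$. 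In that case $\mc N(\rho) = \bigl(\sum_k \bra{\chi_k}\rho\ket{\chi_k}\bigr) \ketbra{\phi}{\phi}$, and trace preservation forces $\sum_k \ket{\chi_k}\bra{\chi_k} = I_A$, giving $\mc N(\rho) = \tr(\rho)\ketbra{\phi}{\phi}$.

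The main obstacle I anticipate is the dichotomy itself: a naive argument of the form ``write $A\psi = \lambda(\psi) B\psi$'' runs into trouble because $\lambda(\psi)$ may be undefined on the hyperplane where $B\psi$ vanishes, and so proving $\lambda$ is globally constant requires care. The rank-one matrix framing above sidesteps this by treating $A$ and $B$ symmetrically and reducing the claim to a classical fact about linear subspaces of a determinantal variety.
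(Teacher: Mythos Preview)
Your proposal is correct and follows essentially the same route as the paper: both reduce to the fact that if $A\ket{\psi}$ and $B\ket{\psi}$ are linearly dependent for every $\ket{\psi}$, then either $A$ and $B$ are proportional or both have rank at most one with a common image. The paper derives this ad hoc after fixing a \emph{minimal} (linearly independent) Kraus representation and arguing by contradiction that $\mathrm{rank}\,E_1\ge 2$ would force $E_1\propto E_2$; your determinantal-variety packaging of the same dichotomy is a bit cleaner and spares you the minimality assumption, but the underlying linear algebra is identical.
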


\begin{proof}[Proof of Lemma~\ref{pure state argument}]
Let $\{E_i\}_{i=1}^k$ denote the set of linearly independent Kraus operators with $\sum_{i=1}^k E_i^{\dagger} E_i = I$ that form the minimal Kraus representation of the channel $\mc N\in \mc L(\mb B(\mc H_A), \mb B(\mc H_B))$, i.e. \[\mc N(\rho) = \sum_{i=1}^k E_i \rho E_i^{\dagger}\ \forall\rho\in B(\mc H_A)\ .\]
If $k=1$, $E_1^\dagger E_1=I$ is an isometry which can only happen if $\text{dim} \mc H_A \le \text{dim} \mc H_B$.
We now show that, if $k>1$ and $\mc N$ is a quantum channel that preserves purity, $\mc N$ must be a replacer channel. 

Suppose that $k>1$ and that $\mc N$ is a purity-preserving quantum channel. In this case, we claim that $\forall 1\le i \le k$, the codimension of $\text{Ker} E_i = \{x \in \mc H_A: E_i x = 0\}$ must be 1, i.e., 
\begin{align*}
    H_i := \text{Ker} E_i,\quad \text{dim} H_i^{\perp} = 1.
\end{align*}
We only show the case $i=1$ because the same argument applies to other $E_i$. We argue by contradiction. Suppose $\text{dim} H_1^{\perp} >1$. Then one can show there exists $0\neq \mu_0 \in \mb C$ such that 
\begin{align}\label{lemma:uniform constant orthogonal}
    E_1\big|_{H_1^{\perp}} = \mu_0 E_2\big|_{H_1^{\perp}}.
\end{align}
In fact, for any orthogonal pure states $\ket{\varphi_1}, \ket{\varphi_2}$ in $H_1^{\perp}$, because $\mc N$ must map $\ket{\varphi_1}$, $\ket{\varphi_2}$ as well as their linear combination to pure states, there exist complex constants $c_1,c_2,c_3$ such that 
\begin{align*}
    & E_2 \ket{\varphi_1} = c_1  E_1 \ket{\varphi_1}, \\
    & E_2 \ket{\varphi_2} = c_2  E_1 \ket{\varphi_2}, \\
    & E_2 (\ket{\varphi_1}+\ket{\varphi_2}) = c_3  E_1 (\ket{\varphi_1}+\ket{\varphi_2}). 
\end{align*}
Since $E_1 \ket{\varphi_1}$ and $E_1 \ket{\varphi_2}$ are linearly independent, we must have $c_1 = c_2 = c_3$. As this holds for arbitrary two vectors $\ket{\varphi_1}, \ket{\varphi_2}$ in $H_1^{\perp}$, \eqref{lemma:uniform constant orthogonal} holds. Next, we show that we also have \begin{align}\label{lemma:uniform constant kernel}
    E_1\big|_{H_1} = \mu_0 E_2\big|_{H_1} = 0.
\end{align}
In fact, for any $\ket{\psi} \in H_1$ and any orthogonal pure states $\ket{\varphi_1}, \ket{\varphi_2}$ in $H_1^{\perp}$, there exist non-zero complex constants $c_1', c_2'$ such that
\begin{align*}
    & E_1\ket{\varphi_1} = E_1 (\ket{\psi}+\ket{\varphi_1}) = c_1'  E_2(\ket{\psi}+\ket{\varphi_1}) = c_1'  E_2\ket{\psi}+ c_1'\mu_0  E_1 \ket{\varphi_1}, \\
    & E_1\ket{\varphi_2} = E_1 (\ket{\psi}+\ket{\varphi_2}) = c_2'  E_2(\ket{\psi}+\ket{\varphi_2}) = c_2'  E_2\ket{\psi}+ c_1'\mu_0  E_1 \ket{\varphi_2}, 
\end{align*}
which shows that $E_2\ket{\psi}$ is parallel to $E_1\ket{\varphi_1}$ and $E_1\ket{\varphi_2}$ simultaneously thus $E_2\ket{\psi} = 0$. Therefore, \eqref{lemma:uniform constant orthogonal} and \eqref{lemma:uniform constant kernel} hold thus $E_1 = \mu_0 E_2$ which contradicts the fact that $E_1$ and $E_2$ are linearly independent. Therefore, for every $1\le i \le k$,  we must have $\text{dim} H_i^{\perp} = 1$, hence $E_i$ is rank 1 operator
\begin{align*}
    E_i = \ket{\varphi} \bra{\psi_j}
\end{align*}
and $\mc N(\rho) = \tr(\rho)\ketbra{\varphi}{\varphi}$ which concludes the proof.
\end{proof}

\noindent We are now in a position to prove Theorem~\ref{main:impossibility}.

\begin{proof}[Proof of Theorem~\ref{main:impossibility}]
If $\mc N(\rho)=\tr(\rho) \ketbra{\varphi}{\varphi}$ is a replacer channel, $\widecheck{\eta}_{\mc N}=0$ because the numerator $D(\mc N(\rho)||\mc N(\sigma))=0$ is always zero.
For unitary channel $\mc N(\rho) = U\rho U^{\dagger}$, the expansion coefficient $\widecheck{\eta}_{\mc N}$  equals to $1$ due to the unitary invariance of the relative entropy. By the above Lemma~\ref{pure state argument}, it suffices to consider channels that are not purity-preserving. 

In this case, we claim that one can find a projection $P_A$ onto a subspace of $\mc H_A$ with dimension $\dim(P_A(\mc H_A))\le d_A - 1$ and a pure state $\ket{\psi}\in P_A(\mc H_A)^\perp$ such that 
\begin{align*}
        \text{supp} \left(\mc N(P_A)\right) = \text{supp} \left(\mc N(P_A + \ketbra{\psi}{\psi}) \right).
\end{align*}
In order to construct $P_A$, we begin by choosing a pure state $\ket{\varphi_1}$ such that $\mc N(\ketbra{\varphi_1}{\varphi_1})$ is a mixed state, which is possible as $\mc N$ is not purity-preserving. Then, we extend 
$\ket{\varphi_1}$ to an orthonormal basis $\{\ket{\varphi_i}\}_{1 \le i \le d_A}$ and get a family of projections 
$$P_k = \sum_{i=1}^k \ketbra{\varphi_i}{\varphi_i},\quad 1\le k \le d_A.$$
The support of $\mc N(P_k)$ is a chain of subspaces of $\mc H_B$ that fulfills
\begin{align*}
        \text{supp} \left(\mc N(P_1)\right) \subseteq \text{supp} \left(\mc N(P_2)\right) \subseteq \cdots \subseteq \text{supp} \left(\mc N(P_k)\right) \subseteq \cdots  \subseteq \text{supp} \left(\mc N(P_{d})\right).
\end{align*}
Recall that, by assumption, $\mc N(P_1) = \mc N(\ketbra{\varphi_1}{\varphi_1})$ is a mixed state, and thus the dimension of $\dim \text{supp} \left(\mc N(P_1)\right)\ge 2$. Thus 
\begin{align*}
2 \le \text{dim}\left(\text{supp} \left(\mc N(P_1)\right)\right)  \le \cdots \le \text{dim}\left(\text{supp} \left(\mc N(P_k)\right)\right) \le \cdots \text{dim}\left(\text{supp} \left(\mc N(P_{d})\right)\right) \le \text{dim} \mc H_B \le d_A.
\end{align*}
Since there are $d_A$ many subspaces and the dimension can take at most $d_A-1$ values, there exists $k_0<d_A$ such that 
\begin{align*}
    \text{dim}\left(\text{supp} \left(\mc N(P_{k_0})\right)\right) = \text{dim}\left(\text{supp} \left(\mc N(P_{k_0 + 1})\right)\right)
\end{align*}
thus $\text{supp} \left(\mc N(P_{k_0})\right) = \text{supp} \left(\mc N(P_{k_0+1})\right)$. Then, the claim is verified by choosing
\begin{align*}
    P_A = P_{k_0}, \quad \ket{\psi} = \ket{\varphi_{k_0+1}}.
\end{align*}
Now, we use this construction to show $\widecheck{\eta}_{\mc N}=0$. Denote 
    \begin{align*}
        \rho = \frac{1}{k_0}P_A,\quad \sigma_{\varepsilon} = (1-\varepsilon)\rho + \varepsilon \ketbra{\psi}{\psi}.
    \end{align*}
    By direct calculation, 
    \begin{align*}
        & D(\rho \|\sigma_{\varepsilon}) = -\log(1-\varepsilon), \\
        & \frac{\dd}{\dd\varepsilon}D(\mc N(\rho) \|\mc N(\sigma_{\varepsilon}))\big|_{\varepsilon = 0} = 0.
    \end{align*}
   The second equation follows from the fact that $\text{supp} \left(\mc N(\ketbra{\psi}{\psi})\right) \subseteq \text{supp} \left(\mc N(P_A)\right)$, thus there exists $\varepsilon_0>0$ such that for any $\varepsilon \in (-\varepsilon_0,\varepsilon_0)$, $\mc N(\sigma_{\varepsilon})$ is a density operator. Therefore, the non-negative, differentiable function defined as $f(\varepsilon) = D(\mc N(\rho) \|\mc N(\sigma_{\varepsilon})) \ge 0$ achieves its minimum at $\varepsilon = 0$. Therefore, the derivative at $\varepsilon = 0$ is zero.
    
    Inserting the states $\rho$ and $\sigma_{\varepsilon}$ in the expansion coefficient and letting $\varepsilon$ go to zero, we have by L'H\^opital's rule \begin{align*}
        0& \le \inf_{\rho \neq \sigma} \frac{D(\mc N(\rho) || \mc N(\sigma))}{D(\rho || \sigma)} \le \lim_{\varepsilon \to 0} \frac{D(\mc N(\rho) \|\mc N(\sigma_{\varepsilon}))}{D(\rho \|\sigma_{\varepsilon})} \\& = \lim_{\varepsilon \to 0}\frac{\frac{\dd}{\dd\varepsilon} D(\mc N(\rho) \|  \mc N(\sigma_{\varepsilon}))}{\frac{\dd}{\dd\varepsilon} D(\rho \|\sigma_{\varepsilon})}  = \lim_{\varepsilon \to 0} (1-\varepsilon)\frac{\dd}{\dd\varepsilon} D(\mc N(\rho) \|  \mc N(\sigma_{\varepsilon}))  = 0.
    \end{align*}
\end{proof}
\begin{remark}{\rm
\added{In \cite{RISB21}, a positive expansion coefficient is proposed as a condition under which the proposed quantum version of the Blahut-Arimoto algorithm for computing quantum channel capacities converges exponentially fast. More precisely, they show exponential convergence for the computation of the Holevo quantity under the assumption in \cite[Eq.~31]{RISB21}, which is equivalent (by \cite[Eq.~45]{RISB21}) to a positive expansion coefficient for the channel. Since we show here that this assumption cannot hold for non-unitary channels with $d_A\geq d_B$, the proposed Blahut-Arimoto algorithm for the Holevo quantity for these channels thus does not fulfill the criteria for exponential convergence and is only proven to have polynomial convergence.}
}
\end{remark}
\medskip
\added{
\begin{remark}{\rm
Note that our result does not imply that $\inf_{\rho} \frac{D(\mc N(\rho) || \mc N(\sigma))}{D(\rho || \sigma)} = 0$ for any fixed state $\sigma$.
For example, fixing $\sigma=\frac{I_2}{2}$, for the qubit depolarizing channel $\mc D_p (X):= (1-p)X + \frac{p}{2} I_2$, using the calculation in Proposition \ref{prop: qubit depolarizing}, we can show that 
\begin{equation}
    \inf_{\rho} \frac{D(\mc D_p(\rho) || \mc D_p(\frac{I_2}{2}))}{D(\rho || \frac{I_2}{2})} > 0.
\end{equation}
}
\end{remark}
}

\section{Relative contraction and expansion for pairs of quantum channels}\label{sec: relative contraction and expansion}
\added{Motivated by the vanishing of expansion coefficient of a single channel $\mc N$ shown in Section \ref{sec: impossibility of reverse data processing}, in this section, we present three different methods for comparing the expansion and the contraction of the relative entropy for two channels $\mc N$ and $\mc M$.} Recall that
\begin{align}
 \widecheck{\eta}_{\mc N,\mc M} := \inf_{\rho \neq \sigma, \supp(\rho) \subseteq \supp(\sigma)} \frac{D(\mc N(\rho) \| \mc N(\sigma))}{D(\mc M(\rho) \| \mc M(\sigma))}\ , \  \ \   \eta_{\mc N,\mc M} := \sup_{\rho \neq \sigma,\supp(\rho) \subseteq \supp(\sigma)} \frac{D(\mc N(\rho) \| \mc N(\sigma))}{D(\mc M(\rho) \| \mc M(\sigma))}. \label{def:rel-expan1}
\end{align}
We introduce several techniques to bound or compute the relative coefficients in this context. The first technique leverages the equivalence between the relative expansion of the relative entropy and its infinitesimal counterpart, the Bogoliubov-Kubo-Mori (BKM) metric (see Lemma~\ref{lemma: comparison}). The second technique employs a completely positive (CP) order comparison of two channels, as established in Lemma~\ref{lemma: comparison dephasing}. Additionally, we conduct a detailed study of qubit channels using the Bloch vector representation.

These techniques are applied in Section~\ref{sec: examples}, where we provide examples of channel pairs with non-zero relative expansion coefficients.

\subsection{Comparison of BKM metric}
\label{sec:BKM}
Our starting point is the following integral representation of the relative entropy $D(\rho\|\sigma)$ from \cite[Lemma 2.2]{GR22}, which is also studied in \cite{LR99, AE11,HR15,GR22,WBCDT24}:
\begin{align}\label{eqn:integral representation L2}
    D(\rho\|\sigma) =  \int_0^1 \int_0^s g_{\rho_t}(\rho-\sigma) \dd t \dd s . 
\end{align}
where $\rho_t:=(1-t) \sigma+t \rho, t \in[0,1]$ and the BKM metric $g_\sigma(X)$ of an operator $X$ at density $\sigma$ is defined as
\[g_\sigma(X)= \begin{cases}
        \displaystyle \tr\left(\int_0^{\infty}X^\dagger (\sigma + rI)^{-1} X (\sigma +rI)^{-1} \dd r\right), \quad &\text{supp}(X) \subseteq \text{supp}(\sigma) \\
        \infty,\quad  &\text{else}.
    \end{cases}\]
In fact, define a function $f(t)= D\left(\rho_t \| \sigma\right), t \in[0,1]$. We have $f(0)=0, f(1)=D(\rho \| \sigma)$ and the derivatives
\begin{align}
f^{\prime}(t) & =\operatorname{tr}\left((\rho-\sigma) \ln \rho_t-(\rho-\sigma) \ln \sigma\right) 
\\ f^{\prime \prime}(t) & =\int_0^{\infty} \operatorname{tr}\left((\rho-\sigma) (\rho_t + rI)^{-1}(\rho-\sigma) (\rho_t + rI)^{-1}\right) \dd r=g_{\rho_t}(\rho-\sigma).\label{eq:2nd}
\end{align}

Since $f^{\prime}(0)=0$, and the integral representation \eqref{eqn:integral representation L2} follows from  \[\displaystyle D(\rho \| \sigma) =f(1)=\int_0^1\left(\int_0^s f^{\prime \prime}(t) \dd t\right) \dd t\dd s\ .\] 

Given any density operator $\sigma$ acting on $\mc H$, we also define the BKM operator  
\begin{align}\label{eq:BKMop} \mc J_{\sigma} (X) = \int_0^{\infty} (\sigma + rI)^{-1} X (\sigma + rI)^{-1} \dd r, \quad &\text{supp}(X) \subseteq \text{supp}(\sigma).\end{align}
It is clear that $g_\sigma(X)=\langle X, \mc J_{\sigma} (X)\rangle$ with respect to the trace inner product $\langle Y,X\rangle=\tr(Y^\dagger X)$.
The following lemma gives a criterion for the comparison between the relative entropies $D(\mc M(\rho)\|\mc M(\sigma))$ and $D(\mc N(\rho)\|\mc N(\sigma))$ via comparison of BKM metric $g_{\mc N(\sigma)}(\mc N(X)) $ and $g_{\mc M(\sigma)}(\mc M(X)) $. The equivalence between contraction coefficient of relative entropy and BKM metric when one of the channel is identity was previously studied in \cite{LR99}. In the following lemma, we answer the open question following Theorem 7.1 in \cite{HR15}, and extend it to the case where two arbitrary channels are compared.
\medskip
\begin{lemma}\label{lemma: criteria general}
    Let $\mc N\in \mc L(\mb B(\mc H_A), \mb B(\mc H_B))$ and $\mc M\in \mc L(\mb B(\mc H_A), \mb B(\mc H_B'))$ be two quantum channels. For any $c_1,c_2 >0$, the following two statement are equivalent:  
    \begin{enumerate}[(i)]
        \item \label{lemma:criteria1} For any density operators $\rho$ and $\sigma$, with $\supp(\rho) \subseteq \supp(\sigma)$,
        \begin{align*}
            c_1 D(\mc N(\rho)\|\mc N(\sigma)) \leq D(\mc M(\rho)\|\mc M(\sigma)) \leq c_2 D(\mc N(\rho)\|\mc N(\sigma)). 
        \end{align*}
        \item \label{lemma:criteria2}For any density operators $\sigma$ and traceless Hermitian operator $X$ with $\supp(X) \subseteq \supp(\sigma)$, 
    \begin{equation}\label{eqn: criteria general}
    \begin{aligned}
        c_1  g_{\mc N(\sigma)}(\mc N(X)) & \leq g_{\mc M(\sigma)}(\mc M(X))\leq c_2 g_{\mc N(\sigma)}(\mc N(X)).
    \end{aligned}
\end{equation}
    \end{enumerate}
\end{lemma}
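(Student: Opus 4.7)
The plan is to use the integral representation \eqref{eqn:integral representation L2} of the relative entropy, which writes $D(\rho\|\sigma)$ as a double integral of BKM quadratic forms $g_{\rho_t}(\rho-\sigma)$ along the linear interpolation $\rho_t = (1-t)\sigma + t\rho$. This identity converts a relative-entropy comparison directly into a pointwise BKM comparison, and, conversely, the BKM comparison is recovered from the relative-entropy comparison via a second-order Taylor expansion.

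For the implication (ii) $\Rightarrow$ (i), I would set $X := \rho - \sigma$, which is traceless, Hermitian, and supported within $\supp(\sigma)$. By linearity of $\mc N$, we have $\mc N(\rho_t) = (1-t)\mc N(\sigma) + t\mc N(\rho)$ and $\mc N(\rho)-\mc N(\sigma) = \mc N(X)$, so the representation \eqref{eqn:integral representation L2} specializes to
\[ D(\mc N(\rho)\|\mc N(\sigma)) = \int_0^1\!\!\int_0^s g_{\mc N(\rho_t)}(\mc N(X))\,\dd t\,\dd s, \]
together with the analogous identity for $\mc M$. Applying (ii) with $\sigma$ replaced by $\rho_t$ inside the double integral and then integrating in $t$ and $s$ yields (i).

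For the implication (i) $\Rightarrow$ (ii), fix $\sigma$ and a traceless Hermitian $X$ with $\supp(X)\subseteq\supp(\sigma)$. Since $\sigma$ is strictly positive on its support, $\rho_\epsilon := \sigma + \epsilon X$ is a valid density operator for all sufficiently small $|\epsilon|$. Combining $f(0) = f'(0) = 0$ with \eqref{eq:2nd}, I obtain the second-order expansion
\[ D(\rho_\epsilon\|\sigma) = \tfrac{1}{2}\epsilon^2 g_\sigma(X) + o(\epsilon^2) \quad\text{as }\epsilon\to 0, \]
and writing the analogous expansions for the pairs $(\mc N(\sigma),\mc N(X))$ and $(\mc M(\sigma),\mc M(X))$, then dividing the inequality chain (i) by $\epsilon^2$ and sending $\epsilon\to 0$, produces (ii).

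The main obstacle is the bookkeeping of support conditions needed to make all BKM quantities finite, since $g_\eta(Y)$ is only meaningful when $\supp(Y)\subseteq\supp(\eta)$. The assumption $\supp(\rho)\subseteq\supp(\sigma)$ is preserved by $\mc N$ and $\mc M$ (since such $\rho$ satisfies $\rho\le c\,\sigma$ on the support for some $c>0$, hence $\mc N(\rho)\le c\,\mc N(\sigma)$), and along the interpolation $\supp(\rho_t) = \supp(\sigma)$ for $t\in[0,1)$, which handles the forward direction almost everywhere in $t$. For the reverse direction, allowing $\epsilon$ of either sign and using the positivity of $\mc N(\sigma) + \epsilon\mc N(X)$ forces $\supp(\mc N(X))\subseteq\supp(\mc N(\sigma))$, and similarly for $\mc M$; this guarantees that all three BKM terms in (ii) are finite and the limiting argument is rigorous.
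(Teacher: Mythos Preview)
Your proposal is correct and follows essentially the same approach as the paper: the integral representation \eqref{eqn:integral representation L2} applied along the interpolation $\rho_t$ gives (ii) $\Rightarrow$ (i), and the second-order Taylor expansion of $D(\mc N(\sigma+\epsilon X)\|\mc N(\sigma))$ (using $f(0)=f'(0)=0$ and \eqref{eq:2nd}) gives (i) $\Rightarrow$ (ii). Your handling of the support conditions is slightly more explicit than the paper's, but the argument is the same.
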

\begin{proof}Take $\rho_t = (1-t)\sigma + t \rho=\sigma +tX$ and $X = \rho - \sigma$. For any $t\in (0,1)$, $\text{supp}(X)\subseteq \text{supp}(\rho_t)$.  The direction \ref{lemma:criteria2}$\implies$\ref{lemma:criteria1}  follows from the integral representation \eqref{eqn:integral representation L2}.

To prove \ref{lemma:criteria1}$\implies$\ref{lemma:criteria2}, switching the roles of $\mc N$ and $\mc M$, we only need to show that if $c_1 D(\mc N(\rho)\|\mc N(\sigma)) \leq D(\mc M(\rho)\|\mc M(\sigma))$ for any density operators $\rho,\sigma$ with $\supp(\rho) \subseteq \supp(\sigma)$, then we have $c_1  g_{\mc N(\sigma)}(\mc N(X)) \leq g_{\mc M(\sigma)}(\mc M(X))$ for any density operators $\sigma$ and traceless Hermitian operator $X$ with $\supp(X) \subseteq \supp(\sigma)$. In fact, define $\rho_t = \sigma + tX$, $\mc N(\rho_t)$ and $\mc M(\rho_t)$ are density operators for $t \in (-\varepsilon,\varepsilon)$ with $\varepsilon>0$ sufficiently small, and we have 
    \begin{align*}
        c_1 D(\mc N(\rho_t)\|\mc N(\sigma)) \leq D(\mc M(\rho_t)\|\mc M(\sigma)).
    \end{align*}
    Note that 
    \begin{align*}
        & D(\mc N(\rho_t)\|\mc N(\sigma))\big|_{t=0} = D(\mc M(\rho_t)\|\mc M(\sigma))\big|_{t=0} = 0, \\
        & \frac{\dd}{\dd t}D(\mc N(\rho_t)\|\mc N(\sigma))\big|_{t=0} = \frac{\dd}{\dd t}D(\mc M(\rho_t)\|\mc M(\sigma))\big|_{t=0} = 0.
    \end{align*}
    Thus we have the second order comparison: 
    \begin{align*}
        c_1 \frac{\dd^2}{\dd t^2} D(\mc N(\rho_t)\|\mc N(\sigma))\big|_{t=0} \leq \frac{\dd^2}{\dd t^2} D(\mc M(\rho_t)\|\mc M(\sigma))\big|_{t=0},
    \end{align*}
    which concludes the proof of $c_1  g_{\mc N(\sigma)}(\mc N(X)) \leq g_{\mc M(\sigma)}(\mc M(X))$ by expanding the second-order derivative \eqref{eq:2nd}.  
\end{proof}

\noindent We will also make use of the following result from \cite[Lemma 2.1]{GR22}:
\medskip
\begin{lemma}[{\cite[Lemma 2.1]{GR22}}]
\label{lemma: comparison}
    If two density operators $\rho$ and $\sigma$  satisfy $\rho \le c \sigma$ for some $c>0$, then for any operator $X \in \mb B(\mc H)$, 
    \begin{equation}
       g_{\rho}(X)\ge \frac{1}{c}  g_{\sigma}(X).
    \end{equation}
\end{lemma}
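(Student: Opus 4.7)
The plan is to prove the inequality by decoupling it into two ingredients at the level of the integrand in
\[
g_\sigma(X) = \int_0^\infty \tr\bigl(X^\dagger (\sigma + rI)^{-1} X (\sigma + rI)^{-1}\bigr)\,\dd r,
\]
namely a scaling identity and a monotonicity property in $\sigma$. First I would establish the scaling law $g_{c\sigma}(X) = c^{-1} g_\sigma(X)$: the substitution $r = cu$ turns $(c\sigma + rI)^{-1}$ into $c^{-1}(\sigma + uI)^{-1}$, contributing a factor $c^{-2}$, which together with the Jacobian factor $c$ from $\dd r = c\,\dd u$ produces the overall factor $c^{-1}$. Only nontrivial cases are those with $\supp(X) \subseteq \supp(\sigma) \cap \supp(\rho)$; when a support inclusion fails, both sides are $+\infty$ or $g_\rho(X)=+\infty$ dominates trivially.

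Second I would show that $g_\sigma(X)$ is monotone non-increasing in $\sigma$: if $\sigma_1 \le \sigma_2$ are positive, then $g_{\sigma_1}(X) \ge g_{\sigma_2}(X)$. Operator anti-monotonicity of $x \mapsto (x+r)^{-1}$ gives $(\sigma_1 + rI)^{-1} \ge (\sigma_2 + rI)^{-1}$ pointwise in $r$, so it suffices to show that $A \mapsto \tr(X^\dagger A X A)$ is monotone non-decreasing on the cone of positive operators. Writing $A = B + C$ with $B,C \ge 0$ and expanding,
\[
\tr(X^\dagger A X A) - \tr(X^\dagger B X B) = \tr(X^\dagger B X C) + \tr(X^\dagger C X B) + \tr(X^\dagger C X C),
\]
and by cyclicity of the trace each of the three terms on the right is a squared Hilbert--Schmidt norm, e.g.
\[
\tr(X^\dagger B X C) = \tr\bigl((B^{1/2} X C^{1/2})^\dagger (B^{1/2} X C^{1/2})\bigr) = \|B^{1/2} X C^{1/2}\|_{HS}^2 \ge 0,
\]
with analogous identities for the other two terms. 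Integrating the pointwise inequality over $r$ yields the desired monotonicity $g_{\sigma_1}(X) \ge g_{\sigma_2}(X)$.

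Combining the two ingredients, the hypothesis $\rho \le c\sigma$ gives first $g_\rho(X) \ge g_{c\sigma}(X)$ by the monotonicity step, and then $g_{c\sigma}(X) = c^{-1} g_\sigma(X)$ by the scaling step, which together complete the proof. The main obstacle I anticipate is the monotonicity of the sandwich form $A \mapsto \tr(X^\dagger A X A)$, which might at first look doubtful since $A \mapsto A^2$ is not operator monotone; the resolution is precisely the sandwich structure, which places $X$ between the two factors of $A$ and converts the cross terms into genuine squared Hilbert--Schmidt norms rather than operator squares.
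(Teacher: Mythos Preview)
Your proof is correct and follows essentially the same route as the paper's: both rest on the anti-monotonicity of the resolvent $(\cdot + rI)^{-1}$ together with the scaling substitution $r \mapsto cr$. The only cosmetic difference is in how the monotonicity of $A \mapsto \tr(X^\dagger A X A)$ on the positive cone is established: the paper replaces one factor of $(\rho+rI)^{-1}$ at a time via the sandwich inequality $A^\dagger M A \le A^\dagger N A$ for $M \le N$ (writing $X(\rho+rI)^{-1/2}$, applying anti-monotonicity, then rotating the half-power to the other side by cyclicity, and repeating), whereas you expand $\tr(X^\dagger A X A) - \tr(X^\dagger B X B)$ directly and recognize each of the three resulting terms as a squared Hilbert--Schmidt norm. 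Both arguments are equally elementary and yield the same conclusion.
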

\noindent For the convenience of the reader, the proof is provided in Appendix~\ref{appendix: proof of comparison}.

\subsection{Comparison of completely positive order} 
\label{sec:CP}
Here, we propose a criterion based on CP order which will later be used for computing the relative expansion coefficient of two dephasing channels in Section~\ref{sec: example dephasing}. Suppose that two channels $\mc N$ and $\mc M$ satisfy
\begin{equation}\label{eqn: cp comparison}
   c_1 \mc N \le_{cp} \mc M \le_{cp} c_2 \mc N 
\end{equation}
for some positive constants $c_1,c_2 >0$. \added{We note that this condition is equivalent to the comparison of Choi matrices of channels in terms of positive semidefiniteness
\[ c_1 C_{\mc N} \le_{cp} C_{\mc M} \le_{cp} c_2 C_{\mc N} \]
where $c_1 \mc C_{\mc N}\le \mc C_{\mc M}$ means $\mc C_{\mc M} - c_1 \mc C_{\mc N}$ is positive semidefinite.}
By Lemma \ref{lemma: comparison}, for any operator $Y$ and density $\omega$,
\[  \frac{1}{c_2}  g_{\mc M(\omega)}(Y)\le g_{\mc N(\omega)}(Y)\le \frac{1}{c_1}  g_{\mc M(\omega)}(Y).\]
Then the target inequality 
\begin{align*}
    g_{ \mc N(\omega)}(\mc N(X))\ge cg_{ \mc M(\omega)}(\mc M(X)),\quad  \forall \ \omega,X
\end{align*}
can be deduced from
\begin{align}
g_{\omega}(\mc N(X)) \geq c' g_{\omega}(\mc M(X))\label{eqn:simplified target inequality}
\end{align}
for some $c'>0$ (Either $c'=cc_1$ and $\omega=\mathcal{N}(\rho_t),t\in [0,1]$, or $c'=c c_2$ for $\omega=\mathcal{M}(\rho_t), t\in [0,1]$).


It is tempting to conjecture that the comparison in CP order \eqref{eqn: cp comparison} \added{(or equivalently, positive semidefinite order in terms of Choi matrices)} directly implies the comparison of BKM metric \eqref{eqn:simplified target inequality}. The latter is equivalent to 
\begin{align}\label{ineqn:L^2 positiveness}
     \langle X, \mc N^{\dagger} \mc J_{\omega} \mc N(X) \rangle \ge c'\langle X, \mc M^{\dagger} \mc J_{\omega} \mc M (X) \rangle, \forall X
\end{align}
where $\langle X,Y\rangle = \tr(X^{\dagger}Y)$ is the standard Hilbert-Schmidt inner product. 
However, the complete positivity of superoperators does not imply positive semidefiniteness as an operator on the Hilbert-Schmidt space. In fact, suppose we have a completely positive map $\Psi(\rho) = \sigma_z\rho \sigma_z$, one can easily show that $\langle X,\Psi(X)\rangle < 0$ for a Hermitian operator $X = \begin{pmatrix}
    a & z \\
    z^* & -a
\end{pmatrix}$ with $|a| < |z|$. Hence, $\Psi$ is completely positive map but is not a positive semidefinite as an operator on the Hilbert-Schmidt space. It is therefore not enough to assume \eqref{eqn: cp comparison} in order to have a nontrivial expansion coefficient through Lemma~\ref{lemma: criteria general}. 

Instead, we need an additional assumption in order to guarantee that the comparison from \eqref{eqn:simplified target inequality} holds for some $c'>0$. The following lemma is motivated by \cite[Lemma 2.3]{GJLL22}.
\medskip
\begin{lemma}\label{lemma: comparison dephasing}
 Suppose $\mc M,\mc N,\Phi \in \mc L(\mb B(\mc H_A), \mb B(\mc H_B))$ are quantum channels such that 
    \begin{align*}
        \mc N = (1-\varepsilon)\mc M + \varepsilon \Phi, \quad \varepsilon \in (0,1).
    \end{align*}
    Moreover, we assume that there exists a quantum channel $\mc D \in \mc L(\mb B(\mc H_A), \mb B(\mc H_A))$ such that $\mc D \circ \mc N = \Phi$ and $\mc D(\omega) \le c \omega$ for some fixed density operator $\omega$ and $c>0$. Then, for any operator $X$, we have 
    \begin{equation}
        g_\omega(\mc N(X))\ge \frac{(1-2\varepsilon)(1-\varepsilon)}{1 + c \varepsilon(1-\varepsilon)}g_\omega(\mc M(X)).
    \end{equation}
\end{lemma}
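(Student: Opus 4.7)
My plan is to rewrite the hypothesis $\mc N = (1-\varepsilon)\mc M + \varepsilon \Phi$ together with $\Phi = \mc D\circ \mc N$ as the pointwise identity
\begin{equation*}
\mc N(X) = (1-\varepsilon)\,\mc M(X) + \varepsilon\, \mc D(\mc N(X)),
\end{equation*}
and then expand $g_\omega(\mc N(X))$ as a quadratic form in the two summands via the Hermitian inner product $\langle Y,Z\rangle_\omega := \tr(Y^\dagger \mc J_\omega(Z))$ attached to the BKM operator \eqref{eq:BKMop}. This produces the squared contribution $(1-\varepsilon)^2 g_\omega(\mc M(X))$, the squared contribution $\varepsilon^2 g_\omega(\mc D(\mc N(X)))$ which I will simply drop because it is non-negative, and a cross term $2\varepsilon(1-\varepsilon)\,\mathrm{Re}\,\langle \mc M(X), \mc D(\mc N(X))\rangle_\omega$ whose potentially negative part is what must be controlled.

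The crucial estimate for the $\mc D$-terms combines two monotonicity facts. First, the data processing inequality for the BKM metric gives $g_{\mc D(\omega)}(\mc D(Y)) \leq g_\omega(Y)$ for any channel $\mc D$ and operator $Y$ (a standard consequence of Petz's theory of monotone quantum Fisher information, and also directly derivable from the integral representation \eqref{eq:BKMop}). Second, Lemma \ref{lemma: comparison} applied to $\mc D(\omega)\leq c\,\omega$ yields $g_\omega(Z) \leq c\, g_{\mc D(\omega)}(Z)$ for any $Z$. Chaining these with $Y = \mc N(X)$ and $Z = \mc D(\mc N(X))$ produces
\begin{equation*}
g_\omega\bigl(\mc D(\mc N(X))\bigr) \;\leq\; c\, g_\omega(\mc N(X)).
\end{equation*}
Writing $T := g_\omega(\mc N(X))$ and $s := g_\omega(\mc M(X))$, Cauchy--Schwarz in $\langle\cdot,\cdot\rangle_\omega$ then bounds the absolute value of the cross term by $2\varepsilon(1-\varepsilon)\sqrt{c\, s\, T}$.

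The final step is the AM-GM splitting $2\sqrt{c\, s\, T}\leq s + c\,T$, chosen with unit weights precisely so that the resulting $s$ and $T$ coefficients assemble into the claimed ratio. Substituting back yields
\begin{equation*}
T \;\geq\; (1-\varepsilon)^2 s - \varepsilon(1-\varepsilon)(s + c\,T) \;=\; (1-\varepsilon)(1-2\varepsilon)\, s - c\,\varepsilon(1-\varepsilon)\, T,
\end{equation*}
which rearranges to $T\bigl[1 + c\,\varepsilon(1-\varepsilon)\bigr] \geq (1-\varepsilon)(1-2\varepsilon)\, s$, the desired inequality. The bound is nontrivial for $\varepsilon<1/2$; beyond that the right-hand side is non-positive and the claim is vacuous. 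I expect the main subtleties to be (i) invoking the BKM data processing inequality at the required level of generality (worth stating as a small lemma, cf.\ \cite{LR99}), and (ii) identifying the correct Young-type splitting of the cross term, since any other balance parameter leads to a weaker coefficient whose denominator does not reduce to $1+c\,\varepsilon(1-\varepsilon)$. Everything else is routine algebra.
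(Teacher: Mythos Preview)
Your proposal is correct and follows essentially the same argument as the paper's proof. The only cosmetic difference is that the paper phrases the opening step as the reverse triangle inequality for the BKM norm $\sqrt{g_\omega(\cdot)}$ followed by squaring, whereas you expand $g_\omega(\mc N(X))$ as a quadratic form and bound the cross term via Cauchy--Schwarz; these yield the identical inequality $g_\omega(\mc N(X)) \geq (1-\varepsilon)^2 g_\omega(\mc M(X)) - 2\varepsilon(1-\varepsilon)\sqrt{g_\omega(\mc M(X))\,g_\omega(\Phi(X))}$, and the remaining steps (the bound $g_\omega(\Phi(X))\leq c\,g_\omega(\mc N(X))$ via Lemma~\ref{lemma: comparison} plus DPI, then the AM--GM splitting) match exactly.
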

\begin{proof}
   Recall that for the BKM metric $g_{\omega}: \mb B(\mc H) \to \mb [0,\infty]$,
$g_{\omega}(X) = \langle X, \mc J_{\omega} (X)\rangle$, $X\mapsto \sqrt{g_{\omega}(X)}$ is a Hilbert space norm on $\text{supp}(\omega)$. By triangle inequality, we have 
    \begin{align*}
       \sqrt{g_{\omega}(\mc N(X))} &= \sqrt{g_{\omega}((1-\varepsilon)\mc M(X) + \varepsilon \Phi (X))} \\
       & \ge \sqrt{g_{\omega}((1-\varepsilon) \mc M(X))} - \sqrt{g_{\omega}(\varepsilon \Phi (X))} \\
       & = (1-\varepsilon) \sqrt{g_{\omega}(\mc M(X))} - \varepsilon \sqrt{g_{\omega}(\Phi (X))}.
    \end{align*}
    Taking the square on both sides, 
    \begin{align*}
        g_{\omega}(\mc N(X)) & \ge (1-\varepsilon)^2 g_{\omega}(\mc M(X)) - 2\varepsilon(1-\varepsilon) \sqrt{g_{\omega}(\mc M(X))} \sqrt{g_{\omega}(\Phi (X))} + \varepsilon^2 g_{\omega}(\Phi(X)) \\
        & \ge (1-\varepsilon)^2 g_{\omega}(\mc M(X)) - 2\varepsilon(1-\varepsilon) \sqrt{g_{\omega}(\mc M(X))} \sqrt{g_{\omega}(\Phi (X))} \\
        & \ge (1-\varepsilon)^2 g_{\omega}(\mc M(X)) - \varepsilon(1-\varepsilon) \big( g_{\omega}(\mc M(X)) + g_{\omega}(\Phi (X)) \big).
    \end{align*}
    To compare $g_{\omega}(\Phi (X))$ and $g_{\omega}(\mc N(X))$, we use Lemma~\ref{lemma: comparison} to get 
    \begin{align*}
       g_{\omega}(\Phi (X)) \le cg_{\mc D(\omega)}(\Phi (X)) = cg_{\mc D(\omega)}(\mc D \circ \mc N (X)) \le c g_{\omega}(\mc N(X)),
    \end{align*}
    where the last inequality is the data processing inequality of the BKM metric \cite{LR99}. Finally, using the above inequalities, we get
    \begin{align*}
        g_{\omega}(\mc N(X)) & \ge (1-\varepsilon)^2 g_{\omega}(\mc M(X)) - \varepsilon(1-\varepsilon) \big( g_{\omega}(\mc M(X)) + g_{\omega}(\Phi (X)) \big) \\
        & \ge (1-2\varepsilon)(1-\varepsilon)g_{\omega}(\mc M(X)) - c\varepsilon(1-\varepsilon)g_{\omega}(\mc N(X)),
    \end{align*}
    which implies 
    \begin{align*}
        g_{\omega}(\mc N(X)) \ge \frac{(1-2\varepsilon)(1-\varepsilon)}{1 + c \varepsilon(1-\varepsilon)} g_{\omega}(\mc M(X)).
    \end{align*}
\end{proof}

\subsection{Explicit formula for the qubit channels}\label{sec:qubit calculation}
In this section, we discuss the qubit case and provide a sufficient condition for $\widecheck{\eta}_{\mc N,\mc M} >0$ which can cover a large family of examples. Recall that the identity and Pauli matrices 
\begin{equation}
    \sigma_x = \begin{pmatrix}
        0 & 1 \\
        1 & 0
    \end{pmatrix},\quad \sigma_y = \begin{pmatrix}
        0 & i \\
        -i & 0
    \end{pmatrix},\quad \sigma_z = \begin{pmatrix}
        1 & 0 \\
        0 & -1
    \end{pmatrix}.
\end{equation}
together form an orthonormal basis for $\mb M_2$. 
Any traceless Hermitian operator $X$ and density operator $\rho$ can be represented by two real vectors: 
\begin{equation}\label{eqn: qubit representation}
\begin{aligned}
     & X = \vec{y} \cdot \vec{\sigma} = y_1 \sigma_x + y_2 \sigma_y + y_3 \sigma_z, \quad \vec{y} \in \mb R^3   \\
     & \rho = \frac{1}{2} (\mb I_2 + \vec{w}\cdot \vec{\sigma}) = \frac{1}{2} (\mb I_2 + w_1 \sigma_x + w_2 \sigma_y + w_3 \sigma_z),\quad \vec{w}\in \mb R^3.
\end{aligned}
\end{equation}
where $\vec{\sigma}= (\sigma_x,  \sigma_y, \sigma_z)$ denotes the vector of Pauli matrix. 
Note that $\rho$ is a density operator if and only if $|\vec w| \le 1$. Thus the set of density operators can be identified with the unit ball in $\mb R^3$ and the pure states lie on the Bloch sphere. The Pauli basis has also been used to study the contraction coefficient of unital qubit channel by Hiai and Ruskai \cite{HR15}. The following basic properties are useful, see \cite[Appendix B]{HR15}:
\begin{equation}\label{eqn: qubit representation calculation}
\begin{aligned}
     & \text{Product rule:}\ (a \mb I_2 + \vec{w} \cdot \vec{\sigma})(b \mb I_2 + \vec{y}\cdot \vec{\sigma}) = (ab + \vec{w} \cdot \vec{y}) \mb I_2 + (a\vec{y} + b \vec{w} + i \vec{w} \times \vec{y})\cdot \vec{\sigma},  \\
     & \text{Inverse rule:}\ (a \mb I_2 + \vec{w} \cdot \vec{\sigma})^{-1} = \frac{a\mb I_2 - \vec{w} \cdot \vec{\sigma}}{a^2 - |\vec{w}|^2}, 
\end{aligned}
\end{equation}
where $\vec{w} \times \vec{y}$ is the cross product of two vectors. We have the following explicit calculation for BKM metric: 
\medskip
\begin{lemma}\label{lemma: qubit BKM metric}
For the traceless Hermitian operator $X$ and density operator $\rho$ given by \eqref{eqn: qubit representation},
\begin{equation}\label{eqn: qubit BKM metric}
\begin{aligned}
   g_{\rho}(X) & = 4|\vec{y}|^2 \int_1^{\infty} \frac{u^2 + |\vec{w}|^2 \cos 2\theta}{(u^2- |\vec{w}|^2 )^2} du \\
    & = 2|\vec{y}|^2 \bigg(\frac{1+\cos 2\theta}{ 1- |\vec w|^2} + \frac{1 - \cos 2\theta}{2|\vec w|} \ln \frac{1 + |\vec w|}{1 - |\vec w|} \bigg). 
\end{aligned}
\end{equation}
where $\theta$ is the angle between $\vec{y}$ and $\vec w$.
\end{lemma}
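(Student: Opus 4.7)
The plan is to diagonalize $\rho$ and expand $X$ in its eigenbasis, then reduce the operator integral defining $g_{\rho}(X)$ to an elementary scalar integral over the matrix entries of $X$. Since $\rho = \tfrac{1}{2}(I + \vec w\cdot\vec\sigma)$, its eigenvalues are $\lambda_{\pm} = (1\pm |\vec w|)/2$. In an orthonormal eigenbasis $\{\ket{+},\ket{-}\}$ diagonalizing $\rho$, the resolvent $(\rho+rI)^{-1}$ is diagonal with entries $(\lambda_{\pm}+r)^{-1}$, so if we write the matrix of $X$ in this basis as $[x_{ij}]$, then
\begin{equation*}
g_{\rho}(X) \;=\; \tr\!\int_0^{\infty}\!X(\rho+rI)^{-1}X(\rho+rI)^{-1}\,dr \;=\; \sum_{i,j\in\{+,-\}}|x_{ij}|^2\int_0^{\infty}\!\frac{dr}{(\lambda_i+r)(\lambda_j+r)}.
\end{equation*}
This reduction is essentially the standard Kubo--Ando integral formula for the BKM metric in diagonal form, and the elementary integrals on the right evaluate to $1/\lambda_i$ when $i=j$ and to $\frac{1}{|\vec w|}\ln\frac{1+|\vec w|}{1-|\vec w|}$ when $i\neq j$.

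Next, I would identify the matrix entries of $X$ in geometric terms. Decompose $\vec y = y_{\parallel}\hat w + \vec y_{\perp}$ with $\hat w = \vec w/|\vec w|$, so that $y_{\parallel} = |\vec y|\cos\theta$ and $|\vec y_{\perp}| = |\vec y|\sin\theta$. After rotating the Pauli frame to align $\hat w$ with the $z$-axis, $X = y_{\parallel}\sigma_z + \vec y_{\perp}\cdot\vec\sigma'$ in the eigenbasis of $\rho$. From the explicit forms of $\sigma_x,\sigma_y,\sigma_z$ one reads off $|x_{++}|^2 = |x_{--}|^2 = |\vec y|^2\cos^2\theta$ and $|x_{+-}|^2 = |x_{-+}|^2 = |\vec y|^2\sin^2\theta$. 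Substituting these together with $\lambda_++\lambda_- = 1$ and $\lambda_+\lambda_- = (1-|\vec w|^2)/4$ yields
\begin{equation*}
g_{\rho}(X) \;=\; |\vec y|^2\cos^2\theta\cdot\frac{4}{1-|\vec w|^2} \;+\; 2|\vec y|^2\sin^2\theta\cdot\frac{1}{|\vec w|}\ln\frac{1+|\vec w|}{1-|\vec w|},
\end{equation*}
and the double-angle identities $\cos^2\theta=(1+\cos 2\theta)/2$ and $\sin^2\theta=(1-\cos 2\theta)/2$ deliver the second claimed form directly.

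To recover the integral expression, I would apply the substitution $u = 2r+1$ in each of the three elementary integrals, which sends $\lambda_{\pm}+r$ to $(u\pm|\vec w|)/2$ and gives
\begin{equation*}
\int_0^{\infty}\!\frac{dr}{(\lambda_{\pm}+r)^2} = \int_1^{\infty}\!\frac{2\,du}{(u\pm|\vec w|)^2},\qquad \int_0^{\infty}\!\frac{dr}{(\lambda_++r)(\lambda_-+r)} = \int_1^{\infty}\!\frac{2\,du}{u^2-|\vec w|^2}.
\end{equation*}
Combining the diagonal pieces via $\tfrac{1}{(u+|\vec w|)^2}+\tfrac{1}{(u-|\vec w|)^2} = \tfrac{2(u^2+|\vec w|^2)}{(u^2-|\vec w|^2)^2}$ and writing the off-diagonal piece as $\tfrac{u^2-|\vec w|^2}{(u^2-|\vec w|^2)^2}$, the weights $\cos^2\theta$ and $\sin^2\theta$ combine precisely as $\cos^2\theta(u^2+|\vec w|^2)+\sin^2\theta(u^2-|\vec w|^2) = u^2 + |\vec w|^2\cos 2\theta$, which gives the first line of \eqref{eqn: qubit BKM metric}.

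The only genuinely delicate step is the geometric identification of the matrix elements $|x_{ij}|^2$ with the parallel/perpendicular decomposition of $\vec y$ relative to $\vec w$; everything else is bookkeeping with the eigendecomposition and two standard scalar integrals. I would therefore spend the bulk of the write-up making the rotation to the eigenbasis of $\rho$ explicit, and then the two displayed forms follow from routine algebraic manipulation with no further surprises.
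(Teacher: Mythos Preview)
Your proof is correct but proceeds by a genuinely different route from the paper. The paper never diagonalizes $\rho$: it stays in the Pauli-vector formalism throughout, applying the inverse rule $(uI+\vec w\cdot\vec\sigma)^{-1}=(uI-\vec w\cdot\vec\sigma)/(u^2-|\vec w|^2)$ and the product rule $(\vec a\cdot\vec\sigma)(\vec b\cdot\vec\sigma)=(\vec a\cdot\vec b)I+i(\vec a\times\vec b)\cdot\vec\sigma$ to compute the integrand $\tr\big[(\vec y\cdot\vec\sigma)(uI+\vec w\cdot\vec\sigma)^{-1}(\vec y\cdot\vec\sigma)(uI+\vec w\cdot\vec\sigma)^{-1}\big]$ directly, obtaining the integral representation first and then evaluating the two elementary integrals $\int_1^\infty u^2(u^2-|\vec w|^2)^{-2}\,du$ and $\int_1^\infty (u^2-|\vec w|^2)^{-2}\,du$ to reach the closed form. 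You instead diagonalize $\rho$, invoke the standard eigenbasis formula $g_\rho(X)=\sum_{i,j}|x_{ij}|^2\int_0^\infty(\lambda_i+r)^{-1}(\lambda_j+r)^{-1}\,dr$, and identify the matrix entries geometrically via the parallel/perpendicular decomposition of $\vec y$ with respect to $\vec w$; this gives the closed form first, and you then reconstruct the integral form by the substitution $u=2r+1$. Your approach is arguably more elementary and makes the role of the angle $\theta$ (diagonal versus off-diagonal contributions) immediately transparent, while the paper's Pauli-algebra computation avoids choosing a basis and stays closer to the Bloch-vector notation used in the rest of Section~\ref{sec:qubit calculation}. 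Both arguments are complete; the only point worth making explicit in your write-up is that the rotation aligning $\hat w$ with the $z$-axis corresponds to a unitary conjugation leaving $g_\rho(X)$, $|\vec y|$, $|\vec w|$, and $\theta$ invariant.
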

\begin{proof}[Proof of Lemma~\ref{lemma: qubit BKM metric}]
    Recall that $X= \vec{y} \cdot \vec{\sigma}$ and $\rho = \frac{1}{2} (\mb I_2 + \vec{w}\cdot \vec{\sigma})$, use the definition of BKM metric, we have 
    \begin{align*}
         \hspace{0.5cm} g_{\rho}(X)&= \int_0^{\infty} \tr((\vec{y} \cdot \vec{\sigma}) \big( \frac{1}{2} (\mb I_2 + \vec{w}\cdot \vec{\sigma}) + u\mb I_2\big)^{-1} (\vec{y} \cdot \vec{\sigma}) \big( \frac{1}{2} (\mb I_2 + \vec{w}\cdot \vec{\sigma}) + u\mb I_2\big)^{-1} )du \\
         & = 4 \int_0^{\infty} \tr((\vec{y} \cdot \vec{\sigma}) \big( (2u+1)\mb I_2 + \vec{w}\cdot \vec{\sigma} \big)^{-1} (\vec{y} \cdot \vec{\sigma}) \big( (2u+1)\mb I_2 + \vec{w}\cdot \vec{\sigma} \big)^{-1} )du \\
         & = 2 \int_1^{\infty} \tr((\vec{y} \cdot \vec{\sigma}) \big( u\mb I_2 + \vec{w}\cdot \vec{\sigma} \big)^{-1} (\vec{y} \cdot \vec{\sigma}) \big( u\mb I_2 + \vec{w}\cdot \vec{\sigma} \big)^{-1} )du.
    \end{align*}
    Then using the Product rule and Inverse rule in \eqref{eqn: qubit representation calculation}, for any $u > 1$,  we have 
    \begin{align*}
        (\vec{y} \cdot \vec{\sigma}) \big( u\mb I_2 + \vec{w}\cdot \vec{\sigma} \big)^{-1} = \frac{(\vec{y} \cdot \vec{\sigma}) \big(u\mb I_2 - \vec{w}\cdot \vec{\sigma}\big)}{u^2 - |\vec w|^2} = \frac{- (\vec w \cdot \vec y) \mb I_2 + (u \vec y + i \vec w \times \vec y) \cdot \vec \sigma}{u^2 - |\vec w|^2}
    \end{align*}
    thus using the Product rule again,
    \begin{align*}
        & \tr((\vec{y} \cdot \vec{\sigma}) \big( u\mb I_2 + \vec{w}\cdot \vec{\sigma} \big)^{-1} (\vec{y} \cdot \vec{\sigma}) \big( u\mb I_2 + \vec{w}\cdot \vec{\sigma} \big)^{-1} )\\ = &\frac{\tr( \big(- (\vec w \cdot \vec y) \mb I_2 + (u \vec y + i \vec w \times \vec y) \cdot \vec \sigma\big)^2 )}{(u^2 - |\vec w|^2)^2} \\
         = &2\frac{|\vec w \cdot \vec y|^2 + (u \vec y + i \vec w \times \vec y) \cdot (u \vec y + i \vec w \times \vec y)}{(u^2 - |\vec w|^2)^2} \\
         = &2\frac{u^2 |\vec y|^2 + |\vec w \cdot \vec y|^2 - |\vec w \times \vec y|^2 }{(u^2 - |\vec w|^2)^2}. 
    \end{align*}
    Plugging it back to the integral, we have 
    \begin{align*}
        g_{\rho}(X)& = 2\int_1^{\infty} \tr((\vec{y} \cdot \vec{\sigma}) \big( u\mb I_2 + \vec{w}\cdot \vec{\sigma} \big)^{-1} (\vec{y} \cdot \vec{\sigma}) \big( u\mb I_2 + \vec{w}\cdot \vec{\sigma} \big)^{-1} )du \\
        & = 4 \int_1^{\infty} \frac{u^2 |\vec y|^2 + |\vec w \cdot \vec y|^2 - |\vec w \times \vec y|^2 }{(u^2 - |\vec w|^2)^2}du \\ 
        & = 4|\vec{y}|^2 \int_1^{\infty} \frac{u^2 + |\vec{w}|^2 \cos 2\theta}{(u^2- |\vec{w}|^2 )^2} du.
    \end{align*}
    To compute the above integral, note that for $|\vec w|<1$, the following holds:
    \begin{align*}
       & \int_1^{\infty} \frac{u^2}{(u^2- |\vec{w}|^2 )^2} du = \frac{1}{2} \big(\frac{1}{1-|\vec w|^2} - \frac{1}{2|\vec w|} \ln \frac{1 - |\vec w|}{1 + |\vec w|}\big),\\
       & \int_1^{\infty} \frac{1}{(u^2- |\vec{w}|^2 )^2} du = \frac{1}{2|\vec w|^2} \big(\frac{1}{1-|\vec w|^2} + \frac{1}{2|\vec w|} \ln \frac{1 - |\vec w|}{1 + |\vec w|}\big).
    \end{align*}
    Therefore, by some simple algebra, we conclude the proof by showing 
    \begin{align*}
        4|\vec{y}|^2 \int_1^{\infty} \frac{u^2 + |\vec{w}|^2 \cos 2\theta}{(u^2- |\vec{w}|^2 )^2} du = 2|\vec{y}|^2 \bigg(\frac{1+\cos 2\theta}{ 1- |\vec w|^2} + \frac{1 - \cos 2\theta}{2|\vec w|} \ln \frac{1 + |\vec w|}{1 - |\vec w|} \bigg).
    \end{align*}
\end{proof}

Any qubit linear map $\mc N: \mb M_2 \to \mb M_2$ has a one-to-one correspondence to a $4\times4$ matrix $\mc T_{\mc N}$ in the basis of Pauli operators: 
\begin{equation}
    \mc N(c_0\mb I_2 + c_1 \sigma_x + c_2 \sigma_y + c_3 \sigma_z) = c_0'\mb I_2 + c_1' \sigma_x + c_2' \sigma_y + c_3' \sigma_z, \quad \vec{c}' = \mc T_{\mc N} \vec{c}.
\end{equation}
If $\mc N$ is trace-preserving, we must have $c_0 = c_0'$ thus $\mc T_{\mc N}$ has the form 
\begin{equation}
   \mc T_{\mc N}= \begin{pmatrix}
        1 & 0 & 0 & 0\\
        t_1 & a_{11} & a_{12} & a_{13}  \\
        t_2 & a_{21} & a_{22} & a_{23}  \\
        t_3 & a_{31} & a_{32} & a_{33}  
    \end{pmatrix}.
\end{equation}
If $\mc N$ is Hermitian-preserving, it is clear that all the elements of $\mc T_{\mc N}$ are real. Denote 
\begin{equation}
    T = \begin{pmatrix}
         a_{11} & a_{12} & a_{13}  \\
         a_{21} & a_{22} & a_{23}  \\
         a_{31} & a_{32} & a_{33}  
    \end{pmatrix} \in \mb M_3(\mb R),\quad  \vec{t} = \begin{pmatrix}
         t_1 \\
        t_2 \\
        t_3 
    \end{pmatrix} \in \mb R^3.
\end{equation}
For any $\rho = \frac{1}{2} (\mb I_2 + \vec{w}\cdot \vec{\sigma})$, $\mc N(\rho)$ can be represented as
\begin{equation}\label{eqn: qubit channel representation}
    \mc N(\rho) = \frac{1}{2} (\mb I_2 + (T\vec{w} + \vec{t})\cdot \vec{\sigma}).
\end{equation}
We refer the reader to \cite{ruskai02} for a complete analysis on the pair ($T,\vec{t}$) such that $\mc N$ is a quantum channel. Here we only remark that if $\mc N$ is positive, then $\forall \vec{w} \in \mb R^3$ with $|\vec{w}| \le 1$, we have $|T\vec{w} + \vec{t}| \le 1$. 

Given $X = \vec{y} \cdot \vec{\sigma}$ and $\rho = \frac{1}{2}(\mb I_2 + \vec{w}\cdot \vec{\sigma})$ with $ |\vec w|\le 1$, we denote 
\begin{equation}\label{eqn:notation of y and w}
    \vec{y}_{\mc N} = T \vec y,\quad \vec{w}_{\mc N} = T \vec w + \vec t.
\end{equation}
Using Lemma \ref{lemma: qubit BKM metric}, we have 
\begin{align}
    g_{\mc N(\rho)} (\mc  N(X)) & = 2 |\vec{y}_{\mc N}|^2 \bigg(\frac{1+\cos 2\theta_{\mc N}}{ 1- |\vec{w}_{\mc N}|^2} + \frac{1 - \cos 2\theta_{\mc N}}{2|\vec{w}_{\mc N}|} \ln \frac{1 + |\vec{w}_{\mc N}|}{1 - |\vec{w}_{\mc N}|} \bigg) \\
    & = \frac{4|\vec{y}_{\mc N}|^2}{ 1- |\vec{w}_{\mc N}|^2} \bigg(\cos^2 \theta_{\mc N} + \sin^2 \theta_{\mc N} f(|\vec{w}_{\mc N}|)\bigg) \label{eqn:BKM channel},
\end{align}
where $\theta_{\mc N}$ is the angle between $\vec{y}_{\mc N}$ and $\vec{w}_{\mc N}$, and the function $f$ is 
\begin{equation}\label{eqn:auxiliary function f}
    f(x):= \frac{1 -x^2}{2x} \ln \frac{1 + x}{1 - x},\quad x\in [0,1].
\end{equation}
Note that  $f(x) > 0$ for any $x \in [0,1)$ and $f(1) = 0$. When $x \to 1-$, 
\begin{equation}\label{eqn: convergence rate for auxiliary function}
    f(x) \sim -(1-x^2)\ln(1-x^2).
\end{equation}
 See Figure \ref{fig:Auxiliary function} for a plot of this function.
\begin{figure}[ht]
    \centering\includegraphics[width=.5\textwidth]{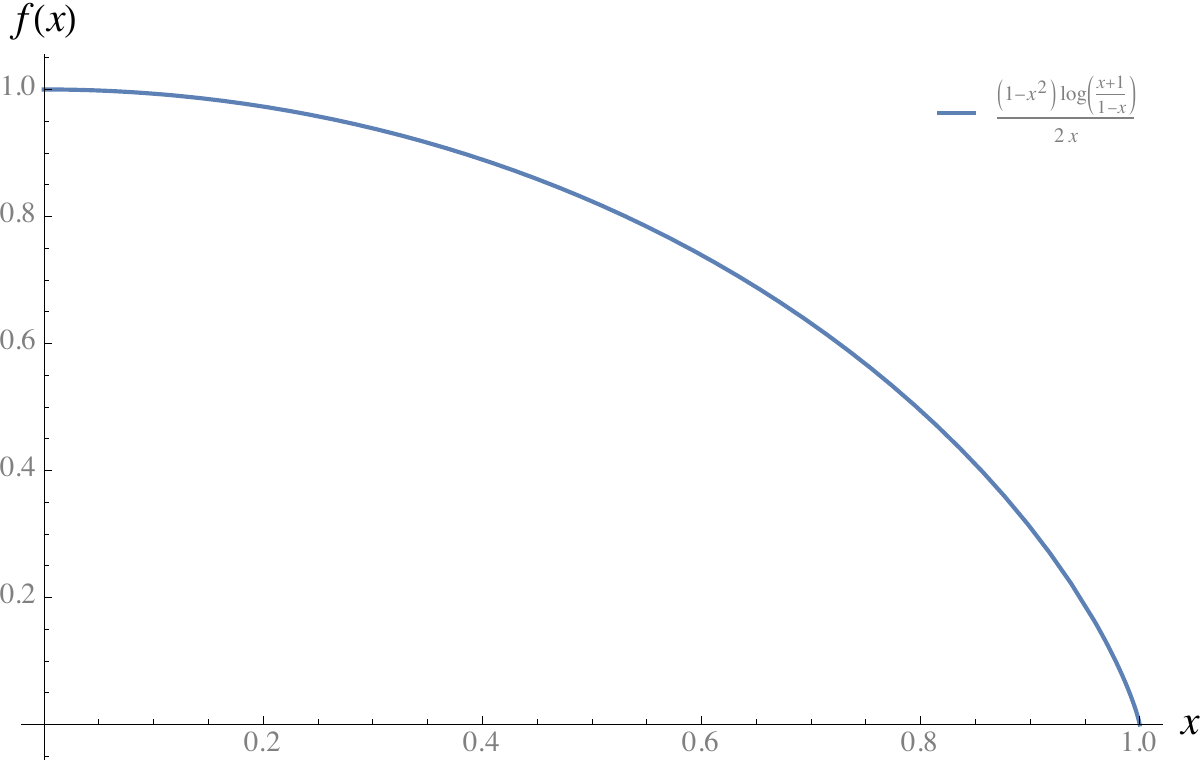}
    \caption{\added{Illustration of the function defined in~\eqref{eqn:auxiliary function f}.}}
    \label{fig:Auxiliary function}
\end{figure}

A concrete estimate for $\eta_{\mc N, \mc M}$ and $\widecheck{\eta}_{\mc N, \mc M}$ can be given directly from \eqref{eqn:BKM channel}:
\medskip
\begin{lemma}\label{lemma:qubit case general result}
Suppose $\mc N$ and $\mc M$ are two qubit trace-preserving and positive maps determined by $(T_1, \vec{t}_1)$ and $(T_2,\vec{t}_2)$, i.e., 
for any $\rho = \frac{1}{2} (\mb I_2 + \vec{w}\cdot \vec{\sigma})$, 
\begin{equation}
    \mc N(\rho) = \frac{1}{2} (\mb I_2 + (T_1\vec{w} + \vec{t}_1)\cdot \vec{\sigma}),\quad \mc M(\rho) = \frac{1}{2} (\mb I_2 + (T_2\vec{w} + \vec{t}_2)\cdot \vec{\sigma}).
\end{equation}
Suppose the following conditions hold: 
\begin{enumerate}
    \item There exist universal constants $c_1> c_2>0$, such that 
    \begin{equation}\label{ineqn:comparison of traceless}
        c_2 |T_1 \vec y| \le |T_2 \vec y| \le c_1 |T_1 \vec y|,\ \forall \vec{y}\in \mb R^3.
    \end{equation}
    \item There exist universal constants $c_3 > c_4>0$, such that for any $\vec{w}$ with $|\vec w|\le 1$, we have 
    \begin{equation}\label{eqn:two channel intermediate step 1}
      c_4 (1 -  |T_1\vec{w} + \vec{t}_1|^2) \le 1 -  |T_2\vec{w} + \vec{t}_2|^2 \le c_3 (1 -  |T_1\vec{w} + \vec{t}_1|^2).
    \end{equation}
    \item There exist universal constants $c_5>c_6>0$, such that for any $\vec{w}$ with $|\vec w|\le 1$ and $\vec y \in \mb R^3$, we have
    \begin{equation}\label{eqn:two channel intermediate step 2}
       c_6 \wt f_1(\vec w, \vec y) \le \wt f_2(\vec w, \vec y) \le c_5 \wt f_1(\vec w, \vec y),
    \end{equation}
    where $\wt f_i(\vec w, \vec y), i = 1,2$ is defined by 
    \begin{equation}
        \wt f_i(\vec w, \vec y) = \cos^2 \theta_i + \sin^2 \theta_i f(|T_i \vec w + \vec{t}_i|),\quad \theta_i = \angle (T_i \vec w + \vec{t}_i, T_i \vec y).
    \end{equation}
\end{enumerate}
Then we have 
\begin{equation}
    \eta_{\mc N, \mc M} \le \frac{c_3}{c_2^2 c_6},\quad \widecheck{\eta}_{\mc N, \mc M} \ge \frac{c_4}{c_1^2 c_5}.
\end{equation}
\end{lemma}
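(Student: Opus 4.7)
The plan is to reduce the problem to comparing BKM metrics using Lemma~\ref{lemma: criteria general}, and then exploit the explicit qubit formula~\eqref{eqn:BKM channel} so that the three hypotheses match the three factors appearing in the ratio of BKM metrics.

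First, I would invoke Lemma~\ref{lemma: criteria general}: showing $\widecheck{\eta}_{\mc N,\mc M}\ge c$ and $\eta_{\mc N,\mc M}\le C$ is equivalent to proving
\[
c\, g_{\mc M(\rho)}(\mc M(X)) \le g_{\mc N(\rho)}(\mc N(X)) \le C\, g_{\mc M(\rho)}(\mc M(X))
\]
for every density $\rho$ and every traceless Hermitian $X$ with $\supp(X)\subseteq\supp(\rho)$. Parametrize $\rho=\tfrac12(\mb I_2+\vec w\cdot\vec\sigma)$ and $X=\vec y\cdot\vec\sigma$ as in~\eqref{eqn: qubit representation}. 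From~\eqref{eqn:BKM channel}, for $i=1,2$,
\[
g_{\mc N_i(\rho)}(\mc N_i(X)) \;=\; \frac{4\,|T_i\vec y|^2}{1-|T_i\vec w+\vec t_i|^2}\,\widetilde f_i(\vec w,\vec y),
\]
so the ratio factorizes as
\[
\frac{g_{\mc N(\rho)}(\mc N(X))}{g_{\mc M(\rho)}(\mc M(X))}
\;=\;
\underbrace{\frac{|T_1\vec y|^2}{|T_2\vec y|^2}}_{\text{(i)}}
\,\cdot\,
\underbrace{\frac{1-|T_2\vec w+\vec t_2|^2}{1-|T_1\vec w+\vec t_1|^2}}_{\text{(ii)}}
\,\cdot\,
\underbrace{\frac{\widetilde f_1(\vec w,\vec y)}{\widetilde f_2(\vec w,\vec y)}}_{\text{(iii)}}.
\]

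Next I would simply apply the three hypotheses to these three factors, which have been designed to fit exactly. Squaring~\eqref{ineqn:comparison of traceless} gives $\tfrac{1}{c_1^2}\le(\text{i})\le\tfrac{1}{c_2^2}$; the hypothesis~\eqref{eqn:two channel intermediate step 1} gives $c_4\le(\text{ii})\le c_3$; and~\eqref{eqn:two channel intermediate step 2} gives $\tfrac{1}{c_5}\le(\text{iii})\le\tfrac{1}{c_6}$. Multiplying the upper bounds yields $\frac{g_{\mc N(\rho)}(\mc N(X))}{g_{\mc M(\rho)}(\mc M(X))}\le\frac{c_3}{c_2^2 c_6}$, and multiplying the lower bounds yields $\frac{g_{\mc N(\rho)}(\mc N(X))}{g_{\mc M(\rho)}(\mc M(X))}\ge\frac{c_4}{c_1^2 c_5}$. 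Translating back via Lemma~\ref{lemma: criteria general} delivers $\eta_{\mc N,\mc M}\le\tfrac{c_3}{c_2^2c_6}$ and $\widecheck\eta_{\mc N,\mc M}\ge\tfrac{c_4}{c_1^2c_5}$.

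There is essentially no hard step here: all the analytic work is already absorbed in Lemma~\ref{lemma: qubit BKM metric} and Lemma~\ref{lemma: criteria general}. The only mild care is to ensure the ratios are well defined, i.e.\ that $|T_i\vec w+\vec t_i|<1$ whenever $|\vec w|<1$ (so both denominators in~\eqref{eqn:BKM channel} are strictly positive) and that $\vec y\ne 0$ when $X\ne 0$ (so the numerators do not vanish identically); these follow from $\mc N,\mc M$ being positive and trace-preserving together with hypothesis~\eqref{ineqn:comparison of traceless}. On the degenerate boundary where $|T_i\vec w+\vec t_i|=1$, one can argue by a density argument since the infimum/supremum in~\eqref{def:rel-expan1} is over states with $\supp(\rho)\subseteq\supp(\sigma)$, which on the qubit level translates into the corresponding Bloch vector lying strictly inside the unit ball (or degenerate cases that can be handled by taking limits and applying L'Hôpital as in the proof of Theorem~\ref{main:impossibility}).
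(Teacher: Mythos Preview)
Your proposal is correct and follows essentially the same approach as the paper's proof, which simply states that the result follows by recalling the expression~\eqref{eqn:BKM channel} and estimating the ratio using the three hypotheses. You have made the argument more explicit by spelling out the factorization of the BKM-metric ratio into the three pieces matching the three assumptions, and by invoking Lemma~\ref{lemma: criteria general} to pass from the BKM comparison back to the relative-entropy coefficients; this is exactly the intended logic.
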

\begin{proof}
The proof follows directly by recalling the expression \eqref{eqn:BKM channel} and estimating the ratio using \eqref{ineqn:comparison of traceless}, \eqref{eqn:two channel intermediate step 1} and \eqref{eqn:two channel intermediate step 2}.
\end{proof}

\begin{remark}{\rm 
    Note that a sufficient condition for \eqref{eqn:two channel intermediate step 1} is
    \begin{align*}
        \mc N^{-1}(\mc P(\mb C^2)) = \mc M^{-1}(\mc P(\mb C^2)),
    \end{align*}
    where $\mc P(\mb C^2)$ is the set of pure qubit states and $\mc N^{-1}(\cdot)$ denotes the pre-image. In fact, if the pre-images of pure states are not the same, then one of the BKM metric can be infinity while the other one is finite. In \cite{HR15}, $\eta_{\mc N,id}$ is explicitly given as $\|T\|^2$ when $\mc N$ is unital. Nevertheless, in our case, the matrices $T_{\mathcal{N}}$ and $T_{\mathcal{M}}$, representing $\mathcal{N}$ and $\mathcal{M}$, may not be simultaneously diagonalizable, making the derivation of an explicit formula more challenging. For conjectured closed-form expressions in specific cases, we refer interested readers to Section~\ref{sec:conclusion and open problems}. 
}
\end{remark}

\section{Examples of channels with non-zero relative expansion coefficients}
\label{sec: examples}
\added{In this section, using the general methods presented in Section \ref{sec: relative contraction and expansion}, we show that the relative expansion coefficient can be strictly positive for pairs of depolarizing channels, pairs of generalized dephasing channels and pairs of qubit amplitude damping channels.} Note that the relative expansion coefficient is non-trivial only if for any states $\rho,\sigma$,
\begin{align*}
    \text{supp} (\mc N(\rho )) \subseteq \text{supp} (\mc N(\sigma)) \Longrightarrow \text{supp} (\mc M(\rho )) \subseteq \text{supp} (\mc M(\sigma)),
\end{align*}
otherwise $D(\mc N(\rho)\|\mc N(\sigma))$ is finite while $D(\mc M(\rho)\|\mc M(\sigma))$ is infinite. 

\subsection{Depolarizing channels} 
For $p\in[0,1]$, a depolarizing channel is defined by
\begin{equation}\label{def: depolarizing}
    \mc D_p: \mb M_d\to \mb M_d\ ,\  \mc D_p(\rho) = (1-p) \rho + \frac{p}{d}\tr(\rho) I_d.
\end{equation}
Our first example of a positive relative expansion coefficient compares two $d$-dimensional depolarizing channels $(\mc D_{p_1},\mc D_{p_2}$). For any such pair with $0< p_2 < p_1 <1$, we show that $\eta_{\mc D_{p_2},\mc D_{p_1}}< \infty$ and $\widecheck{\eta}_{\mc D_{p_1},\mc D_{p_2}}>0$.
For $p_2<p_1$,  as $\mc D_{p_1}= \mc D_{\frac{p_1 - p_2}{1- p_2}} \circ \mc D_{p_2}$, our problem can be seen as a reverse-type data processing inequality restricted on the output states of $\mc D_{p_2}$.
\medskip
\begin{prop}\label{main:depolarizing channel}
For any parameters $0< p_2 < p_1 <1$ and any density operators $\rho, \sigma$, we have 
\begin{equation}\label{main result:depolarizing}
   \left(\frac{1-p_1}{1-p_2}\right)^2\frac{p_2}{p_1} \le \frac{D(\mc D_{p_1}(\rho)\|\mc D_{p_1}(\sigma))}{D(\mc D_{p_2}(\rho)\|\mc D_{p_2}(\sigma))} \le \left(\frac{1-p_1}{1-p_2}\right)^2\frac{1 - \frac{d-1}{d}p_2}{1 - \frac{d-1}{d}p_1}.
\end{equation}
\end{prop}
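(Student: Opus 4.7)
I plan to apply Lemma~\ref{lemma: criteria general} to translate the proposition into an equivalent two-sided bound on BKM metrics: it suffices to show that for every density $\sigma$ and every traceless Hermitian $X$ with $\supp(X)\subseteq\supp(\sigma)$,
\[
\left(\frac{1-p_1}{1-p_2}\right)^{\!2}\frac{p_2}{p_1}\,g_{\mc D_{p_2}(\sigma)}(\mc D_{p_2}(X)) \;\le\; g_{\mc D_{p_1}(\sigma)}(\mc D_{p_1}(X)) \;\le\; \left(\frac{1-p_1}{1-p_2}\right)^{\!2}\frac{\lambda_2}{\lambda_1}\,g_{\mc D_{p_2}(\sigma)}(\mc D_{p_2}(X)),
\]
with $\lambda_i := 1-\tfrac{d-1}{d}p_i$. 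Because $X$ is traceless, $\mc D_p(X)=(1-p)X$ identically, so $g_{\mc D_p(\sigma)}(\mc D_p(X))=(1-p)^2\,g_{\mc D_p(\sigma)}(X)$. This absorbs the $((1-p_1)/(1-p_2))^2$ prefactor and reduces the problem to the cleaner target
\[
\frac{p_2}{p_1}\,g_{\mc D_{p_2}(\sigma)}(X)\;\le\; g_{\mc D_{p_1}(\sigma)}(X)\;\le\;\frac{\lambda_2}{\lambda_1}\,g_{\mc D_{p_2}(\sigma)}(X).
\]

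My strategy for each side is to produce a scalar operator-order comparison between $\mc D_{p_1}(\sigma)$ and $\mc D_{p_2}(\sigma)$ and then feed it into Lemma~\ref{lemma: comparison}. Using $\mc D_p(\sigma)=(1-p)\sigma+\tfrac{p}{d}I$, a direct computation gives
\[
\frac{p_1}{p_2}\,\mc D_{p_2}(\sigma)-\mc D_{p_1}(\sigma)\;=\;\frac{p_1-p_2}{p_2}\,\sigma\;\ge\;0,
\]
so $\mc D_{p_1}(\sigma)\le\tfrac{p_1}{p_2}\mc D_{p_2}(\sigma)$, and Lemma~\ref{lemma: comparison} yields the desired lower bound $g_{\mc D_{p_1}(\sigma)}(X)\ge\tfrac{p_2}{p_1}g_{\mc D_{p_2}(\sigma)}(X)$.

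For the upper bound, the key identity I will verify is
\[
\mc D_{p_1}(\sigma)-\frac{\lambda_1}{\lambda_2}\,\mc D_{p_2}(\sigma)\;=\;\frac{p_1-p_2}{d\,\lambda_2}\,(I-\sigma),
\]
which is positive semidefinite thanks to the universal bound $\sigma\le I$. This rearranges to $\mc D_{p_2}(\sigma)\le\tfrac{\lambda_2}{\lambda_1}\mc D_{p_1}(\sigma)$, and a second application of Lemma~\ref{lemma: comparison} delivers $g_{\mc D_{p_1}(\sigma)}(X)\le\tfrac{\lambda_2}{\lambda_1}g_{\mc D_{p_2}(\sigma)}(X)$, finishing the proof once the $((1-p_1)/(1-p_2))^2$ prefactor is restored. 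I expect the main subtlety to be identifying the correct scalar $\lambda_1/\lambda_2$ here: the naive choice $(1-p_2)/(1-p_1)$ that kills the $\sigma$-coefficient only yields the weaker ratio bound $(1-p_1)/(1-p_2)$, whereas the sharper scalar is pinned down by requiring the residue to be a nonnegative multiple of $I-\sigma$ so that positivity is inherited uniformly in $\sigma$ from $\sigma\le I$.
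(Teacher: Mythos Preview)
Your proposal is correct and follows essentially the same route as the paper: reduce via Lemma~\ref{lemma: criteria general} to a BKM comparison, pull out the factor $(1-p)^2$ using $\mc D_p(X)=(1-p)X$ on traceless $X$, and then feed the two-sided operator inequality $\tfrac{\lambda_1}{\lambda_2}\mc D_{p_2}(\sigma)\le \mc D_{p_1}(\sigma)\le \tfrac{p_1}{p_2}\mc D_{p_2}(\sigma)$ into Lemma~\ref{lemma: comparison}. The only cosmetic difference is that the paper obtains this operator inequality by computing the extrema of the eigenvalue ratio $h(\lambda)=\frac{(1-p_1)\lambda+p_1/d}{(1-p_2)\lambda+p_2/d}$ on $[0,1]$ (using that $\mc D_{p_1}(\sigma)$ and $\mc D_{p_2}(\sigma)$ commute), whereas you verify the same bounds by exhibiting the explicit differences $\tfrac{p_1-p_2}{p_2}\sigma$ and $\tfrac{p_1-p_2}{d\lambda_2}(I-\sigma)$.
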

\begin{proof}Note that for any density $\rho$ and $\sigma$, $\mc D_p(\rho-\sigma)=(1-p)(\rho-\sigma)$. Then for any $\omega$
\begin{equation} \label{eq:depolarizing helpful eq}
    g_{\mc D_{p}(\omega)} (\mc D_{p}(\rho-\sigma)) = (1-p)^2  g_{\mc D_{p}(\omega)} (\rho-\sigma). 
\end{equation}
Moreover, we have 
\begin{equation}
    \frac{1 - \frac{d-1}{d}p_1}{1 - \frac{d-1}{d}p_2} \mc D_{p_2}(\omega) \le \mc D_{p_1}(\omega) \le \frac{p_1}{p_2}\mc D_{p_2}(\omega),
\end{equation}
where the upper and lower bound are given by the supremum and infimum of the function 
\begin{equation}
    h(\lambda) = \frac{(1-p_1)\lambda + \frac{p_1}{d}}{(1-p_2)\lambda + \frac{p_2}{d}},\quad \lambda \in [0,1].
\end{equation}
By applying Lemma~\ref{lemma: comparison}, we have 
\begin{equation}
\begin{aligned}
       \left(\frac{1-p_1}{1-p_2}\right)^2\frac{p_2}{p_1} \le \frac{g_{\mc D_{p_1}(\omega)} (\mc D_{p_1}(\rho-\sigma))}{g_{\mc D_{p_2}(\omega)} (\mc D_{p_2}(\rho-\sigma))} \le \left(\frac{1-p_1}{1-p_2}\right)^2\frac{1 - \frac{d-1}{d}p_2}{1 - \frac{d-1}{d}p_1},
\end{aligned}
\end{equation}
which implies the conclusion via Lemma~\ref{lemma: criteria general}. 
\end{proof}
For qubit case, we can give an explicit expression:
\medskip
\begin{prop}\label{prop: qubit depolarizing}
For two qubit depolarizing channels, we have
\begin{align*}
       \widecheck{\eta}_{\mc D_{p_1},\mc D_{p_2}} = \left(\frac{1-p_1}{1-p_2}\right)^2 \frac{p_2(2-p_2)}{p_1(2-p_1)},
    \end{align*}
    \begin{align*}
       {\eta}_{\mc D_{p_1},\mc D_{p_2}}=\left(\frac{1-p_1}{1-p_2}\right)^2.
    \end{align*}
\end{prop}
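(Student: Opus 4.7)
The approach is to combine the BKM-metric criterion of Lemma~\ref{lemma: criteria general} with the explicit Bloch-vector formula of Lemma~\ref{lemma: qubit BKM metric}, reducing the problem to a two-parameter optimization of a ratio of BKM metrics.

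First, I would insert $T=(1-p)I_3$ and $\vec t=0$ into the framework of Section~\ref{sec:qubit calculation}: in the notation of \eqref{eqn:notation of y and w}, $\vec y_{\mc D_p}=(1-p)\vec y$ and $\vec w_{\mc D_p}=(1-p)\vec w$, so the angle $\theta$ between $\vec y$ and $\vec w$ is preserved. Plugging into \eqref{eqn:BKM channel} and dividing, the factor $|\vec y|^2$ cancels and one obtains
\begin{equation*}
\frac{g_{\mc D_{p_1}(\sigma)}(\mc D_{p_1}(X))}{g_{\mc D_{p_2}(\sigma)}(\mc D_{p_2}(X))}=\left(\frac{1-p_1}{1-p_2}\right)^{\!2}\phi(t,u),\qquad \phi(t,u):=\frac{tA(a_1)+(1-t)B(a_1)}{tA(a_2)+(1-t)B(a_2)},
\end{equation*}
with $t=\cos^2\theta\in[0,1]$, $u=|\vec w|\in[0,1)$, $a_i=(1-p_i)u$, $A(x)=\frac{1}{1-x^2}$, $B(x)=\frac{1}{2x}\ln\frac{1+x}{1-x}$, and $a_1<a_2$ since $p_1>p_2$. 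By Lemma~\ref{lemma: criteria general}, $\eta_{\mc D_{p_1},\mc D_{p_2}}$ and $\widecheck{\eta}_{\mc D_{p_1},\mc D_{p_2}}$ equal the prefactor $\bigl(\tfrac{1-p_1}{1-p_2}\bigr)^2$ times the supremum and infimum of $\phi$ over admissible $(t,u)$.

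Second, I would optimize $\phi$ in $t$ with $u$ fixed. A direct computation gives $\phi'(t)=\bigl(A(a_1)B(a_2)-A(a_2)B(a_1)\bigr)/D(t)^2$, where $D(t)$ is the denominator of $\phi$; the $t$-dependence cancels, so the sign of $\phi'$ is fully controlled by the monotonicity of $x\mapsto A(x)/B(x)$. The substitution $x=\tanh(v)$ yields the clean identity $A(x)/B(x)=\sinh(2v)/(2v)$, which is strictly increasing on $v\in(0,\infty)$ (equivalently on $x\in(0,1)$) since $\sinh(y)/y$ has a nonnegative Taylor series in $y^2$. Thus $\phi$ is strictly decreasing in $t$, giving $\sup_t\phi(t,u)=\phi(0,u)=B(a_1)/B(a_2)$ and $\inf_t\phi(t,u)=\phi(1,u)=\frac{1-(1-p_2)^2u^2}{1-(1-p_1)^2u^2}$.

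Third, I would optimize in $u$. Writing $\phi(1,u)$ as a Möbius function of $s=u^2$, a one-line derivative check shows it is strictly decreasing, ranging from $1$ at $u=0$ to $\frac{p_2(2-p_2)}{p_1(2-p_1)}$ as $u\to 1$. For $\phi(0,u)=B(a_1)/B(a_2)$, monotonicity of $B$ together with $a_1<a_2$ gives $\phi(0,u)<1$ for $u>0$, while $B(x)\to 1$ as $x\to 0$ yields $\phi(0,u)\to 1$ as $u\to 0$. Combining, $\sup_{t,u}\phi=1$ (approached as $u\to 0$) and $\inf_{t,u}\phi=\frac{p_2(2-p_2)}{p_1(2-p_1)}$ (approached at $t=1$, $u\to 1$); multiplying by the prefactor produces the two claimed equalities. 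The main obstacle I anticipate is the monotonicity of $A/B$, which becomes transparent via the $\tanh$-substitution and the increasing nature of $\sinh(y)/y$; the remaining work is routine one-variable analysis, with only modest care needed for the boundary limit $u\to 1$ (corresponding to $\sigma$ approaching a pure state, a limit of admissible states rather than an attained value).
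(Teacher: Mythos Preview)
Your argument is correct and follows the same overall strategy as the paper: reduce via Lemma~\ref{lemma: criteria general} to the ratio of BKM metrics and evaluate it explicitly in the qubit case. The difference is one of completeness. The paper diagonalizes $\sigma$ and quotes the BKM formula as $g_\sigma(X)=(\tfrac{1}{1-\lambda}+\tfrac{1}{\lambda})x^2$, silently dropping the off-diagonal $|z|^2$ contribution, so that its optimization is effectively only over the eigenvalue $\lambda$ of $\sigma$. You instead work in Bloch coordinates and optimize over both $u=|\vec w|$ and $t=\cos^2\theta$; your $\tanh$-substitution showing that $A/B$ is increasing is exactly the missing step that justifies why the off-diagonal direction never beats the diagonal one (the global infimum sits at $t=1$, the global supremum at $u=0$ where $t$ is irrelevant). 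In that sense your proof is the more complete of the two, while remaining within the same framework.
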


\begin{proof}
For any unitary $U$ and any $p\in [0,1]$, we have that $U\mc D_p(\rho) U^{\dagger}=\mc D_p(U\rho U^{\dagger})$, and $D(\mc D_p (\rho) ||\mc D_p (\sigma) )=D(U\mc D_p (\rho) U^{\dagger}||U\mc D_p (\sigma) U^{\dagger})=D(\mc D_p (U \rho U^{\dagger}) ||\mc D_p (U \sigma U^{\dagger}) )$. Taking $U$ to be the (conjugate of the) unitary that diagonalizes $\sigma$, i.e. let $\sigma= U^{\dagger} \begin{pmatrix} 1-\lambda & 0 \\0 & \lambda \end{pmatrix} U$, we can thus restrict ourselves to the case when $\sigma$ is diagonal.

Noting that \eqref{eq:depolarizing helpful eq} still holds, it remains to compute $ \frac{g_{\mc D_{p_1}(\sigma)}(X)}{g_{\mc D_{p_2}(\sigma)(X)}}$. We can use the explicit expression for the BKM metric, for example, \cite[Eq.~16]{AE11}
; for a Hermitian matrix $X=\begin{pmatrix}
    x&z\\z^* & -x
\end{pmatrix}$ and diagonal $\sigma$ with eigenvalues $1-\lambda, \lambda$, this becomes $ g_{\sigma}(X)  = (\frac{1}{1 - \lambda} +\frac{1}{\lambda}  ) x^2$. Inserting $\mc D_{p_1}(\sigma), \mc D_{p_2}(\sigma)$ and taking the quotient, we obtain:
\[\frac{g_{\mc D_{p_1}(\sigma)}(X)}{g_{\mc D_{p_2}(\sigma)(X)}} = h(\lambda)\]
with
\begin{equation}
    h(\lambda) = \frac{(2(1-p_1)\lambda +p_1)}{(2(1-p_2)\lambda +p_2)} \frac{(2-2(1-p_1)\lambda -p_1)}{(2-2(1-p_2)\lambda -p_2)},\quad \lambda \in [0,1].
\end{equation}

The supremum of this function is achieved at $\lambda=1/2$, giving $h(1/2)=1$, which corresponds to selecting $\sigma$ as a maximally mixed state. The infimum is taken at $\lambda\rightarrow 0$ (or $\lambda\rightarrow 1$) where it evaluates to $h(1)=\frac{p_2(2-p_2)}{p_1(2-p_1)}$, which corresponds to selecting $\sigma$ as a pure state. Thus, in total, we have:
\begin{equation}
\begin{aligned}
       \left(\frac{1-p_1}{1-p_2}\right)^2 \frac{p_2(2-p_2)}{p_1(2-p_1)}\le \frac{g_{\mc D_{p_1}(\rho_t)} (\mc D_{p_1}(\rho-\sigma))}{g_{\mc D_{p_2}(\rho_t)} (\mc D_{p_2}(\rho-\sigma))} \le \left(\frac{1-p_1}{1-p_2}\right)^2,
\end{aligned}
\end{equation}
 and the upper and lower bound can be achieved.
\end{proof}

\begin{remark}{\rm
    Our upper bound from Proposition~\ref{main:depolarizing channel} implies an upper bound on the contraction coefficient of the depolarizing channel for arbitrary dimensions. In fact, letting $p_2 \to 0$, the upper bound is \begin{align*}
       \eta_{\mc D_p} \le \frac{(1-p)^2}{1 - \frac{d-1}{d}p} < 1- p. 
    \end{align*}
    For the qubit depolarizing channel, it is known that $\eta_{\mc D_p} = (1-p)^2$, see \cite{HR15, HRS22}, which we recover in Proposition~\ref{prop: qubit depolarizing}. This also illustrates that our upper bound from Proposition~\ref{main:depolarizing channel} is not sharp for $d=2$. 
    
    \added{As another point of comparison, when the second state $\sigma$ is the maximally mixed state $I_d / d$, \cite{munch2024intertwining} employ a curvature bound to show that $$D(\mc D_p(\rho) \| I_d / d )\le (1-p)^{1+\frac{1}{d}} D(\rho \| I_d / d )$$ for any $\rho$, extending the result of \cite{MSW16}. More explicitly, \cite{MSW16,munch2024intertwining} studies the local entropy contraction constant when the second state $\omega$ is the fixed point state $I_d / d$
    \[
\eta_{\omega}(\Phi)
=\sup_{\rho}
\frac{D\bigl(\Phi(\rho)\,\|\,\Phi(\omega)\bigr)}
     {D\bigl(\rho\,\|\,\omega\bigr)},
\quad
\omega = \frac{I_d}{d}.
\]
This is potentially smaller than our contraction coefficient
\[
\eta(\Phi)
=\sup_{\rho,\sigma}
\frac{D\bigl(\Phi(\rho)\,\|\,\Phi(\sigma)\bigr)}
     {D\bigl(\rho\,\|\,\sigma\bigr)}.
\]}

    }
\end{remark}

\added{A channel is called strictly positive if it maps any state to a state with full support. For strictly positive channels, we have the following generalization of Proposition \ref{main:depolarizing channel}, }which can be derived from Lemma~\ref{lemma: criteria general} and Lemma~\ref{lemma: comparison}: 
\medskip
\begin{prop}Suppose there exists constants $0<\lambda_{min}<\lambda_{max}<\infty $ such that
     $0<\lambda_{min} I\leq \mc M(\rho) \leq \lambda_{max} I$ for any state $\rho$, then
   \begin{equation}
        \frac{1}{\lambda_{max}}  \|\mc M(\rho-\sigma)\|_2^2 \le D(\mc M(\rho)\|\mc M(\sigma)) \le  \frac{1}{\lambda_{min}} \|\mc M(\rho-\sigma)\|_2^2.
   \end{equation}
\end{prop}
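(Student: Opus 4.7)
The plan is to combine the integral representation of the relative entropy already recalled in the excerpt (equation~\eqref{eqn:integral representation L2}) with a direct two-sided operator bound on the BKM metric. The hypothesis gives $\lambda_{\min} I\le\mc M(\omega)\le\lambda_{\max}I$ uniformly for every state $\omega$; in particular, each intermediate state $\rho_t:=(1-t)\sigma+t\rho$ is itself a state, so $\mc M(\rho_t)$ has its entire spectrum in $[\lambda_{\min},\lambda_{\max}]$ for every $t\in[0,1]$. This uniform spectral window, together with the strict positivity $\lambda_{\min}>0$ (which forces $\mc M(\rho)$ and $\mc M(\sigma)$ to have full support so the relative entropy is automatically finite and the representation is valid), is all we need.

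The heart of the argument is the following elementary BKM estimate: for any density $\omega$ with $\lambda_{\min} I\le\omega\le\lambda_{\max}I$ and any Hermitian operator $Y$,
\[
\frac{1}{\lambda_{\max}}\|Y\|_2^2\;\le\;g_\omega(Y)\;\le\;\frac{1}{\lambda_{\min}}\|Y\|_2^2.
\]
To prove this I would diagonalize $\omega=\sum_i\lambda_i\,|i\rangle\langle i|$ and expand, using the definition of $\mc J_\omega$ from \eqref{eq:BKMop},
\[
g_\omega(Y)\;=\;\sum_{i,j}|Y_{ij}|^2\int_0^\infty\frac{dr}{(\lambda_i+r)(\lambda_j+r)}.
\]
Since every eigenvalue lies in $[\lambda_{\min},\lambda_{\max}]$, one has the pointwise scalar bounds $(\lambda_{\max}+r)^{-2}\le(\lambda_i+r)^{-1}(\lambda_j+r)^{-1}\le(\lambda_{\min}+r)^{-2}$. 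The two outer $r$-integrals evaluate to $1/\lambda_{\max}$ and $1/\lambda_{\min}$ respectively, and basis-independence of the Hilbert--Schmidt norm gives $\sum_{i,j}|Y_{ij}|^2=\|Y\|_2^2$.

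Applying this estimate pointwise in $t$ with $\omega=\mc M(\rho_t)$ and $Y=\mc M(\rho-\sigma)$ inside the integral representation yields
\[
\tfrac{1}{\lambda_{\max}}\|\mc M(\rho-\sigma)\|_2^2\int_0^1\!\!\int_0^s dt\,ds\;\le\;D\!\left(\mc M(\rho)\,\|\,\mc M(\sigma)\right)\;\le\;\tfrac{1}{\lambda_{\min}}\|\mc M(\rho-\sigma)\|_2^2\int_0^1\!\!\int_0^s dt\,ds.
\]
The outer double integral equals $1/2$, which can be absorbed into the (slightly weaker) constants in the proposition's statement; keeping it gives in fact the sharper bounds $\tfrac{1}{2\lambda_{\max}}$ and $\tfrac{1}{2\lambda_{\min}}$. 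There is no real conceptual obstacle: the only routine ingredient is the scalar identity $\int_0^\infty(a+r)^{-1}(b+r)^{-1}dr=\tfrac{\ln(b/a)}{b-a}$ together with its evident monotonicity in $a,b$. The strict positivity $\lambda_{\min}>0$ is essential, both for convergence of these $r$-integrals and to ensure that $\mc J_{\mc M(\rho_t)}$ is well-defined on the whole operator space.
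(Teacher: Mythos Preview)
Your approach is essentially the one the paper intends: the paper merely cites Lemma~\ref{lemma: criteria general} and Lemma~\ref{lemma: comparison}, and what you do---bound $g_{\mc M(\rho_t)}(\cdot)$ above and below by $\tfrac{1}{\lambda_{\min}}\|\cdot\|_2^2$ and $\tfrac{1}{\lambda_{\max}}\|\cdot\|_2^2$ via the spectral hypothesis, then integrate using~\eqref{eqn:integral representation L2}---is exactly how those lemmas combine here. Your direct diagonalization argument for the BKM bound is a clean substitute for invoking Lemma~\ref{lemma: comparison}.

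One point needs correction. You write that the factor $\int_0^1\int_0^s\,dt\,ds=\tfrac12$ ``can be absorbed into the (slightly weaker) constants in the proposition's statement'' and that your bounds $\tfrac{1}{2\lambda_{\max}}$, $\tfrac{1}{2\lambda_{\min}}$ are ``sharper''. This is true for the \emph{upper} bound ($\tfrac{1}{2\lambda_{\min}}\le\tfrac{1}{\lambda_{\min}}$), but it is false for the \emph{lower} bound: $\tfrac{1}{2\lambda_{\max}}$ is \emph{weaker} than the proposition's $\tfrac{1}{\lambda_{\max}}$, so your argument does not recover the lower bound as written. In fact the stated lower constant $\tfrac{1}{\lambda_{\max}}$ appears to be an overstatement in the proposition itself: for $\mc M(\sigma)$ close to the maximally mixed state and $\rho$ near $\sigma$, the second-order expansion gives $D(\mc M(\rho)\|\mc M(\sigma))\big/\|\mc M(\rho-\sigma)\|_2^2\to \tfrac12 g_{\mc M(\sigma)}(\cdot)/\|\cdot\|_2^2$, which can fall strictly below $\tfrac{1}{\lambda_{\max}}$ (e.g.\ the qubit depolarizing channel with $\sigma=I/2$ gives ratio $1<\tfrac{1}{1-p/2}=\tfrac{1}{\lambda_{\max}}$). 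So your constant $\tfrac{1}{2\lambda_{\max}}$ is the honest one, and you should flag the discrepancy rather than claim it is absorbed.
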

\added{For strictly positive channels, the existence of  $0<\lambda_{min}<\lambda_{max}<\infty $ is guaranteed by the compactness of state space.
Using the above result, we can get a reverse-type data processing inequality for strictly positive channels, which includes depolarizing channels as a special case. For more properties of strictly positive channels, we refer the reader to \cite{Sanz_2010} and references therein. Note that this class of channels does not include generalized dephasing channels or amplitude damping channels, as their output state can be singular.}

\subsection{Generalized dephasing channels}\label{sec: example dephasing}

Another interesting class of quantum channels are quantum dephasing channels which model the loss of coherence (off-diagonal entries of the density matrix) without changing the populations (diagonal elements).

For a $d$-dimensional quantum system $\mc H$, the generalized dephasing channel $\Phi_{\Gamma} : \mb B(\mc H) \to \mb B(\mc H)$ is defined as 
\begin{equation}\label{def: general dephasing}
    \Phi_{\Gamma}(\rho) = \Gamma \odot \rho := \sum_{0\le i,j \le d-1} \Gamma_{ij} \rho_{ij} \ket{i}\bra{j},\quad \rho = \sum_{0\le i,j \le d-1} \rho_{ij} \ket{i}\bra{j},
\end{equation}
where $\Gamma \in \mb B(\mc H)$ such that 
\begin{equation}\label{eqn: dephasing matrix}
    \Gamma_{ij} \in [0,1], \quad \Gamma_{ii} = 1,\ 0\le i,j \le d-1.
\end{equation}
Note that the Choi–Jamio\l{}kowski operator of $\Phi_{\Gamma}$ is 
\begin{equation}
    \mc C_{\Phi_{\Gamma}} = \sum_{i,j = 0}^{d-1} \Gamma_{ij} \ketbra{ii}{jj},
\end{equation}
thus $\Phi_{\Gamma}$ is a quantum channel if and only if $\Gamma$ is positive semidefinite and $\Gamma_{ii}=1$ for all $i$. 

The diagonal entries of a quantum state remain unchanged when a dephasing channel is applied; thus, if we restrict $\rho, \sigma$ to be diagonal operators, we always have 
\begin{equation}
    D(\Phi_{\Gamma}(\rho)\|\Phi_{\Gamma}(\sigma)) = D(\rho \| \sigma) 
\end{equation}
which implies $\eta_{\Phi_{\Gamma}} = 1$, and similarly $\eta_{\Phi,\Phi'}= 1$ for two dephasing channels $\Phi,\Phi'$. For the relative expansion coefficient, using Lemma~\ref{lemma: comparison dephasing}, we show that $\widecheck{\eta}_{\Phi_{\Gamma'},\Phi_{\Gamma}}>0$ for certain positive semidefinite $\Gamma, \Gamma' \in \mb B(\mc H)$. 
\medskip
\begin{prop}\label{proposition: generalized dephasing}
    Let $\Gamma = (\Gamma_{ij}), \Gamma' = (\Gamma_{ij}') \in \mb B(\mc H)$ be positive semidefinite matrix satisfying \eqref{eqn: dephasing matrix}. Suppose there exists $\varepsilon \in (0,\frac{1}{2})$ such that 
    \begin{itemize}
        \item $(1-\varepsilon) \Gamma \le \Gamma' \le (1+\varepsilon)\Gamma$. 
        \item $\widehat{\Gamma} = (\widehat \Gamma_{ij})_{0\le i,j \le d-1}$ is positive semidefinite where  
        \begin{align}
            \widehat \Gamma_{ij} :=
            \begin{cases}
        0,\quad &\text{ if }\quad \Gamma_{ij}' = 0, \\
        \frac{\Gamma_{ij}' - (1-\varepsilon)\Gamma_{ij}}{\varepsilon \Gamma_{ij}'} , &\text{ if }\quad \Gamma_{ij}' > 0.
    \end{cases}\label{eq:def}
        \end{align}
    \end{itemize}
    Then we have 
    \begin{equation}
        \widecheck{\eta}_{\Phi_{\Gamma'},\Phi_{\Gamma}} \ge \frac{(1-2\varepsilon)(1-\varepsilon)}{(1 + 2\varepsilon)(1+\varepsilon)}. 
    \end{equation}
\end{prop}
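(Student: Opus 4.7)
The strategy is to exhibit $\Phi_{\Gamma'}$ as a convex combination of $\Phi_{\Gamma}$ and another dephasing channel, apply Lemma~\ref{lemma: comparison dephasing} followed by the base-point comparison of Lemma~\ref{lemma: comparison}, and then pass from BKM metrics to relative entropies via Lemma~\ref{lemma: criteria general}. Set
\[
\wt\Gamma_{ij} := \frac{\Gamma'_{ij} - (1-\varepsilon)\Gamma_{ij}}{\varepsilon}.
\]
The first hypothesis $(1-\varepsilon)\Gamma \leq \Gamma'$ forces $\wt\Gamma$ to be positive semidefinite, and its diagonal entries are all equal to $1$, so $\Phi_{\wt\Gamma}$ is a valid generalized dephasing channel. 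This yields the convex decomposition $\Phi_{\Gamma'} = (1-\varepsilon)\Phi_\Gamma + \varepsilon \Phi_{\wt\Gamma}$, matching the framework of Lemma~\ref{lemma: comparison dephasing} with $\mc N = \Phi_{\Gamma'}$, $\mc M = \Phi_\Gamma$, and $\Phi = \Phi_{\wt\Gamma}$. The role of $\wh\Gamma$ is to supply the degrader $\mc D := \Phi_{\wh\Gamma}$: the second hypothesis makes $\mc D$ a valid dephasing channel (the diagonals of $\wh\Gamma$ are all $1$ by a direct computation from \eqref{eq:def}), and the entrywise identity $\wh\Gamma \odot \Gamma' = \wt\Gamma$ gives $\mc D \circ \Phi_{\Gamma'} = \Phi_{\wt\Gamma}$ as required by the lemma.

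To instantiate Lemma~\ref{lemma: comparison dephasing}, I would fix an arbitrary density $\sigma$ and take $\omega := \Phi_{\Gamma'}(\sigma)$. Then $\mc D(\omega) = \Phi_{\wt\Gamma}(\sigma) = \wt\Gamma \odot \sigma$ while $\omega = \Gamma' \odot \sigma$. Using $\Gamma' \leq (1+\varepsilon)\Gamma$, a short algebraic manipulation produces the matrix inequality $\wt\Gamma \leq \tfrac{2}{1-\varepsilon}\Gamma'$, and the Schur product theorem (applied with $\sigma \geq 0$) upgrades it to $\mc D(\omega) \leq \tfrac{2}{1-\varepsilon}\omega$. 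With $c = \tfrac{2}{1-\varepsilon}$, Lemma~\ref{lemma: comparison dephasing} gives
\[
g_{\Phi_{\Gamma'}(\sigma)}\bigl(\Phi_{\Gamma'}(X)\bigr) \geq \frac{(1-2\varepsilon)(1-\varepsilon)}{1+2\varepsilon}\, g_{\Phi_{\Gamma'}(\sigma)}\bigl(\Phi_\Gamma(X)\bigr).
\]
To change the base point of the BKM metric on the right, apply Schur product to $\Gamma' \leq (1+\varepsilon)\Gamma$ to obtain $\Phi_{\Gamma'}(\sigma) \leq (1+\varepsilon)\Phi_\Gamma(\sigma)$, then invoke Lemma~\ref{lemma: comparison} to get $g_{\Phi_{\Gamma'}(\sigma)}(\Phi_\Gamma(X)) \geq \tfrac{1}{1+\varepsilon}\, g_{\Phi_\Gamma(\sigma)}(\Phi_\Gamma(X))$. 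Chaining the two inequalities and applying Lemma~\ref{lemma: criteria general} delivers the stated lower bound on $\widecheck{\eta}_{\Phi_{\Gamma'},\Phi_\Gamma}$.

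The main obstacle I foresee is the operator inequality $\wt\Gamma \leq \tfrac{2}{1-\varepsilon}\Gamma'$, which unfolds as $(1-3\varepsilon)\Gamma' \leq (1-\varepsilon)^2\Gamma$. For $\varepsilon \in [1/3, 1/2)$ the left-hand side is negative semidefinite and the inequality holds trivially; for $\varepsilon \in (0, 1/3)$ it follows from $\Gamma' \leq (1+\varepsilon)\Gamma$ together with the elementary scalar bound $(1-3\varepsilon)(1+\varepsilon) \leq (1-\varepsilon)^2$. A secondary technical concern is the set of indices at which $\Gamma'_{ij} = 0$: there the identity $\mc D \circ \Phi_{\Gamma'} = \Phi_{\wt\Gamma}$ implicitly demands $\wt\Gamma_{ij} = 0$, which should be absorbed into the positivity hypothesis on $\wh\Gamma$ (interpreting the ``$0$ if $\Gamma'_{ij}=0$'' clause as a Moore--Penrose-type convention) or, if need be, resolved by perturbing $\Gamma' \mapsto \Gamma' + \delta I$ and passing to the limit $\delta \to 0$ after applying the argument to the strictly positive case.
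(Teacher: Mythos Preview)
Your proposal is correct and follows essentially the same route as the paper: decompose $\Phi_{\Gamma'} = (1-\varepsilon)\Phi_\Gamma + \varepsilon\Phi_{\wt\Gamma}$, use $\Phi_{\wh\Gamma}$ as the degrader, obtain $\mc D(\omega)\le \frac{2}{1-\varepsilon}\omega$, apply Lemma~\ref{lemma: comparison dephasing}, shift the base point via Lemma~\ref{lemma: comparison}, and conclude with Lemma~\ref{lemma: criteria general}. The only cosmetic difference is that the paper reaches the bound $\wt\Gamma \le \frac{2}{1-\varepsilon}\Gamma'$ by the chain $\wt\Gamma = \frac{\Gamma' - (1-\varepsilon)\Gamma}{\varepsilon}\le \frac{(1+\varepsilon)\Gamma - (1-\varepsilon)\Gamma}{\varepsilon} = 2\Gamma \le \frac{2}{1-\varepsilon}\Gamma'$ (phrased in CP order), whereas you unwind it to $(1-3\varepsilon)\Gamma'\le(1-\varepsilon)^2\Gamma$; both arguments use the same two hypotheses and are equally short. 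Your caution about the indices with $\Gamma'_{ij}=0$ is in fact a point the paper glosses over.
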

\begin{proof}
    We identify $\Phi_{\Gamma'}, \Phi_{\Gamma}$ as $\mc M, \mc N$ respectively and verify the assumptions in order to apply Lemma~\ref{lemma: comparison dephasing}. By definition, we have
    \begin{equation}
       \Phi_{\Gamma'} = (1-\varepsilon) \Phi_{\Gamma} + \varepsilon \Phi_{\wt \Gamma},
    \end{equation}
    where $\wt \Gamma_{ij} = \Gamma_{ij} + \frac{\Gamma_{ij}'- \Gamma_{ij}}{\varepsilon}$. $\wt \Gamma = (\wt \Gamma_{ij})_{0\le i,j \le d-1}$ is positive semidefinite by the assumption that $(1-\varepsilon) \Gamma \le \Gamma'$. It remains to show that 
    \begin{enumerate}
        \item There exists a quantum channel $\mc D$ such that $\mc D \circ \Phi_{\Gamma'} = \Phi_{\wt \Gamma}.$
        \item There exists a universal constant $c>0$ such that for any density operator $\sigma$, $\mc D (\Phi_{\Gamma'}(\sigma)) \le c \Phi_{\Gamma'}(\sigma)$.
    \end{enumerate}
    
    For the first argument, we define the generalized dephasing channel $\Phi_{\widehat \Gamma}$ with $\widehat \Gamma$ defined as in \eqref{eq:def}.
    By direct calculation, we have
    \begin{align*}
        \Phi_{\widehat \Gamma} \circ \Phi_{\Gamma'} = \Phi_{\wt \Gamma}. 
    \end{align*}
    By assumption, $\widehat{\Gamma}$ is positive semidefinite and $\widehat{\Gamma}_{ii}=1$, hence
    $\Phi_{\widehat \Gamma}$ is a quantum channel. We choose this channel to be $\mc D$ such that$\mc D \equiv \widehat{\Gamma}$ in the first condition (1).
    
     We will now show that the second condition (2) holds for this choice of $\mc D$. Noting that $\mc D \circ \Phi_{\Gamma'} = \Phi_{\wt \Gamma}$, we have
    \begin{align*}
        \mc D \circ \Phi_{\Gamma'} = \Phi_{\wt \Gamma} = \frac{\Phi_{\Gamma'} - (1-\varepsilon) \Phi_{\Gamma}}{\varepsilon} \le_{cp} \frac{(1+\varepsilon) \Phi_{\Gamma} - (1-\varepsilon) \Phi_{\Gamma}}{\varepsilon} = 2 \Phi_{\Gamma} \le_{cp} \frac{2}{1-\varepsilon} \Phi_{\Gamma'}.
    \end{align*}
    Thus, we can choose $c = \frac{2}{1-\varepsilon}$ in the condition (2).
    
    Finally, we apply Lemma~\ref{lemma: comparison dephasing} with $\omega = \Phi_{p'}(\sigma)$ for any density operator $\sigma$ and $c = \frac{2}{1-\varepsilon}$, and apply Lemma~\ref{lemma: comparison} with $\omega = \Phi_{\Gamma'}(\sigma) \le (1+\varepsilon)\Phi_{\Gamma}(\sigma)$, and we obtain
    \begin{equation}
    \begin{aligned}
    g_{\Phi_{\Gamma'}(\sigma)}(\Phi_{\Gamma'}(X))
    & \ge \frac{(1-2\varepsilon)(1-\varepsilon)}{1+ 2\varepsilon} g_{\Phi_{\Gamma'}(\sigma)}(\Phi_{\Gamma}(X))\\
        & \ge \frac{(1-2\varepsilon)(1-\varepsilon)}{(1+ 2\varepsilon)(1+\varepsilon)} g_{\Phi_{\Gamma}(\sigma)}(\Phi_{\Gamma}(X)),
    \end{aligned}
    \end{equation}
    which implies that $\widecheck{\eta}_{\Phi_{\Gamma'},\Phi_{\Gamma}} \ge \frac{(1-2\varepsilon)(1-\varepsilon)}{(1 + 2\varepsilon)(1+\varepsilon)}$ via Lemma~\ref{lemma: criteria general}.
\end{proof}
\begin{example}\label{example:dephasing}{\rm
    We illustrate our result for the qubit case. In this case, the matrix $\Gamma_p = \begin{pmatrix}
        1 & 1-p \\
        1-p & 1
    \end{pmatrix}$ is determined by a single parameter $p\in [0,2]$ and we denote $\Phi_p = \Phi_{\Gamma_p}$. For \begin{equation} \label{eq: qubit dephasing condition}
        0< p < p' \le (1+\varepsilon)p, \quad \varepsilon\in (0,\frac{1}{2}),
    \end{equation}
    it is easy to verify that both assumptions in Proposition~\ref{proposition: generalized dephasing} hold for qubit dephasing channels $ \Phi_{p'}$ and $\Phi_{p}$, and we thus have $ \widecheck{\eta}_{\Phi_{p'},\Phi_{p}} \ge \frac{(1-2\varepsilon)(1-\varepsilon)}{(1 + 2\varepsilon)(1+\varepsilon)}$ for any pair of channels with $p$ and $p'$ fulfilling \eqref{eq: qubit dephasing condition}, i.e. channels where $p$ and $p'$ are close. }
\end{example}
To prove that the relative expansion coefficient for qubit dephasing channels is non-zero for arbitrary $p$ and $p'$, we use the following elementary inequality: 
\begin{equation}\label{ineqn:elementary lower bound}
    \inf_{x,y \ge 0} \frac{ax + by}{cx+dy} \ge \min\{\frac{a}{c}, \frac{b}{d}\},\quad a,b,c,d \ge 0.
\end{equation}
\begin{prop}\label{main:dephasing}
    For any $p_1, p_2 \in (0,1)$, we have 
    \begin{equation}
        \widecheck{\eta}_{\Phi_{p_1},\Phi_{p_2}} > 0.
    \end{equation}
\end{prop}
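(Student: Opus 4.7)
The plan is to combine the data processing inequality (for the easy direction) with a telescoping chain of applications of Example~\ref{example:dephasing} (for the substantive direction). The elementary inequality~\eqref{ineqn:elementary lower bound} supplies a clean single-step bound in a diagonal sub-case that motivates and guides the chain construction.

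First I would handle the easy case $p_1\leq p_2$ via the semigroup property of qubit dephasing: $\Phi_{p_2}=\Phi_q\circ\Phi_{p_1}$ with $q=(p_2-p_1)/(1-p_1)\in[0,1)$. The DPI for $\Phi_q$ then immediately yields $D(\Phi_{p_2}(\rho)\|\Phi_{p_2}(\sigma))\leq D(\Phi_{p_1}(\rho)\|\Phi_{p_1}(\sigma))$ for every admissible $(\rho,\sigma)$, so $\widecheck{\eta}_{\Phi_{p_1},\Phi_{p_2}}\geq 1$.

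For the substantive case $p_1>p_2$, fix $\varepsilon\in(0,1/2)$ and construct a finite increasing chain $p_2=q_0<q_1<\cdots<q_n=p_1$ such that each pair $(q_i,q_{i+1})$ satisfies the hypotheses of Example~\ref{example:dephasing}; for qubit dephasing these amount to $q_{i+1}\leq(1+\varepsilon)q_i$ together with the inequality $(1+\varepsilon)q_{i+1}\leq(1-\varepsilon)q_i+2\varepsilon$ equivalent to the positivity of $\widehat{\Gamma}$ in Proposition~\ref{proposition: generalized dephasing}. Both constraints admit a strictly positive increment for every $q_i\in(0,1)$ (the maximum step scales like $1-q_i$ as $q_i\to 1$, but remains positive), so a finite chain exists. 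Telescoping the ratio over this chain and applying Example~\ref{example:dephasing} at each link gives, for every admissible $(\rho,\sigma)$,
\begin{align*}
\frac{D(\Phi_{p_1}(\rho)\|\Phi_{p_1}(\sigma))}{D(\Phi_{p_2}(\rho)\|\Phi_{p_2}(\sigma))}
=\prod_{i=0}^{n-1}\frac{D(\Phi_{q_{i+1}}(\rho)\|\Phi_{q_{i+1}}(\sigma))}{D(\Phi_{q_i}(\rho)\|\Phi_{q_i}(\sigma))}
\geq\prod_{i=0}^{n-1}\widecheck{\eta}_{\Phi_{q_{i+1}},\Phi_{q_i}}
\geq\left(\frac{(1-2\varepsilon)(1-\varepsilon)}{(1+2\varepsilon)(1+\varepsilon)}\right)^{n}>0,
\end{align*}
and taking the infimum over $(\rho,\sigma)$ yields $\widecheck{\eta}_{\Phi_{p_1},\Phi_{p_2}}>0$.

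The role of~\eqref{ineqn:elementary lower bound}, and the main obstacle, are both visible in the diagonal sub-case. When $\omega$ is diagonal in the computational basis, $\Phi_p(\omega)=\omega$ and (by Lemma~\ref{lemma: criteria general} together with the explicit BKM formula) the form $g_{\Phi_p(\omega)}(\Phi_p(X))$ splits cleanly as $A+(1-p)^2 B$ with $A,B\geq 0$ depending only on $(\omega,X)$; applying~\eqref{ineqn:elementary lower bound} with $(x,y)=(A,B)$ then delivers the pointwise bound $\min\bigl(1,(1-p_1)^2/(1-p_2)^2\bigr)$. The technical difficulty is that for non-diagonal $\omega$ the eigenbasis of $\Phi_p(\omega)$ depends on $p$, so this clean $p$-independent decomposition is no longer available; the telescoping chain above is precisely what sidesteps this issue by reducing the general estimate to small-parameter steps where Proposition~\ref{proposition: generalized dephasing} supplies a uniform positive lower bound.
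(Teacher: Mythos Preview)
Your proof is correct and takes a genuinely different route from the paper's. The paper proceeds by direct computation: it invokes the explicit qubit BKM formula (Lemma~\ref{lemma: qubit BKM metric}) and verifies the three comparison conditions of Lemma~\ref{lemma:qubit case general result}; the delicate part is condition~(3), which requires a careful analysis near the singular points $\vec w=\pm e_3$, and it is there that the elementary inequality~\eqref{ineqn:elementary lower bound} is actually used, together with a comparison of the rates at which $\cos^2\theta_{p_i}$ and $f(|\vec w_{p_i}|)$ vanish. You bypass all of that by telescoping the small-parameter estimate of Proposition~\ref{proposition: generalized dephasing}/Example~\ref{example:dephasing} along a finite chain in $[p_2,p_1]\subset(0,1)$, on which the admissible step size is uniformly bounded below; you also correctly include the second hypothesis (positivity of $\widehat\Gamma$), which for qubits is the extra constraint $(1+\varepsilon)q_{i+1}\le(1-\varepsilon)q_i+2\varepsilon$ and indeed becomes the binding one as $q_i\to 1$. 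Your argument is shorter and more modular; the paper's direct method, by contrast, is the template that extends to amplitude damping channels (Proposition~\ref{main:amplitude damping}), where the CP-order comparison underlying Proposition~\ref{proposition: generalized dephasing} is not available. A minor remark: in your write-up \eqref{ineqn:elementary lower bound} plays only a motivational role in the diagonal discussion and is not needed for the proof itself, whereas in the paper it is essential.
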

\begin{proof}
    Using Lemma \ref{lemma: criteria general}, we only need to prove that for two dephasing channels $\Phi_{p_1}$ and $\Phi_{p_2}$ with $0< p_2 < p_1 < 2$ there exists a constant $c(p_1,p_2)>0$, such that for any traceless $X = \vec{y} \cdot \vec{\sigma}$ and $\rho = \frac{1}{2}(\mb I_2 + \vec{w}\cdot \vec{\sigma})$,  we have
\begin{equation}\label{dephasing: goal}
     g_{\Phi_{p_1}(\sigma)}(\Phi_{p_1}(X)) \rangle \ge c(p_1, p_2)  g_{\Phi_{p_2}(\sigma)}(\Phi_{p_2}(X)).
\end{equation}
Note that this is proved for $p_1,p_2$ being close in Example \ref{example:dephasing}. To prove the general case, note that for any $p\in (0,2)$, the qubit representation of $\Phi_p$ as in \eqref{eqn: qubit channel representation} is given by 
\begin{equation}
    \Phi_{p}(\rho) = \frac{1}{2}(\mb I_2 + T_p \vec{w} \cdot \vec{\sigma}),\quad T_p = \text{diag}(1-p,1-p, 1)
\end{equation}
Denote $\vec{y}_p = T_p \vec y, \ \vec{w}_p = T_p \vec w$ and $\theta_p = \angle(\vec{y}_p, \vec{w}_p)$ as the angle between $\vec{y}_p$ and $\vec{w}_p$. We show \eqref{dephasing: goal} by applying Lemma \ref{lemma:qubit case general result}. To be more specific, we verify that
\begin{enumerate}
    \item \label{dephasing: 1} There exist universal constants $c_1> c_2>0$, such that 
    \begin{equation*}
        c_2 |\vec{y}_{p_1}| \le |\vec{y}_{p_2}| \le c_1 |\vec{y}_{p_1}|,\ \forall \vec{y}\in \mb R^3.
    \end{equation*}
    \item \label{dephasing: 2}There exist universal constants $c_3 > c_4>0$, such that for any $\vec{w}$ with $|\vec w|\le 1$, we have 
    \begin{equation*}
      c_4 (1 -  |\vec{w}_{p_1}|^2) \le 1 -  |\vec{w}_{p_2}|^2 \le c_3 (1 -  |\vec{w}_{p_1}|^2)
    \end{equation*}
    \item \label{dephasing: 3}There exist universal constants $c_5>c_6>0$, such that for any $\vec{w}$ with $|\vec w|\le 1$ and $\vec y \in \mb R^3$, we have
    \begin{equation*}
       c_6\le \frac{\cos^2 \theta_{p_1} + \sin^2 \theta_{p_1} f(|\vec{w_{p_1}}|)}{\cos^2 \theta_{p_2} + \sin^2 \theta_{p_2} f(|\vec{w_{p_2}}|) }\le c_5.
    \end{equation*}
\end{enumerate}
\eqref{dephasing: 1} follows directly from the simple form of $T_p = \text{diag}(1-p,1-p, 1)$. For \eqref{dephasing: 2}, we compute
\begin{align}\label{eqn:calculation 1 dephasing}
    1 - |\vec{w}_p |^2 = 1 - \big((1-p)^2(w_1^2 + w_2^2) + w_3^2\big) = (1-|\vec{w}|^2) + p(2-p)(w_1^2 + w_2^2), 
\end{align}
therefore, for any $ \vec w$, 
\begin{align*}
    \frac{1 - |\vec{w}_{p_1}|^2}{1 - |\vec{w}_{p_2}|^2 } & = \frac{(1-|\vec{w}|^2) + p_1(2-p_1)(w_1^2 + w_2^2)}{(1-|\vec{w}|^2) + p_2(2-p_2)(w_1^2 + w_2^2)} \ge \min\{1, \frac{p_1(2-p_1)}{p_2(2-p_2)}\} > 0.
\end{align*}
The hardest part is to show \eqref{dephasing: 3}. Denote $\mc B \subseteq \mb R^3$ as the unit ball, we define a function $g_p : \mc B \times \mb R^3 \to \mb R$ as 
\begin{align}
      \wt f_p(\vec w, \vec y) & := \cos^2 \theta_{p} + \sin^2 \theta_{p} f(|\vec{w_{p}}|), \quad \text{ where }\vec{y}_p = T_p \vec y, \ \vec{w}_p = T_p \vec w, \theta_p = \angle(\vec{y}_p, \vec{w}_p) \label{eqn:g_p:1}\end{align}
      Then one has \begin{align}
         \wt f_p(\vec w, \vec y) & = \cos^2 \theta_{p} (1 - f(|\vec{w_{p}}|))+ f(|\vec{w_{p}}|)\label{eqn:g_p:2} \\
       & = \frac{|\vec{w_{p}} \cdot \vec{y_{p}}|^2}{|\vec{w_{p}}|^2|\vec{y_{p}}|^2} (1 - f(|\vec{w_{p}}|))+ f(|\vec{w_{p}}|) \label{eqn:g_p:3}.
\end{align}
First we note that 
\begin{equation}
    |\vec{w_{p}}| = 1 \iff \vec w = \pm e_3,\quad e_3 = (0,0,1)^T
\end{equation}
Therefore for any $\varepsilon>0$ small, if $\vec w \in \mc B(e_3,\varepsilon)^c \cap \mc B(-e_3,\varepsilon)^c$, using the continuity of $f$, see Figure \ref{fig:Auxiliary function}, there exists a universal constant $c(\varepsilon,p)>0$ such that $f(|\vec{w_{p}}|) \ge c(\varepsilon, p)$, which implies that 
\begin{align*}
    c(\varepsilon,p) \le \cos^2 \theta_{p} + \sin^2 \theta_{p} f(|\vec{w_{p}}|) \le 1.
\end{align*}
Then for $\vec w \in \mc B(e_3,\varepsilon)^c \cap \mc B(-e_3,\varepsilon)^c$, we have 
\begin{align*}
    c(p_1,\varepsilon)\le \frac{\wt f_{p_1}(\vec w, \vec y)}{\wt f_{p_2}(\vec w, \vec y)} \le \frac{1}{c(p_2,\varepsilon)}.
\end{align*}
It remains to show that around the singular points(in this case they are $\pm e_3$), the ratio is lower bounded away from zero. To be more specific, we need to show 
\begin{align*}
    \liminf_{\vec w \to \pm e_3} \inf_{\vec{y}} \frac{\wt f_{p_1}(\vec w, \vec y)}{\wt f_{p_2}(\vec w, \vec y)} = \liminf_{\vec w \to \pm e_3} \inf_{\vec{y}} \frac{\cos^2 \theta_{p_1} + \sin^2 \theta_{p_1} f(|\vec{w_{p_1}}|)}{\cos^2 \theta_{p_2} + \sin^2 \theta_{p_2} f(|\vec{w_{p_2}}|) } > 0. 
\end{align*}
Note that in this case, the elementary lower bound \eqref{ineqn:elementary lower bound} does not work since 
\begin{align*}
    \inf_{\vec{y}}\frac{\cos^2 \theta_{p_1}}{\cos^2 \theta_{p_2}} = 0, \quad \forall \vec w \in \mc B.
\end{align*}
The key idea to show a lower bound is that when $\vec w \to \pm e_3$, if $\cos^2 \theta_{p_i}$ converges to zero, then it tends to zero faster than $f(|\vec{w_{p_1}}|) \sim -(1-|\vec{w_{p_1}}|^2) \ln(1 - |\vec{w_{p_1}}|^2)$ thus a lower bound can still be derived. 

Using \eqref{eqn:g_p:3}, we have 
\begin{align*}
    \liminf_{\vec w \to \pm e_3} \inf_{\vec{y}} \frac{\wt f_{p_1}(\vec w, \vec y)}{\wt f_{p_2}(\vec w, \vec y)}
    & = \liminf_{\vec w \to \pm e_3} \inf_{\vec{y}} \frac{\big|\vec{w_{p_1}} \cdot \vec{y_{p_1}} / |\vec{y_{p_1}}| \big|^2(1 - f(|\vec{w_{p_1}}|))+ |\vec{w_{p_1}}|^2 f(|\vec{w_{p_1}}|)}{\big|\vec{w_{p_2}} \cdot \vec{y_{p_2}} / |\vec{y_{p_2}}| \big|^2(1 - f(|\vec{w_{p_2}}|))+ |\vec{w_{p_2}}|^2 f(|\vec{w_{p_2}}|)}\cdot \frac{|\vec{w_{p_2}}|^2}{|\vec{w_{p_1}}|^2} \\
    & = \liminf_{\vec w \to \pm e_3} \frac{\big|\vec{w_{p_1}} \cdot \vec{y_{p_1}}(\vec w) / |\vec{y_{p_1}}(\vec w)| \big|^2 (1 - f(|\vec{w_{p_1}}|)) + |\vec{w_{p_1}}|^2 f(|\vec{w_{p_1}}|)}{\big|\vec{w_{p_2}} \cdot \vec{y_{p_2}}(\vec w) / |\vec{y_{p_2}}(\vec w)| \big|^2(1 - f(|\vec{w_{p_2}}|))+ |\vec{w_{p_2}}|^2 f(|\vec{w_{p_2}}|)},
\end{align*}
where for each $\vec w \neq \pm e_3$, we denote $\vec{y}(\vec w)$ as 
\begin{equation}
    \vec{y}(\vec w) = \arg\min \frac{\big|\vec{w_{p_1}} \cdot \vec{y_{p_1}} / |\vec{y_{p_1}}| \big|^2(1 - f(|\vec{w_{p_1}}|))+ |\vec{w_{p_1}}|^2 f(|\vec{w_{p_1}}|)}{\big|\vec{w_{p_2}} \cdot \vec{y_{p_2}} / |\vec{y_{p_2}}| \big|^2(1 - f(|\vec{w_{p_2}}|))+ |\vec{w_{p_2}}|^2 f(|\vec{w_{p_2}}|)}, 
\end{equation}
and $\vec{y_{p_i}}(\vec w) := T_{p_i}\vec{y}(\vec w),\ i = 1,2$. Note that the existence of $\vec{y}(\vec w)$ follows from the fact that infimum of a continuous function over a compact set is always achieved. By linearity, we can assume $|\vec y|=1$.  For any $\vec w$ and  $\vec y$
\begin{align*}
    & |\vec{w}_{p_2} \cdot \vec{y}_{p_2}|^2 = |(1-p_2)^2(y_1 w_1 + y_2 w_2) + y_3 w_3|^2 \\
    & = |(1-p_1)^2(y_1 w_1 + y_2 w_2) + y_3 w_3 + \big((1-p_2)^2-(1-p_1)^2\big)(y_1 w_1 + y_2 w_2) |^2 \\
    & \le 2\bigg(|\vec{w}_{p_1} \cdot \vec{y}_{p_1}|^2 + \big((1-p_2)^2-(1-p_1)^2\big)^2|y_1 w_1 + y_2 w_2 |^2 \bigg) \\
    & \le 2\bigg(|\vec{w}_{p_1} \cdot \vec{y}_{p_1}|^2 + \big((2 - p_1-p_2)(p_1 - p_2)\big)^2(y_1^2 + y_2^2)(w_1^2 + w_2^2)  \bigg).
\end{align*}
Recall that $1 - |\vec{w}_p |^2 = 1 - \big((1-p)^2(w_1^2 + w_2^2) + w_3^2\big) = (1-|\vec{w}|^2) + p(2-p)(w_1^2 + w_2^2)$. For $c_1(p_1,p_2):= \frac{(2-p_1-p_2)^2(p_1-p_2)^2}{(1-p_2)^2p_2(2-p_2)}$, we have
\begin{align*}
    \frac{\big((2 - p_1-p_2)(p_1 - p_2)\big)^2(y_1^2 + y_2^2)(w_1^2 + w_2^2)}{|\vec{y}_{p_2}|^2} \le c_1(p_1,p_2) (1 - |\vec{w}_{p_2}|^2) \le c_1(p_1,p_2) f(|\vec{w}_{p_2}|).
\end{align*}
Therefore, using $|\vec{w}_{p_2} \cdot \vec{y}_{p_2}|^2 \le 2 \big(|\vec{w}_{p_1} \cdot \vec{y}_{p_1}|^2 + |\vec{y}_{p_2}|^2 c_1(p_1,p_2) f(|\vec{w}_{p_2}|) \big)$, we have 
\begin{align*}
   & \hspace{0.5cm} \liminf_{\vec w \to \pm e_3} \frac{\big|\vec{w_{p_1}} \cdot \vec{y_{p_1}}(\vec w) / |\vec{y_{p_1}}(\vec w)| \big|^2 (1 - f(|\vec{w_{p_1}}|)) + |\vec{w_{p_1}}|^2 f(|\vec{w_{p_1}}|)}{\big|\vec{w_{p_2}} \cdot \vec{y_{p_2}}(\vec w) / |\vec{y_{p_2}}(\vec w)| \big|^2(1 - f(|\vec{w_{p_2}}|))+ |\vec{w_{p_2}}|^2 f(|\vec{w_{p_2}}|)} \\
   & \ge  \liminf_{\vec w \to \pm e_3} \frac{\big|\vec{w_{p_1}} \cdot \vec{y_{p_1}}(\vec w) / |\vec{y_{p_1}}(\vec w)| \big|^2 (1 - f(|\vec{w_{p_1}}|)) + |\vec{w_{p_1}}|^2 f(|\vec{w_{p_1}}|)}{2\big|\vec{w_{p_1}} \cdot \vec{y_{p_1}}(\vec w) / |\vec{y_{p_2}}(\vec w)| \big|^2(1 - f(|\vec{w_{p_2}}|))+ (|\vec{w_{p_2}}|^2 + 2c_1(p_1,p_2))f(|\vec{w_{p_2}}|)} \\
   & \underset{\eqref{ineqn:elementary lower bound}}{\ge}  \min\big\{\frac{(1-p_2)^2}{2(1-p_1)^2}, \frac{1}{1 + c_1(p_1,p_2)} \liminf_{\vec w \to \pm e_3} \frac{f(|\vec{w_{p_1}}|)}{f(|\vec{w_{p_2}}|)}\big\} \ge \min\big\{\frac{(1-p_2)^2}{2(1-p_1)^2}, \frac{1}{1 + c_1(p_1,p_2)}\big\}>0,
\end{align*}
where in the last inequality, we used
\begin{align*}
    & \liminf_{\vec w \to \pm e_3} \frac{f(|\vec{w_{p_1}}|)}{f(|\vec{w_{p_2}}|)} =  \liminf_{\vec w \to \pm e_3} \frac{(1- |T_{p_1} \vec{w}|^2) \ln (1-|T_{p_1} \vec{w}|^2)}{(1- |T_{p_2} \vec{w}|^2) \ln (1-|T_{p_2} \vec{w}|^2)} \ge \min\{1, \frac{p_1(2-p_2)}{p_2(2-p_2)}\} = 1.
\end{align*}
\end{proof}
Before we proceed to the next example, we remark here that we have to take $\liminf$ in the above proof, since the limit may not exist. 

\subsection{Amplitude damping channels}
We now study the relative expansion coefficient for qubit amplitude damping channels. For $\gamma \in (0,1)$, we define the amplitude damping channel $\mc A_{\gamma}$ as 
\begin{equation}\label{def:amplitude damping channel}
    \mc A_{\gamma}\begin{pmatrix}
        \rho_{00} & \rho_{01} \\
        \rho_{10} & \rho_{11}
    \end{pmatrix} = \begin{pmatrix}
        \rho_{00}+ \gamma \rho_{11} & \sqrt{1-\gamma}\rho_{01} \\
        \sqrt{1-\gamma} \rho_{10} & (1-\gamma)\rho_{11}
    \end{pmatrix}.
\end{equation}
Note that amplitude damping channels does not satisfy the cp order comparison in order to apply Lemma~\ref{lemma: comparison dephasing}. In fact, 
for any $\gamma_1, \gamma_2 \in (0,1)$ and any $c>0$, $\mc A_{\gamma_1} - c \mc A_{\gamma_2}$ is not completely positive. Therefore, the techniques in the previous sections do not apply here. Instead, we will use the explicit calculation of the BKM-metric  $g_{\sigma}(X)$ for a qubit density operator $\sigma$ from Lemma~\ref{lemma: qubit BKM metric} and Lemma~\ref{lemma:qubit case general result} to derive the positivity of the relative expansion coefficient of two amplitude damping channels: 
\medskip
\begin{prop}\label{main:amplitude damping}
    For any $\gamma_1,\gamma_2 \in (0,1)$, we have 
    \begin{equation}
        \widecheck{\eta}_{\mc A_{\gamma_1},\mc A_{\gamma_2}} > 0.
    \end{equation}
\end{prop}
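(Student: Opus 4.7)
The plan is to mirror the strategy used in Proposition \ref{main:dephasing} and apply Lemma \ref{lemma:qubit case general result} after computing the qubit representation of the amplitude damping channel. A direct Bloch-sphere computation yields
\begin{align*}
T_\gamma = \mathrm{diag}(\sqrt{1-\gamma},\sqrt{1-\gamma},1-\gamma), \qquad \vec t_\gamma = (0,0,\gamma)^T.
\end{align*}
Unlike the dephasing case, both the linear part $T_\gamma$ and the translation $\vec t_\gamma$ depend on $\gamma$; however, $\vec t_\gamma$ is aligned along $e_3$, which matches the distinguished direction of $T_\gamma$.

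Conditions (1) and (2) of Lemma \ref{lemma:qubit case general result} are straightforward. Condition (1) is immediate from the diagonal form of $T_\gamma$, since the ratio $|T_{\gamma_2}\vec y|/|T_{\gamma_1}\vec y|$ is pinched between ratios of the diagonal entries. For condition (2), a direct calculation gives the identity
\begin{align*}
1 - |T_\gamma \vec w + \vec t_\gamma|^2 \;=\; (1-\gamma)\bigl[(1-|\vec w|^2) + \gamma(1-w_3)^2\bigr],
\end{align*}
which vanishes precisely at $\vec w = e_3$, the Bloch vector of the unique fixed point $\ketbra{0}{0}$. Applying the elementary inequality \eqref{ineqn:elementary lower bound} to the two non-negative summands $1-|\vec w|^2$ and $(1-w_3)^2$ yields explicit constants $c_3>c_4>0$.

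The main obstacle is condition (3), which requires uniform two-sided bounds on the ratio $\wt f_{\gamma_2}(\vec w,\vec y)/\wt f_{\gamma_1}(\vec w,\vec y)$. I would split the analysis into two regions. Away from the singular point, say for $\vec w \in \mc B(e_3,\varepsilon)^c \cap \overline{\mc B}$, both $\wt f_{\gamma_i}$ are continuous and strictly positive on a compact set, so their ratio is automatically bounded. The delicate region is a small neighborhood of $e_3$, where $|\vec w_{\gamma_i}|\to 1$ and both $f(|\vec w_{\gamma_i}|)$ and (possibly) $\cos^2\theta_{\gamma_i}$ vanish, and the vanishing rates must be shown to be compatible.

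The key step in the neighborhood of $e_3$ is to bound $|\vec w_{\gamma_2}\cdot \vec y_{\gamma_2}|^2$ from above by a constant multiple of $|\vec w_{\gamma_1}\cdot \vec y_{\gamma_1}|^2$ plus a remainder that is controlled by $(w_1^2+w_2^2)|\vec y_{\gamma_2}|^2$. Using the explicit formula for $1-|\vec w_\gamma|^2$ derived above, the remainder is bounded by a multiple of $1-|\vec w_{\gamma_2}|^2$, which near the singular point is absorbed into $f(|\vec w_{\gamma_2}|)$ via the asymptotic $f(x)\sim -(1-x^2)\log(1-x^2)$ in \eqref{eqn: convergence rate for auxiliary function}. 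Combining this with the representation \eqref{eqn:g_p:3} and applying the elementary inequality \eqref{ineqn:elementary lower bound}, together with the comparable asymptotics $f(|\vec w_{\gamma_1}|)\asymp f(|\vec w_{\gamma_2}|)$ obtained from the identity for $1-|\vec w_\gamma|^2$, gives a strictly positive $\liminf$ for the ratio as $\vec w \to e_3$. The same analysis with the roles of $\gamma_1$ and $\gamma_2$ swapped controls the ratio from above, verifying condition (3) with finite $c_5>c_6>0$ and concluding $\widecheck{\eta}_{\mc A_{\gamma_1},\mc A_{\gamma_2}}>0$ via Lemma \ref{lemma:qubit case general result}.
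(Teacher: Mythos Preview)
Your overall strategy is exactly the paper's: reduce to the BKM comparison via Lemma~\ref{lemma: criteria general}, invoke Lemma~\ref{lemma:qubit case general result} with the correct Bloch data $T_\gamma=\mathrm{diag}(\sqrt{1-\gamma},\sqrt{1-\gamma},1-\gamma)$ and $\vec t_\gamma=(0,0,\gamma)^T$, dispatch conditions~(1) and~(2) from the identity $1-|\vec w_\gamma|^2=(1-\gamma)[(1-|\vec w|^2)+\gamma(1-w_3)^2]$, and for~(3) split into a compact region away from $e_3$ and a neighborhood of $e_3$ where one controls the inner products and appeals to \eqref{ineqn:elementary lower bound} and the asymptotics \eqref{eqn: convergence rate for auxiliary function}.

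There is one concrete gap. Near $e_3$ you assert that the remainder in $|\vec w_{\gamma_2}\cdot\vec y_{\gamma_2}|^2$ after subtracting a multiple of $|\vec w_{\gamma_1}\cdot\vec y_{\gamma_1}|^2$ is ``controlled by $(w_1^2+w_2^2)|\vec y_{\gamma_2}|^2$''. That is the dephasing remainder (where $\vec t=0$); here the translation $\vec t_\gamma$ changes the inner product to
\[
\vec w_\gamma\cdot\vec y_\gamma=(1-\gamma)\bigl[\vec w\cdot\vec y+\gamma\,y_3(1-w_3)\bigr],
\]
so the cross-channel remainder is proportional to $y_3(1-w_3)$ and, after squaring and dividing by $|\vec y_{\gamma_2}|^2$, is controlled by $(1-w_3)^2$, \emph{not} $(w_1^2+w_2^2)$. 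This distinction matters: $(w_1^2+w_2^2)$ is \emph{not} dominated by $1-|\vec w_\gamma|^2$ near $e_3$ (on the sphere $|\vec w|=1$ one has $(w_1^2+w_2^2)/(1-|\vec w_\gamma|^2)=\frac{1+w_3}{\gamma(1-\gamma)(1-w_3)}\to\infty$), so your proposed absorption into $f(|\vec w_{\gamma_2}|)$ would fail as written.

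The repair is immediate once the correct remainder is identified: $(1-w_3)^2$ is one of the two summands in your formula for $1-|\vec w_\gamma|^2$, hence $(1-w_3)^2\le\frac{1}{\gamma_2(1-\gamma_2)}(1-|\vec w_{\gamma_2}|^2)\le\frac{1}{\gamma_2(1-\gamma_2)}f(|\vec w_{\gamma_2}|)$, and the rest of your outline (applying \eqref{ineqn:elementary lower bound} and comparing $f(|\vec w_{\gamma_1}|)/f(|\vec w_{\gamma_2}|)$ via the identity for $1-|\vec w_\gamma|^2$) goes through exactly as in the paper.
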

\begin{figure}[h]
    \centering\includegraphics[width=0.8\textwidth]{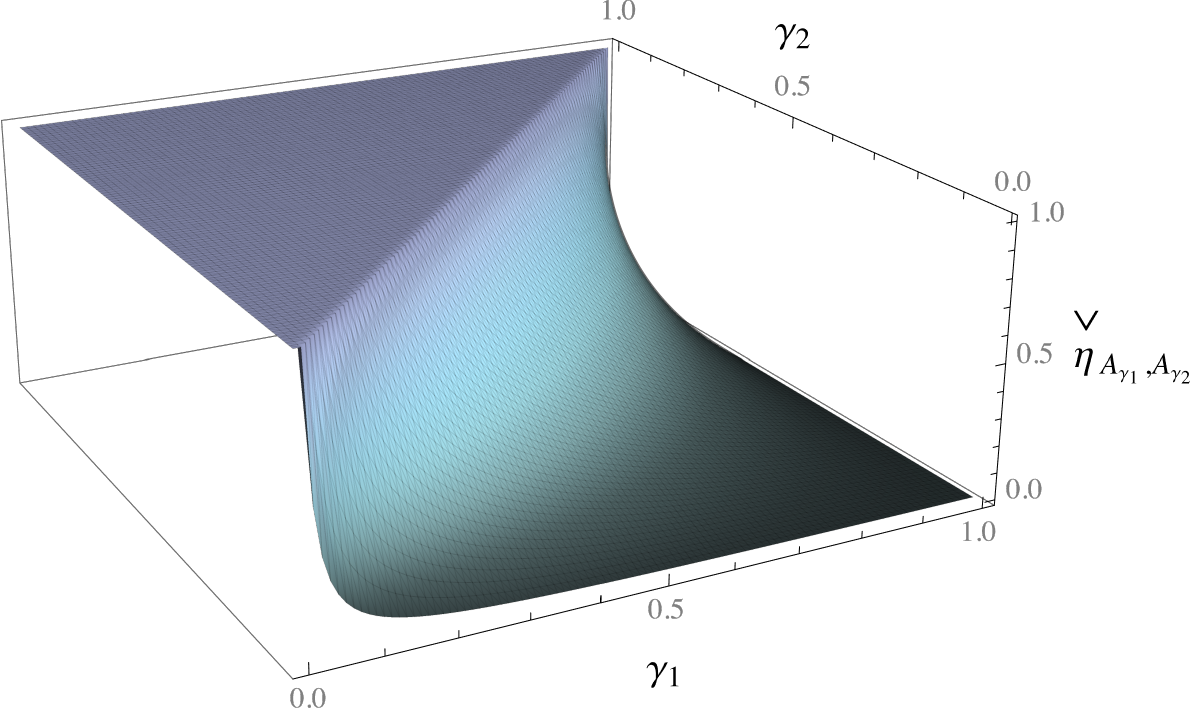}
    \caption{A plot of $\widecheck{\eta}_{\mc A_{\gamma_1},\mc A_{\gamma_2}}$ for $\gamma_1,\gamma_2 \in (0,1)$.}
    \label{fig:relative expansion amplitude damping}
\end{figure}
\begin{proof}
We aim to show that for any $0<\gamma_1 <\gamma_2<1$, there exists a constant $c(\gamma_1,\gamma_2)>0$ such that for any density operator $\rho$ and traceless Hermitian operator $X$, we have  
\begin{equation}\label{amplitude: goal}
    g_{\mc A_{\gamma_1}(\rho)}(\mc A_{\gamma_1}(X)) \ge c(\gamma_1,\gamma_2) g_{\mc A_{\gamma_2}(\rho)}(\mc A_{\gamma_2}(X)).
\end{equation}
Suppose $X = \vec{y} \cdot \vec{\sigma},\ \rho = \frac{1}{2} (\mb I_2 + \vec{w}\cdot \vec{\sigma})$, the qubit representation of $\mc A_{\gamma}$ as in \eqref{eqn: qubit channel representation} is given by 
\begin{equation}
    \mc A_{\gamma}(\rho) = \frac{1}{2} (\mb I_2 + (T_{\gamma} \vec w + \vec{t}_{\gamma})\cdot \vec{\sigma}), \quad T_{\gamma} = diag(\sqrt{1-\gamma},\sqrt{1-\gamma}, 1-\gamma),\ \vec{t}_{\gamma} = (0,0,\gamma)^T.
\end{equation}
Denote $\vec{w}_{\gamma} = T_{\gamma} \vec w + \vec{t}_{\gamma},\ \vec{y}_{\gamma} = T_{\gamma} \vec y$. Then via Lemma \ref{lemma: qubit BKM metric}, we have
\begin{equation}\label{eqn: BKM amplitude}
 \begin{aligned}
 g_{\mc A_{\gamma}(\rho)}(\mc A_{\gamma}(X)) =  4\frac{|\vec{y_{\gamma}}|^2}{1- |\vec{w_{\gamma}}|^2} \bigg(\cos^2 \theta_{\gamma} + \sin^2 \theta_{\gamma} f(|\vec{w}_{\gamma}|)\bigg),
\end{aligned}   
\end{equation}
where $\theta_{\gamma}$ is the angle between $\vec{w_{\gamma}}$ and $\vec{y_{\gamma}}$, and $f$ is defined in \eqref{eqn:auxiliary function f}. Similar to the proof of dephasing channels, We show \eqref{amplitude: goal} by applying Lemma \ref{lemma:qubit case general result}. To be more specific, we need to show \\

\begin{enumerate}
    \item \label{amplitude: 1} There exist universal constants $c_1> c_2>0$, such that 
    \begin{equation*}
        c_2 |\vec{y}_{\gamma_1}| \le |\vec{y}_{\gamma_2}| \le c_1 |\vec{y}_{\gamma_1}|,\ \forall \vec{y}\in \mb R^3.
    \end{equation*}
    \item \label{amplitude: 2}There exist universal constants $c_3 > c_4>0$, such that for any $\vec{w}$ with $|\vec w|\le 1$, we have 
    \begin{equation*}
      c_4 (1 -  |\vec{w}_{\gamma_1}|^2) \le 1 -  |\vec{w}_{\gamma_2}|^2 \le c_3 (1 -  |\vec{w}_{\gamma_1}|^2).
    \end{equation*}
    \item \label{amplitude: 3}There exist universal constants $c_5>c_6>0$, such that for any $\vec{w}$ with $|\vec w|\le 1$ and $\vec y \in \mb R^3$, we have
    \begin{equation*}
       c_6\le \frac{\cos^2 \theta_{\gamma_1} + \sin^2 \theta_{\gamma_1} f(|\vec{w_{\gamma_1}}|)}{\cos^2 \theta_{\gamma_2} + \sin^2 \theta_{\gamma_2} f(|\vec{w_{\gamma_2}}|) }\le c_5.
    \end{equation*}
\end{enumerate}
Note that \eqref{amplitude: 1} follows directly from the form of $T_{\gamma} $. To show \eqref{amplitude: 2}, recall that $$\vec{w_{\gamma}} = (\sqrt{1-\gamma} w_1, \sqrt{1-\gamma} w_2, (1-\gamma)w_3 + \gamma)^T, $$ we have $1 - |\vec{w_{\gamma}}|^2 = (1-\gamma)(1 - |\vec w|^2) + \gamma(1-\gamma) (w_3 - 1)^2$. Therefore, 
\begin{align*}
    \frac{1 - |\vec{w_{\gamma_1}}|^2}{1 - |\vec{w_{\gamma_2}}|^2} \ge \frac{1-\gamma_1}{1 - \gamma_2},\quad \frac{1 - |\vec{w_{\gamma_1}}|^2}{1 - |\vec{w_{\gamma_2}}|^2} \ge \frac{\gamma_2(1-\gamma_2)}{\gamma_1(1 - \gamma_1)}.
\end{align*}
To show \eqref{amplitude: 3}, we follow the same approach as qubit dephasing channels in Proposition \ref{main:dephasing}. First note that $|\vec{w}_{\gamma}| = 1$ if and only if $\vec{w} = e_3$. Using the same compactness argument, we only need to show \begin{align*}
    & \liminf_{\vec w \to e_3} \inf_{\vec{y}} \frac{\cos^2 \theta_{\gamma_1} + \sin^2 \theta_{\gamma_1} f(|\vec{w_{\gamma_1}}|)}{\cos^2 \theta_{\gamma_2} + \sin^2 \theta_{\gamma_2} f(|\vec{w_{\gamma_2}}|) } \\
    & = \liminf_{\vec w \to e_3} \frac{\big|\vec{w_{\gamma_1}} \cdot \vec{y_{\gamma_1}}(\vec w) / |\vec{y_{\gamma_1}}(\vec w)| \big|^2 (1 - f(|\vec{w_{\gamma_1}}|)) + |\vec{w_{\gamma_1}}|^2 f(|\vec{w_{\gamma_1}}|)}{\big|\vec{w_{\gamma_2}} \cdot \vec{y_{\gamma_2}}(\vec w) / |\vec{y_{\gamma_2}}(\vec w)| \big|^2(1 - f(|\vec{w_{\gamma_2}}|))+ |\vec{w_{\gamma_2}}|^2 f(|\vec{w_{\gamma_2}}|)}> 0,
\end{align*}
where $\vec{y}(\vec w)$ is the minimizer. For any $\gamma \in (0,1)$, by direct calculation, for any $\vec w, \vec y$:
\begin{align*}
    \vec{w_{\gamma}} \cdot \vec{y_{\gamma}} & = (\sqrt{1-\gamma} w_1, \sqrt{1-\gamma} w_2, (1-\gamma)w_3 + \gamma) \cdot (\sqrt{1-\gamma} y_1, \sqrt{1-\gamma} y_2, (1-\gamma) y_3) \\
    & = (1-\gamma) \big( \vec{w} \cdot \vec{y} + \gamma y_3(1-w_3) \big),
\end{align*}
which implies
\begin{align*}
    |\vec{w_{\gamma_2}} \cdot \vec{y_{\gamma_2}}|^2 /|\vec{y_{\gamma_2}}|^2 & = (1 - \gamma_2) \big|\big(\vec{w} \cdot \vec{y} + \gamma_2 y_3 (1-w_3)\big) \big|^2 /|\vec{y_{\gamma_2}}|^2 \\
    & = \big(\frac{1-\gamma_2}{1 - \gamma_1}\big)^2 (1-\gamma_1)^2 \big|\big(\vec{w} \cdot \vec{y} + \gamma_1 y_3 (1-w_3) + (\gamma_2 - \gamma_1) y_3 (1-w_3)\big) \big|^2 /|\vec{y_{\gamma_2}}|^2 \\
    & \le 2 \big(\frac{1-\gamma_2}{1 - \gamma_1}\big)^2 \frac{|\vec{w_{\gamma_1}} \cdot \vec{y_{\gamma_1}}|^2}{|\vec{y_{\gamma_2}}|^2} + 2(1-\gamma_2)^2(\gamma_2 - \gamma_1)^2 \frac{y_3^2 (1-w_3)^2}{|\vec{y_{\gamma_2}}|^2} \\
    & \le 2 \big(\frac{1-\gamma_2}{1 - \gamma_1}\big)^2 \frac{|\vec{w_{\gamma_1}} \cdot \vec{y_{\gamma_1}}|^2}{|\vec{y_{\gamma_2}}|^2} + 2\frac{(\gamma_2 - \gamma_1)^2}{\gamma_2(1-\gamma_2)} (1 -|\vec{w}_{\gamma_2}|^2) \\
    & \le 2 \big(\frac{1-\gamma_2}{1 - \gamma_1}\big)^2 \frac{|\vec{w_{\gamma_1}} \cdot \vec{y_{\gamma_1}}|^2}{|\vec{y_{\gamma_2}}|^2} + 2\frac{(\gamma_2 - \gamma_1)^2}{\gamma_2(1-\gamma_2)} f (|\vec{w}_{\gamma_2}|).
\end{align*}
Therefore, 
\begin{align*}
    & \liminf_{\vec w \to e_3} \frac{\big|\vec{w_{\gamma_1}} \cdot \vec{y_{\gamma_1}}(\vec w) / |\vec{y_{\gamma_1}}(\vec w)| \big|^2 (1 - f(|\vec{w_{\gamma_1}}|)) + |\vec{w_{\gamma_1}}|^2 f(|\vec{w_{\gamma_1}}|)}{\big|\vec{w_{\gamma_2}} \cdot \vec{y_{\gamma_2}}(\vec w) / |\vec{y_{\gamma_2}}(\vec w)| \big|^2(1 - f(|\vec{w_{\gamma_2}}|))+ |\vec{w_{\gamma_2}}|^2 f(|\vec{w_{\gamma_2}}|)} \\
    & \ge \liminf_{\vec w \to e_3} \frac{\big|\vec{w_{\gamma_1}} \cdot \vec{y_{\gamma_1}}(\vec w) / |\vec{y_{\gamma_1}}(\vec w)| \big|^2 (1 - f(|\vec{w_{\gamma_1}}|)) + |\vec{w_{\gamma_1}}|^2 f(|\vec{w_{\gamma_1}}|)}{ 2 \big(\frac{1-\gamma_2}{1 - \gamma_1}\big)^2 |\vec{w_{\gamma_1}} \cdot \vec{y_{\gamma_1}}(\vec{w})|^2 /|\vec{y_{\gamma_2}}(\vec w)|^2 (1 - f(|\vec{w_{\gamma_2}}|)+ (2\frac{(\gamma_2 - \gamma_1)^2}{\gamma_2(1-\gamma_2)} + |\vec{w_{\gamma_2}}|^2) f(|\vec{w_{\gamma_2}}|)} \\
    & \ge \min\{\frac{1}{2}, \frac{1}{2\frac{(\gamma_2 - \gamma_1)^2}{\gamma_2(1-\gamma_2)} + 1} \liminf_{\vec w \to e_3} \frac{f(|\vec{w_{\gamma_1}}|)}{f(|\vec{w_{\gamma_2}}|)}\} \\
    & \ge \frac{(1-\gamma_1)\gamma_2}{2(\gamma_1-\gamma_2)^2 + (1-\gamma_2)\gamma_2}>0,
\end{align*}
where we used the following lower bound for the last inequality:
\begin{align*}
    \liminf_{\vec w \to e_3} \frac{f(|\vec{w_{\gamma_1}}|)}{f(|\vec{w_{\gamma_2}}|)} = \liminf_{\vec w \to e_3} \frac{(1 -|\vec{w_{\gamma_1}}|^2)\ln (1 -|\vec{w_{\gamma_1}}|^2)}{(1 -|\vec{w_{\gamma_2}}|^2)\ln (1 - |\vec{w_{\gamma_2}}|^2) } \ge \frac{1-\gamma_1}{ 1- \gamma_2}. 
\end{align*}
This concludes \eqref{amplitude: 3} and thus finishes the proof.
\end{proof}

\section{Application: less noisy but non-degradable channels}\label{sec: applications}
\added{In this section, we discuss an application of the result in Section \ref{sec: examples}.} Based on the idea of reverse-type data processing inequalities and flag-extension of quantum channels \cite{SS08,SW24}, we construct a family of parameterized quantum channels and show that these channels are less noisy for a certain parameter region. Within this region, the constructed channel is neither degradable nor anti-degradable, giving a way to construct examples of channels that are less noisy but not degradable.

\subsection{Probabilistic mixture of degradable and anti-degradable channels}
Suppose $\mc N$ and $\mc M$ are two degradable channels. Define 
\begin{equation}
    \Psi_{p,\mc N,\mc M}:= p \ketbra{0}{0}\otimes \mc N + (1-p) \ketbra{1}{1}\otimes \mc M^c, 
\end{equation}
which is a probabilistic mixture of degradable and anti-degradable channels. We denote the isometries generating $\mc N$ and $\mc M$ as 
\begin{equation}
    U_{\mc N}: \mc H_A \to \mc H_{B_1} \otimes \mc H_{E_1},\quad U_{\mc M}: \mc H_A \to \mc H_{B_2} \otimes \mc H_{E_2}
\end{equation}
and denote $\mc D_1$ and $\mc D_2$ as the degrading quantum channels respectively, i.e.,
\begin{equation}
    \mc D_1 \circ \mc N = \mc N^c,\quad \mc D_2 \circ \mc M = \mc M^c,
\end{equation}
where $\mc N^c$ and $\mc M^c$ are complementary channels.

A sufficient condition for $\Psi_{p,\mc N,\mc M}$ to be less noisy is given as follows:
\medskip
\begin{prop}\label{main:less noisy}
    Suppose $\mc N, \mc M$ are two degradable channels such that the relative expansion coefficient of $(\mc N,\mc M)$ is positive, i.e., $\widecheck{\eta}_{\mc N,\mc M} >0$. Furthermore, we assume the degrading channel $\mc D_1$ (degrading for $\mc N$) satisfies strong data processing inequality, i.e., $\eta_{\mc D_1}<1$. Then for any $$p \in [\frac{1}{1 + \widecheck{\eta}_{\mc N,\mc M} (1- \eta_{\mc D_1})},1],$$ the quantum channel 
    \begin{align}\label{eqn:flagged mixture structure}
        \Psi_{p,\mc N,\mc M}= p \ketbra{0}{0}\otimes \mc N + (1-p) \ketbra{1}{1}\otimes \mc M^c
    \end{align}
    is less noisy. 
\end{prop}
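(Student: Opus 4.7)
The plan is to exploit the classical flag $F$ of $\Psi_{p,\mc N,\mc M}$ to split $I(\mc X;B)-I(\mc X;E)$ into a contribution from $\mc N$ and a contribution from $\mc M$, bound the former from below using the contraction of $\mc D_1$ and the latter from above using positivity of the relative entropy, and finally compare the two using $\widecheck{\eta}_{\mc N,\mc M}$.

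More concretely, fix a classical-quantum input $\rho_{\mc X A}=\sum_x q_x \ketbra{x}{x}\otimes \rho^x$ and set $\bar\rho=\sum_x q_x\rho^x$. In the Stinespring dilation of $\Psi_{p,\mc N,\mc M}$ in \eqref{eqn:flagged mixture structure}, both the output $B$ and the environment $E$ carry a copy of the classical flag $F$, and this flag is independent of $\mc X$ since $\Pr(F=0)=p$ regardless of the input $x$. The chain rule for quantum mutual information, together with the fact that conditioning on a classical register reduces to an average over its values, therefore gives
\[
I(\mc X;B) = p\,I(\mc X;B_1)_{(\id\otimes \mc N)(\rho_{\mc X A})} + (1-p)\,I(\mc X;E_2)_{(\id\otimes \mc M^c)(\rho_{\mc X A})},
\]
and likewise for $\Psi^c_{p,\mc N,\mc M}=p\ketbra{0}{0}\otimes\mc N^c+(1-p)\ketbra{1}{1}\otimes\mc M$,
\[
I(\mc X;E) = p\,I(\mc X;E_1)_{(\id\otimes \mc N^c)(\rho_{\mc X A})} + (1-p)\,I(\mc X;B_2)_{(\id\otimes \mc M)(\rho_{\mc X A})}.
\]
Subtracting yields $I(\mc X;B)-I(\mc X;E) = p\,\Delta_{\mc N} - (1-p)\,\Delta_{\mc M}$, where
\[
\Delta_{\mc N} := I(\mc X;B_1)_{\mc N}-I(\mc X;E_1)_{\mc N^c},\qquad \Delta_{\mc M} := I(\mc X;B_2)_{\mc M}-I(\mc X;E_2)_{\mc M^c}.
\]

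Next I apply the Holevo identity $I(\mc X;B)_{(\id\otimes\mc N)(\rho_{\mc X A})}=\sum_x q_x D(\mc N(\rho^x)\|\mc N(\bar\rho))$ to rewrite each $\Delta$ as a sum of differences of relative entropies. Using the degradability $\mc N^c=\mc D_1\circ\mc N$ and the definition of $\eta_{\mc D_1}$ in \eqref{def:contraction coefficient}, which gives $D(\mc N^c(\rho^x)\|\mc N^c(\bar\rho))\le \eta_{\mc D_1}D(\mc N(\rho^x)\|\mc N(\bar\rho))$, I obtain
\[
\Delta_{\mc N} \ge (1-\eta_{\mc D_1})\sum_x q_x D(\mc N(\rho^x)\|\mc N(\bar\rho)),
\]
while dropping the nonnegative term $D(\mc M^c(\rho^x)\|\mc M^c(\bar\rho))$ yields
\[
\Delta_{\mc M} \le \sum_x q_x D(\mc M(\rho^x)\|\mc M(\bar\rho)).
\]
Since $\supp(\rho^x)\subseteq \supp(\bar\rho)$ holds automatically, the definition of $\widecheck{\eta}_{\mc N,\mc M}$ in \eqref{def:rel-expan} gives $D(\mc N(\rho^x)\|\mc N(\bar\rho))\ge \widecheck{\eta}_{\mc N,\mc M}\,D(\mc M(\rho^x)\|\mc M(\bar\rho))$ for every $x$. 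Combining the three inequalities, a sufficient condition for $p\Delta_{\mc N}\ge (1-p)\Delta_{\mc M}$ is $p(1-\eta_{\mc D_1})\widecheck{\eta}_{\mc N,\mc M}\ge 1-p$, which rearranges exactly to $p\ge 1/(1+\widecheck{\eta}_{\mc N,\mc M}(1-\eta_{\mc D_1}))$.

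The main step that requires care is the decomposition of the mutual information in Step~1: one must identify the output and environment of the Stinespring dilation of $\Psi_{p,\mc N,\mc M}$ correctly (so that the classical flag appears on both sides), and then justify that the quantum conditional mutual information given the classical flag reduces to a convex combination of the mutual informations on the two branches. Beyond this, the argument is a direct assembly of the degradability hypothesis, the SDPI of $\mc D_1$, and the relative expansion coefficient $\widecheck{\eta}_{\mc N,\mc M}$ already guaranteed by the hypotheses.
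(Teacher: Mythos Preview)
Your proposal is correct and follows essentially the same approach as the paper: the same flag decomposition $I(\mc X;B)-I(\mc X;E)=p\,\Delta_{\mc N}-(1-p)\,\Delta_{\mc M}$, the same use of $\eta_{\mc D_1}$ to bound $\Delta_{\mc N}$ below, dropping of $I(\mc X;E_2)$ to bound $\Delta_{\mc M}$ above, and the relative expansion coefficient to compare the two. The only cosmetic difference is that you apply the Holevo identity term by term, whereas the paper packages the same step as an appeal to \cite[Proposition~2.3]{HRS22}; your version is in fact slightly cleaner since it avoids the ratio manipulation (and the attendant worry about $\Delta_{\mc M}=0$).
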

\begin{proof}
    Our goal is to show that for any classical-quantum state $\rho_{\mc X A} = \sum_{x \in \mc X} p_x \ketbra{x}{x} \otimes \rho_A^x$, we have 
\begin{align*}
    I(\mc X;B) \ge I(\mc X;E), 
\end{align*}
where $\rho_{\mc X B} = \sum_{x \in \mc X} p_x \ketbra{x}{x} \otimes \Psi_{p,\mc N,\mc M}(\rho_A^x)$ and $\rho_{\mc X E} = \sum_{x \in \mc X} p_x \ketbra{x}{x} \otimes \Psi^c_{p,\mc N,\mc M}(\rho_A^x)$. \added{Denoting the isometry $U_{\mc N}: A \to B_1 E_1$ and $U_{\mc M}: A \to B_2 E_2$, we can decompose $\mc H_B = \mc H_{B_1} \oplus \mc H_{E_2},\ \mc H_E = \mc H_{E_1} \oplus \mc H_{B_2}$, and 
\begin{align*}
    & \rho_{\mc X B_1} = \sum_{x \in \mc X} p_x \ketbra{x}{x} \otimes \mc N(\rho_A^x),\quad \rho_{\mc X B_2} = \sum_{x \in \mc X} p_x \ketbra{x}{x} \otimes \mc M(\rho_A^x), \\
    & \rho_{\mc X E_1} = \sum_{x \in \mc X} p_x \ketbra{x}{x} \otimes \mc N^c(\rho_A^x),\quad \rho_{\mc X E_2} = \sum_{x \in \mc X} p_x \ketbra{x}{x} \otimes \mc M^c(\rho_A^x).
\end{align*}}
Noting that the mutual information under convex combination of orthogonal states is additive, we have
\begin{equation}
    I(\mc X;B) - I(\mc X;E) = p (I(\mc X;B_1) - I(\mc X;E_1))- (1-p) (I(\mc X;B_2) - I(\mc X;E_2)).
\end{equation}
Therefore, $I(\mc X;B) - I(\mc X;E) \ge 0$ is equivalent to 
\begin{align}\label{eqn: key inequality of less noisy}
    \frac{I(\mc X;B_1) - I(\mc X;E_1)}{I(\mc X;B_2) - I(\mc X;E_2)} \ge \frac{1-p}{p}.
\end{align}
This holds ture if $p \ge \frac{1}{1 + \widecheck{\eta}_{\mc N,\mc M} (1- \eta_{\mc D_1})}$. In fact, 
\begin{align*}
    \frac{I(\mc X;B_1) - I(\mc X;E_1)}{I(\mc X;B_2) - I(\mc X;E_2)} = \frac{I(\mc X;B_1)}{I(\mc X;B_2)} \left(\frac{1- \frac{I(\mc X;E_1)}{I(\mc X;B_1)} }{1- \frac{I(\mc X;E_2)}{I(\mc X;B_2)}} \right)\ge \frac{I(\mc X;B_1)}{I(\mc X;B_2)} \left(1- \frac{I(\mc X;E_1)}{I(\mc X;B_1)}\right).
\end{align*}
Using \cite[Proposition 2.3]{HRS22}, we see that, for any fixed $\eta > 0$, for any $\rho_{\mc X A}$ with $\tr_{\mc X}(\rho_{\mc X A}) = \sigma_A$, we have $I(\mc X;B_1) \ge  \eta I(\mc X;B_2)$ if and only if, for any $\rho_A$ with $\text{supp}(\rho_A) \subseteq \text{supp}(\sigma_A)$, we have $D(\mc N(\rho_A)\|\mc N(\sigma_A)) \ge \eta D(\mc M(\rho_A)\|\mc M(\sigma_A))$. 
Therefore, we have $\frac{I(\mc X;B_1)}{I(\mc X;B_2)} \ge \widecheck{\eta}_{\mc N,\mc M}$ with the relative expansion coefficient $ \widecheck{\eta}_{\mc N,\mc M}$ defined by \eqref{def:rel-expan}. Similarly, we have $\frac{I(\mc X;E_1)}{I(\mc X;B_1)} \le \eta_{\mc D_1}$ with the contraction coefficient defined by \eqref{def:contraction coefficient}. Therefore, we have 
\begin{align*}
    \frac{I(\mc X;B_1) - I(\mc X;E_1)}{I(\mc X;B_2) - I(\mc X;E_2)} \ge \widecheck{\eta}_{\mc N,\mc M}(1- \eta_{\mc D_1}) \ge \frac{1-p}{p}
\end{align*}
if $p \ge \frac{1}{1 + \widecheck{\eta}_{\mc N,\mc M} (1- \eta_{\mc D_1})}$, which concludes the proof. 
\end{proof}

\subsection{Explicit construction using amplitude damping channels}
For two amplitude damping channels with parameter $\gamma_1,\gamma_2 \in (0,1)$, we define its probabilistic mixture as
\begin{equation}\label{def:target channel}
\begin{aligned}
       & \Psi_{p,\gamma_1,\gamma_2}(\rho) = p \ketbra{0}{0}\otimes \mc A_{\gamma_1}(\rho) + (1-p) \ketbra{1}{1} \otimes \mc A_{\gamma_2}(\rho).
\end{aligned}
\end{equation}
The regions of degradability and anti-degradability are given in \cite[Proposition IV.1]{SW24}, finding that
$\Psi_{p,\gamma_1,\gamma_2}$ is degradable if and only if $(p,\gamma_1, \gamma_2)$ satisfies one of the following conditions:
    \begin{enumerate}
        \item For $p = \frac{1}{2}$: $\gamma_1 + \gamma_2 \le 1$. 
        \item For $p> \frac{1}{2}$: $\gamma_1 + \gamma_2 \le 1$ and $\gamma_1 \le \frac{1}{2}$.
        \item For $p<\frac{1}{2}$: $\gamma_1 + \gamma_2 \le 1$ and $\gamma_2 \le \frac{1}{2}$.
    \end{enumerate}
    On the other hand, $\Psi_{p,\gamma_1,\gamma_2}$ is anti-degradable if and only if $(p,\gamma_1, \gamma_2)$ satisfies one of the following conditions:
    \begin{enumerate}
        \item For $p = \frac{1}{2}$: $\gamma_1 + \gamma_2 \ge 1$. 
        \item For $p> \frac{1}{2}$: $\gamma_1 + \gamma_2 \ge 1$ and $\gamma_1 \ge \frac{1}{2}$.
        \item For $p<\frac{1}{2}$: $\gamma_1 + \gamma_2 \ge 1$ and $\gamma_2 \ge \frac{1}{2}$.
    \end{enumerate}
\begin{figure}[ht]
    \centering\includegraphics[width=0.5\textwidth]{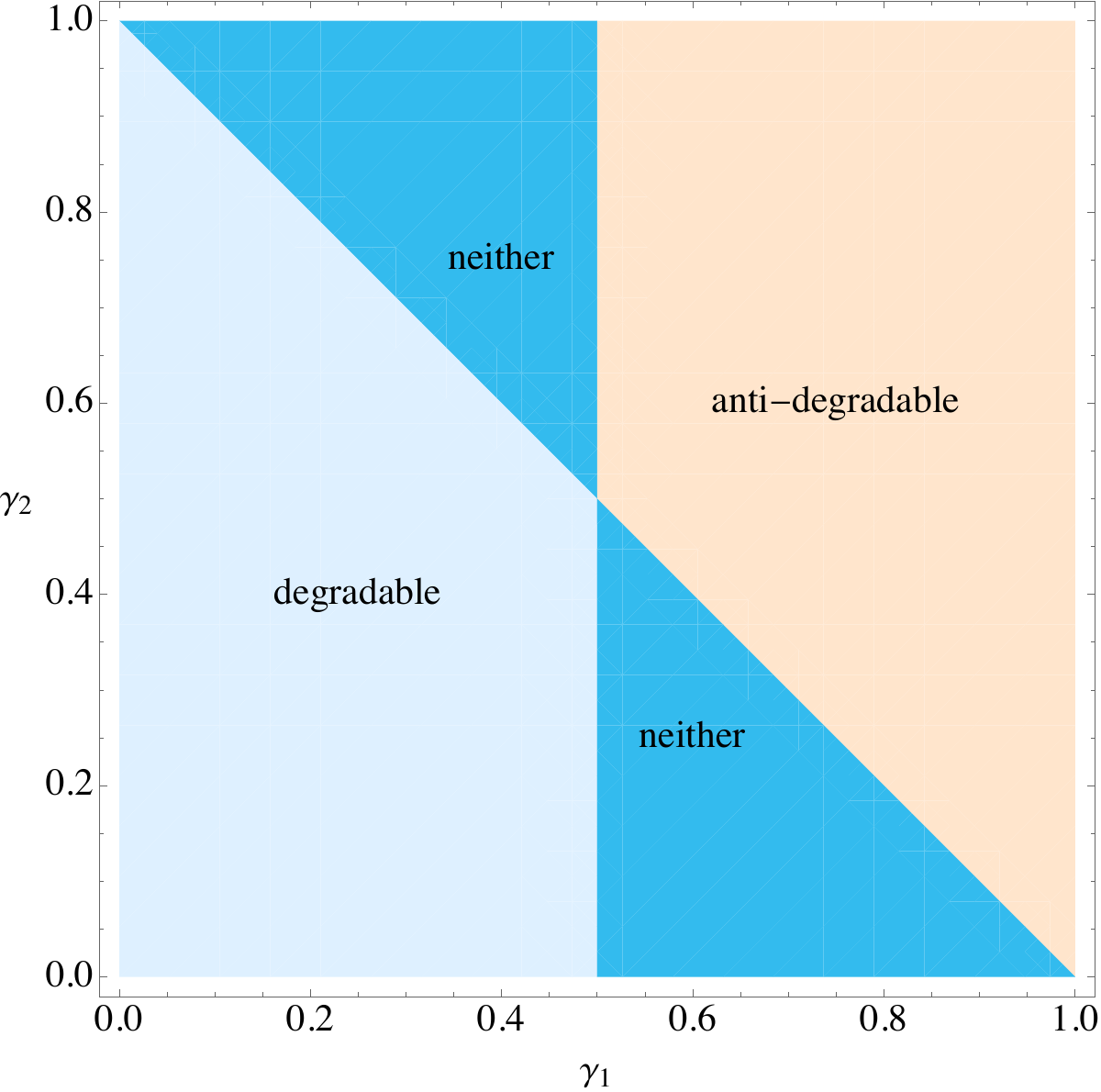} 
    \caption{Degradable and anti-degradable regions for probabilistic mixture of two amplitude damping channels defined in \eqref{def:target channel} in the case $p>\frac{1}{2}$. A plot of the corresponding regions for the case of $p<\frac{1}{2}$ can be found in \cite{SW24}.}
    \label{fig:deg region}
\end{figure}
Recall that for an amplitude damping channel $\mc A_{\gamma}$, its complementary channel is given by an amplitude damping channel $\mc A_{1-\gamma}$. Furthermore, $\mc A_{\gamma}$ is degradable if and only if $\gamma \le \frac{1}{2}$ and the degrading channel $\mc D$ is given by another amplitude damping channel $\mc D = \mc A_{\wt \gamma}$ with damping parameter $\wt \gamma = \frac{1 - 2\gamma}{1-\gamma} \in (0,1)$. Therefore, in order to apply Proposition~\ref{main:less noisy}, we need to show that the contraction coefficient of any amplitude damping channel is strictly less than 1.

We can estimate the contraction coefficient of an amplitude damping channel using the results in \cite{HR15, HT24} which connect the contraction coefficient using relative entropy and the contraction coefficient using trace distance:
\medskip
\begin{lemma}\label{lemma:contraction coefficient entropy and trace}
    For any quantum channel $\mc N$, denote 
    \begin{equation}\label{def:contraction coefficient entropy and trace}
        \eta_{\mc N}:= \sup_{\rho \neq \sigma} \frac{D(\mc N(\rho) || \mc N(\sigma))}{D(\rho || \sigma)}, \quad \eta_{\mc N}^{\tr}:= \sup_{\rho \neq \sigma} \frac{\tr(|\mc N(\rho)- \mc N(\sigma)|)}{\tr(|\rho - \sigma|)}.
    \end{equation}
    Then we have 
    \begin{equation}
        (\eta_{\mc N}^{\tr})^2 \le \eta_{\mc N} \le \eta_{\mc N}^{\tr}.
    \end{equation}
\end{lemma}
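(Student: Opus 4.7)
The plan is to establish the two inequalities by separate arguments, each exploiting a different link between relative entropy and trace distance.

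For the upper bound $\eta_{\mc N}\le \eta_{\mc N}^{\tr}$, I would use the integral representation of the relative entropy as a mixture of Hockey-stick (variational) divergences $E_\gamma(\rho\|\sigma)=\tr(\rho-\gamma\sigma)_+$, namely a formula of the form
\[
D(\rho\|\sigma)=\int_0^\infty w(\gamma)\bigl[E_\gamma(\rho\|\sigma)+\gamma E_{1/\gamma}(\sigma\|\rho)\bigr]\,d\gamma
\]
with an explicit non-negative weight $w(\gamma)$. Each $E_\gamma$ is a scaled variational distance, so its contraction under $\mc N$ is dominated by $\eta_{\mc N}^{\tr}$. Pushing this pointwise contraction inside the integral yields $D(\mc N(\rho)\|\mc N(\sigma))\le \eta_{\mc N}^{\tr} D(\rho\|\sigma)$, and the supremum over $\rho\neq\sigma$ gives the first inequality.

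For the lower bound $(\eta_{\mc N}^{\tr})^2\le \eta_{\mc N}$, the strategy is a limiting/local argument combined with Pinsker-type inequalities. Pick a sequence $(\rho_n,\sigma_n)$ whose trace-distance ratio approaches $\eta_{\mc N}^{\tr}$, and consider the shrunken pairs $\rho_{n,t}=(1-t)\omega+t\rho_n$, $\sigma_{n,t}=(1-t)\omega+t\sigma_n$ for some full-support reference state $\omega$ and small $t>0$. The trace-distance ratio is invariant under this rescaling, while the relative-entropy ratio admits a $t\to0$ limit governed by the BKM metric via the second-order expansion in \eqref{eq:2nd}, namely
\[
\lim_{t\to 0}\frac{D(\mc N(\rho_{n,t})\|\mc N(\sigma_{n,t}))}{D(\rho_{n,t}\|\sigma_{n,t})}
=\frac{g_{\mc N(\omega)}(\mc N(X_n))}{g_\omega(X_n)},\qquad X_n=\rho_n-\sigma_n.
\]
Pinsker's inequality applied to the numerator and a reverse Pinsker bound applied to the denominator, both with constants that become sharp (i.e.\ converge to $1/2$) as one takes $\omega$ and the perturbation parameter to the appropriate limits, force the BKM ratio to be at least $(\|\mc N(X_n)\|_1/\|X_n\|_1)^2$, which in turn tends to $(\eta_{\mc N}^{\tr})^2$ along the chosen sequence.

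The main obstacle is quantitative: the constants in Pinsker and reverse Pinsker depend on the reference state and cannot simultaneously equal $1/2$ in general. This is handled by restricting first to full-rank $\omega$ and passing to the limit $\omega\to I/d$ (where both Pinsker directions become tight to leading order), and by isolating the leading-order quadratic behavior of $D$ along line segments so that higher-order terms vanish. The BKM characterization in Lemma~\ref{lemma: criteria general} and the comparison Lemma~\ref{lemma: comparison} give the clean functional-analytic framework in which this cancellation can be executed.
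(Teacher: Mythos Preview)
Your upper bound argument is essentially the paper's: decompose $D$ as an integral of Hockey-Stick divergences and use that each $E_\gamma$ contracts by at most $\eta_{\mc N}^{\tr}$. That part is fine.

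The lower bound argument has a genuine gap. Your plan is to force
\[
\frac{g_{\mc N(\omega)}(\mc N(X))}{g_{\omega}(X)}\;\ge\;\left(\frac{\|\mc N(X)\|_1}{\|X\|_1}\right)^2
\]
by combining a Pinsker bound on the numerator with a reverse Pinsker bound on the denominator, choosing $\omega=I/d$ so that ``both directions become tight.'' But at $\omega=I/d$ the BKM metric is simply $g_{I/d}(X)=d\,\|X\|_2^2$, and Cauchy--Schwarz gives $d\,\|X\|_2^2\ge\|X\|_1^2$, not $\le$. Equality would require all eigenvalues of $X$ to have the same absolute value, which is generically false even for $X=\rho-\sigma$ with $\rho,\sigma$ orthogonal pure states in dimension $d>2$. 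So there is no reverse Pinsker constant tending to $1$ at the maximally mixed state, and the proposed cancellation cannot occur.

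The paper's argument for the lower bound fixes exactly this point by making the reference state depend on $X$: one chooses $\sigma=|X|/\tr|X|$. Because $\sigma$ then commutes with $X$, the BKM metric collapses to $g_\sigma(X)=\tr(\sigma^{-1}X^2)=(\tr|X|)^2$ exactly, so no reverse Pinsker is needed for the denominator. For the numerator one proves the universal inequality $g_{\omega}(Y)\ge(\tr|Y|)^2$ for \emph{any} density $\omega$ and Hermitian $Y$ (via DPI for the BKM metric under the pinching onto the commutative algebra of $Y$, followed by Cauchy--Schwarz), and applies it to $\omega=\mc N(\sigma)$, $Y=\mc N(X)$. Combining these two and invoking Lemma~\ref{lemma: criteria general} gives $(\eta_{\mc N}^{\tr})^2\le\eta_{\mc N}$. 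The missing idea in your proposal is precisely this $X$-adapted choice of reference state.
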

\begin{proof}
    The upper bound is given in~\cite[Lemma 4.1]{HT24} using integral representation (see also \cite[Theorem 4.6]{LR99} for a spectral method). The lower bound is given in \cite[Theorem 5.3 \& Theorem 7.1]{HR15}. For the convenience of the reader, we provide a self-contained proof in Appendix~\ref{appendix: proof of contraction entropy and trace}.
\end{proof}
This implies an estimate of the contraction coefficient for amplitude damping channels:
\medskip
\begin{lemma}\label{lemma:contraction amplitude damping}
    For $\gamma \in (0,1)$, we have 
    \begin{equation}
        1-\gamma \le \eta_{\mc A_{\gamma}} \le \sqrt{1-\gamma}.
    \end{equation}
\end{lemma}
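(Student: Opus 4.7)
The plan is to invoke Lemma~\ref{lemma:contraction coefficient entropy and trace}, which reduces the bound to computing the trace-distance contraction coefficient $\eta^{\tr}_{\mc A_\gamma}$. Once we establish $\eta^{\tr}_{\mc A_\gamma} = \sqrt{1-\gamma}$, the sandwich $(\eta^{\tr}_{\mc A_\gamma})^2 \le \eta_{\mc A_\gamma} \le \eta^{\tr}_{\mc A_\gamma}$ immediately yields the claimed estimate $1-\gamma \le \eta_{\mc A_\gamma} \le \sqrt{1-\gamma}$, so the entire task is to pin down $\eta^{\tr}_{\mc A_\gamma}$.

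To compute $\eta^{\tr}_{\mc A_\gamma}$, I would use the Bloch-sphere parametrization developed in Section~\ref{sec:qubit calculation}. Writing $\rho_i = \frac{1}{2}(I_2 + \vec{w}_i \cdot \vec{\sigma})$ for two qubit density operators, one has $\tr|\rho_1 - \rho_2| = |\vec{w}_1 - \vec{w}_2|$. Since $\mc A_\gamma$ acts affinely on Bloch vectors via $\vec{w} \mapsto T_\gamma \vec{w} + \vec{t}_\gamma$ with $T_\gamma = \text{diag}(\sqrt{1-\gamma},\sqrt{1-\gamma},1-\gamma)$ and $\vec{t}_\gamma = (0,0,\gamma)^T$, the translation cancels in differences and
\begin{equation*}
\frac{\tr|\mc A_\gamma(\rho_1) - \mc A_\gamma(\rho_2)|}{\tr|\rho_1 - \rho_2|} = \frac{|T_\gamma(\vec{w}_1 - \vec{w}_2)|}{|\vec{w}_1 - \vec{w}_2|}.
\end{equation*}
The supremum of this ratio over nonzero $\vec{w}_1 - \vec{w}_2 \in \mb R^3$ equals the operator norm $\|T_\gamma\|_{op} = \max\{\sqrt{1-\gamma},1-\gamma\} = \sqrt{1-\gamma}$, where the last equality uses $\gamma \in (0,1)$.

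The only genuine subtlety is that the supremum defining $\eta^{\tr}_{\mc A_\gamma}$ is taken over pairs of density operators (Bloch vectors in the unit ball of $\mb R^3$), not over arbitrary vectors, so in principle the operator norm might not be attained. I would dispose of this by exhibiting an explicit optimal pair: taking $\vec{w}_1 = (1,0,0)$ and $\vec{w}_2 = (-1,0,0)$ (antipodal pure states on the $x$-axis), we compute $|T_\gamma(\vec{w}_1-\vec{w}_2)|/|\vec{w}_1-\vec{w}_2| = 2\sqrt{1-\gamma}/2 = \sqrt{1-\gamma}$, so the restriction is non-binding and $\eta^{\tr}_{\mc A_\gamma} = \sqrt{1-\gamma}$. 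Applying Lemma~\ref{lemma:contraction coefficient entropy and trace} finishes the proof. There is no real obstacle here; the argument is essentially a one-line operator-norm computation on the Bloch representation.
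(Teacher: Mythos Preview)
Your proposal is correct and is essentially the same as the paper's proof: both reduce to Lemma~\ref{lemma:contraction coefficient entropy and trace} and then compute $\eta_{\mc A_\gamma}^{\tr}=\sqrt{1-\gamma}$ by the linear action of $\mc A_\gamma$ on traceless Hermitian matrices. The only cosmetic difference is that you phrase the computation via the Bloch representation and the operator norm of $T_\gamma$, whereas the paper writes out the $2\times2$ matrix $\rho-\sigma=\begin{pmatrix} x & z\\ z^* & -x\end{pmatrix}$ directly and optimizes the ratio $\sqrt{\frac{(1-\gamma)^2x^2+(1-\gamma)|z|^2}{x^2+|z|^2}}$, taking $x=0$ to attain $\sqrt{1-\gamma}$ (which is exactly your choice of a Bloch difference along the $x$-axis).
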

\begin{proof}
    Because of Lemma~\ref{lemma:contraction coefficient entropy and trace}, we only need to calculate $\eta_{\mc A_{\gamma}}^{\tr}$. In fact, for any qubit density operators $\rho, \sigma$, we have
    \begin{align*}
        \rho - \sigma = \begin{pmatrix}
            x & z \\
            z^* & -x
        \end{pmatrix}, \quad \tr(|\rho - \sigma|) = 2 \sqrt{x^2 + |z|^2}.
    \end{align*}
    for some $x\in \mb R$, $z\in \mb C$.
    After application of an amplitude damping channel, we have
    \begin{align*}
        \mc A_{\gamma}(\rho - \sigma) = \begin{pmatrix}
            (1-\gamma)x & \sqrt{1-\gamma}z \\
            \sqrt{1-\gamma}z^* & -(1-\gamma)x
        \end{pmatrix},\quad \tr(|\mc A_{\gamma}(\rho - \sigma)|) = 2 \sqrt{(1-\gamma)^2 x^2 + (1-\gamma)|z|^2}.
    \end{align*}
    Therefore, we have
    \begin{align*}
        \frac{\tr(| \mc A_{\gamma}(\rho)- \mc A_{\gamma}(\sigma)|)}{\tr(|\rho - \sigma|)}  = \sqrt{\frac{(1-\gamma)^2 x^2 + (1-\gamma)|z|^2}{x^2 + |z|^2}} \in [1-\gamma, \sqrt{1-\gamma}].
    \end{align*}
    Choosing $x=0$, we achieve $\eta_{\mc A_{\gamma}}^{\tr} =\sqrt{1-\gamma}$. Therefore, using Lemma~\ref{lemma:contraction coefficient entropy and trace}, we get the desired result for $\eta_{\mc A_{\gamma}}$.
\end{proof}

Using Proposition~\ref{main:less noisy} and Lemma~\ref{lemma:contraction amplitude damping}, we can determine the region where the channel $\Psi_{p,\gamma_1,\gamma_2}$ from \eqref{def:target channel} is less noisy:
\medskip
\begin{prop}\label{prop:ampdamp regions}
$\Psi_{p,\gamma_1,\gamma_2}$ is less noisy if 
\begin{itemize}
    \item $\gamma_1 + \gamma_2 >1$ and $\gamma_1 < \frac{1}{2}$, and
\begin{align*}
            p \in \left[\frac{1}{1 + \widecheck{\eta}_{\mc A_{\gamma_1}, \mc A_{1-\gamma_2}} (1- \eta_{\mc A_{\frac{1-2\gamma_1}{1-\gamma_1}}})} , 1\right].
\end{align*}
\item $\gamma_1 + \gamma_2 >1$ and $\gamma_2 < \frac{1}{2}$, and
        \begin{align*}
            p \in \left[0, \frac{\widecheck{\eta}_{\mc A_{\gamma_2}, \mc A_{1-\gamma_1}} (1- \eta_{\mc A_{\frac{1-2\gamma_2}{1-\gamma_2}}})}{1 + \widecheck{\eta}_{\mc A_{\gamma_2}, \mc A_{1-\gamma_1}} (1- \eta_{\mc A_{\frac{1-2\gamma_2}{1-\gamma_2}}})}\right].
        \end{align*}
\end{itemize}
\end{prop}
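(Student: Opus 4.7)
The plan is to reduce Proposition \ref{prop:ampdamp regions} to a direct application of Proposition \ref{main:less noisy} by identifying the right degradable components in the flagged mixture $\Psi_{p,\gamma_1,\gamma_2}$, using that the complement of an amplitude damping channel is again an amplitude damping channel. Specifically, since $\mc A_{\gamma}^c = \mc A_{1-\gamma}$, and $\mc A_\gamma$ is degradable iff $\gamma \le 1/2$ with degrading map $\mc A_{\widetilde\gamma}$ where $\widetilde\gamma = (1-2\gamma)/(1-\gamma)$, the assumptions of Proposition \ref{main:less noisy} will be natural to check in each of the two parameter regimes.

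For the first regime ($\gamma_1+\gamma_2>1$ and $\gamma_1<\tfrac12$), I would set $\mc N = \mc A_{\gamma_1}$ and $\mc M = \mc A_{1-\gamma_2}$. The constraints give $\gamma_1<\tfrac12$ and $1-\gamma_2<\gamma_1<\tfrac12$, so both $\mc N$ and $\mc M$ are degradable, while $\mc M^c = \mc A_{\gamma_2}$, which exactly matches the form $p\ketbra{0}{0}\otimes \mc N+(1-p)\ketbra{1}{1}\otimes \mc M^c$ from \eqref{eqn:flagged mixture structure}. The two structural hypotheses of Proposition \ref{main:less noisy} are then guaranteed by results already proved in the paper: Proposition \ref{main:amplitude damping} gives $\widecheck\eta_{\mc A_{\gamma_1},\mc A_{1-\gamma_2}}>0$, and Lemma \ref{lemma:contraction amplitude damping} gives $\eta_{\mc A_{\widetilde\gamma_1}}\le \sqrt{1-\widetilde\gamma_1}<1$ since $\widetilde\gamma_1>0$. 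Plugging into the threshold from Proposition \ref{main:less noisy} yields precisely the stated interval for $p$.

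For the second regime ($\gamma_1+\gamma_2>1$ and $\gamma_2<\tfrac12$), the idea is to exploit the obvious symmetry of the flagged mixture under relabeling the classical flag: the channel
\begin{equation*}
\Psi_{p,\gamma_1,\gamma_2}(\rho) = p\ketbra{0}{0}\otimes \mc A_{\gamma_1}(\rho) + (1-p)\ketbra{1}{1}\otimes \mc A_{\gamma_2}(\rho)
\end{equation*}
is unitarily equivalent (via swapping the flag basis vectors) to $\Psi_{1-p,\gamma_2,\gamma_1}$, and less noisiness is preserved by such local unitaries on the flag. Applying the first-regime argument to $\Psi_{1-p,\gamma_2,\gamma_1}$, with the roles of $\gamma_1$ and $\gamma_2$ swapped, identifies $\mc N = \mc A_{\gamma_2}$ (degradable since $\gamma_2<\tfrac12$) and $\mc M = \mc A_{1-\gamma_1}$ (degradable since $1-\gamma_1<\gamma_2<\tfrac12$), and produces the sufficient condition $1-p \ge [1+\widecheck\eta_{\mc A_{\gamma_2},\mc A_{1-\gamma_1}}(1-\eta_{\mc A_{\widetilde\gamma_2}})]^{-1}$, which rearranges exactly to the stated upper bound on $p$.

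The only step that requires any care beyond bookkeeping is checking that the strict inequality $\eta_{\mc A_{\widetilde\gamma_i}}<1$ holds (so that $1-\eta_{\mc A_{\widetilde\gamma_i}}>0$ and the threshold is strictly below $1$); this follows from Lemma \ref{lemma:contraction amplitude damping} together with the fact that $\widetilde\gamma_i>0$ whenever $\gamma_i<\tfrac12$. The essential analytic content—namely positivity of the relative expansion coefficient between two amplitude damping channels, and strict contractivity of an amplitude damping degrading map—has already been established, so the proposition reduces to the bookkeeping described above, and I do not anticipate a substantive obstacle.
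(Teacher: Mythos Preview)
Your proposal is correct and follows exactly the route the paper intends: the paper states the proposition as a direct consequence of Proposition~\ref{main:less noisy} together with Lemma~\ref{lemma:contraction amplitude damping} (and implicitly Proposition~\ref{main:amplitude damping}), and your identification $\mc N=\mc A_{\gamma_1}$, $\mc M=\mc A_{1-\gamma_2}$ (respectively the flag-swap symmetry for the second regime) is precisely the intended reduction. There is no substantive difference between your argument and the paper's.
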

Recall that $\Psi_{p,\gamma_1,\gamma_2}$ is not degradable for the parameter regions $\gamma_1 + \gamma_2 >1$, $\gamma_1< \frac{1}{2}$ and $ p> \frac{1}{2}$ or $\gamma_1 + \gamma_2 >1$, $\gamma_2 < \frac{1}{2}$ and $p< \frac{1}{2}$, \added{see Figure \ref{fig:deg region} and the conditions above it}. Then we have:
\medskip
\begin{cor}
   There are non-trivial regions of $(\gamma_1,\gamma_2)$ where the channel is $\Psi_{p,\gamma_1,\gamma_2}$ less noisy but not degradable. 
\end{cor}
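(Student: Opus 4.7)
The plan is to combine Proposition~\ref{prop:ampdamp regions} with the degradability/anti-degradability characterization of $\Psi_{p,\gamma_1,\gamma_2}$ (from \cite{SW24}) recalled just above that proposition. The non-degradable half-space is described by $\gamma_1+\gamma_2>1$, $\gamma_1<\tfrac{1}{2}$ and $p>\tfrac{1}{2}$ (or the symmetric region obtained by swapping $\gamma_1\leftrightarrow\gamma_2$ and $p\leftrightarrow 1-p$); the less-noisy region from the first bullet of Proposition~\ref{prop:ampdamp regions} uses the same pair $(\gamma_1,\gamma_2)$ but requires $p\ge p_0$, where
\[ p_0 \;=\; \frac{1}{1 + \widecheck{\eta}_{\mc A_{\gamma_1},\mc A_{1-\gamma_2}}\bigl(1-\eta_{\mc A_{\wt\gamma_1}}\bigr)}, \qquad \wt\gamma_1=\frac{1-2\gamma_1}{1-\gamma_1}. \]
So the corollary reduces to the statement that the set $\{(\gamma_1,\gamma_2,p):\gamma_1<\tfrac12,\ \gamma_1+\gamma_2>1,\ \max(\tfrac12,p_0)<p<1\}$ is non-empty, and in fact has non-empty interior.

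The key step is to verify that $p_0<1$, i.e.\ that the denominator strictly exceeds one. This is exactly the place where the two nontrivial results proved earlier in the paper get used: Proposition~\ref{main:amplitude damping} asserts $\widecheck{\eta}_{\mc A_{\gamma_1},\mc A_{1-\gamma_2}}>0$ for every $\gamma_1,\gamma_2\in(0,1)$, and Lemma~\ref{lemma:contraction amplitude damping} gives $\eta_{\mc A_{\wt\gamma_1}}\le\sqrt{1-\wt\gamma_1}<1$ whenever $\wt\gamma_1>0$, which is equivalent to the hypothesis $\gamma_1<\tfrac12$. Multiplying these two strictly positive quantities shows $p_0<1$. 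Since Proposition~\ref{prop:ampdamp regions} permits any $p\in[p_0,1]$, and the non-degradable region contains every $p\in(\tfrac12,1)$, any $p\in(\max(\tfrac12,p_0),1)$ yields a channel $\Psi_{p,\gamma_1,\gamma_2}$ that is simultaneously less noisy and non-degradable. Openness of all the strict inequalities implies the admissible triples form a region of positive measure, not just a single point.

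To make the existence claim fully concrete I would then point to the specific example $(\gamma_1,\gamma_2,p)=(0.2,\,0.81,\,0.75)$ highlighted in the introduction: here $\gamma_1+\gamma_2=1.01>1$ and $\gamma_1=0.2<\tfrac12$, so non-degradability is immediate; moreover $\wt\gamma_1=0.75$ gives $1-\eta_{\mc A_{0.75}}\ge 1-\sqrt{0.25}=\tfrac12$, and combining this with the numerical value of $\widecheck{\eta}_{\mc A_{0.2},\mc A_{0.19}}$ (visible from Figure~\ref{fig:relative expansion amplitude damping}) one checks $p_0\le 0.75$, so $p=0.75$ lies in the less-noisy region of Proposition~\ref{prop:ampdamp regions}. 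The symmetric region with $\gamma_2<\tfrac12$ and $p<\tfrac12$ is obtained by an identical argument applied to the second bullet of Proposition~\ref{prop:ampdamp regions}.

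The principal obstacle is not logical but quantitative: Proposition~\ref{main:amplitude damping} only delivers the qualitative positivity of $\widecheck{\eta}_{\mc A_{\gamma_1},\mc A_{1-\gamma_2}}$, with no closed-form lower bound. For the abstract existence statement this is harmless, since any positive value of the product $\widecheck{\eta}\cdot(1-\eta)$ gives $p_0<1$; however, verifying a specific numerical instance requires either tracking the constants produced in the proof of Proposition~\ref{main:amplitude damping} (the two cases according to whether $\vec{w}$ lies near the singular points $\pm e_3$) or relying on numerical evaluation as summarised in Figure~\ref{fig:relative expansion amplitude damping}.
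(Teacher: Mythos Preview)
Your proposal is correct and follows essentially the same argument as the paper: combine the non-degradability characterization from \cite{SW24} (recalled just before the corollary) with Proposition~\ref{prop:ampdamp regions}, and observe that $p_0<1$ because $\widecheck{\eta}_{\mc A_{\gamma_1},\mc A_{1-\gamma_2}}>0$ (Proposition~\ref{main:amplitude damping}) and $\eta_{\mc A_{\wt\gamma_1}}<1$ (Lemma~\ref{lemma:contraction amplitude damping}). The paper's discussion following the corollary is essentially identical, and even supplies the same concrete triple $(p,\gamma_1,\gamma_2)=(0.75,0.2,0.81)$; one minor slip in your final paragraph is that for amplitude damping the only singular point is $e_3$, not $\pm e_3$.
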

A concrete example of a less noisy but not degradable channel is $\Psi_{p,\gamma_1,\gamma_2}$ for parameters $p=0.75$, $\gamma_1=0.2$ and $\gamma_2=0.81$. In fact, we obtain a whole parameter region by using our explicit expression for $\widecheck{\eta}_{\mc A_{\gamma_2}, \mc A_{1-\gamma_1}}$ and the upper bound $\eta_{\mc A_{\frac{1-2\gamma_2}{1-\gamma_2}}} \leq \sqrt{1-\frac{1-2\gamma_1}{1-\gamma_1}}$ from Lemma~\ref{lemma:contraction amplitude damping}. Let $p_{min}(\gamma_1,\gamma_2):= \frac{1}{ 1-(1 - \sqrt{1-\frac{1-2\gamma_1}{1-\gamma_1}})\frac{(1-\gamma_1)(1-\gamma_2)}{\gamma_1\gamma_2}}$. Then, $p_{min}(\gamma_1,\gamma_2)\in [\frac{1}{1 + \widecheck{\eta}_{\mc A_{\gamma_1}, \mc A_{1-\gamma_2}} (1- \eta_{\mc A_{\frac{1-2\gamma_1}{1-\gamma_1}}})} , 1]$. In the region $\gamma_1 + \gamma_2 >1$, $\gamma_1 < \frac{1}{2}$, $p\geq p_{min}(\gamma_1,\gamma_2)>\frac{1}{2}$, which is highlighted in Figure~\ref{fig:rainbow}, the channel $\Psi_{p,\gamma_1,\gamma_2}$ is not degradable (by \cite{SW24}) and less noisy (by Proposition~\ref{prop:ampdamp regions}).

\textbf{\begin{figure}[ht]
    \centering\includegraphics[width=0.8
\textwidth]{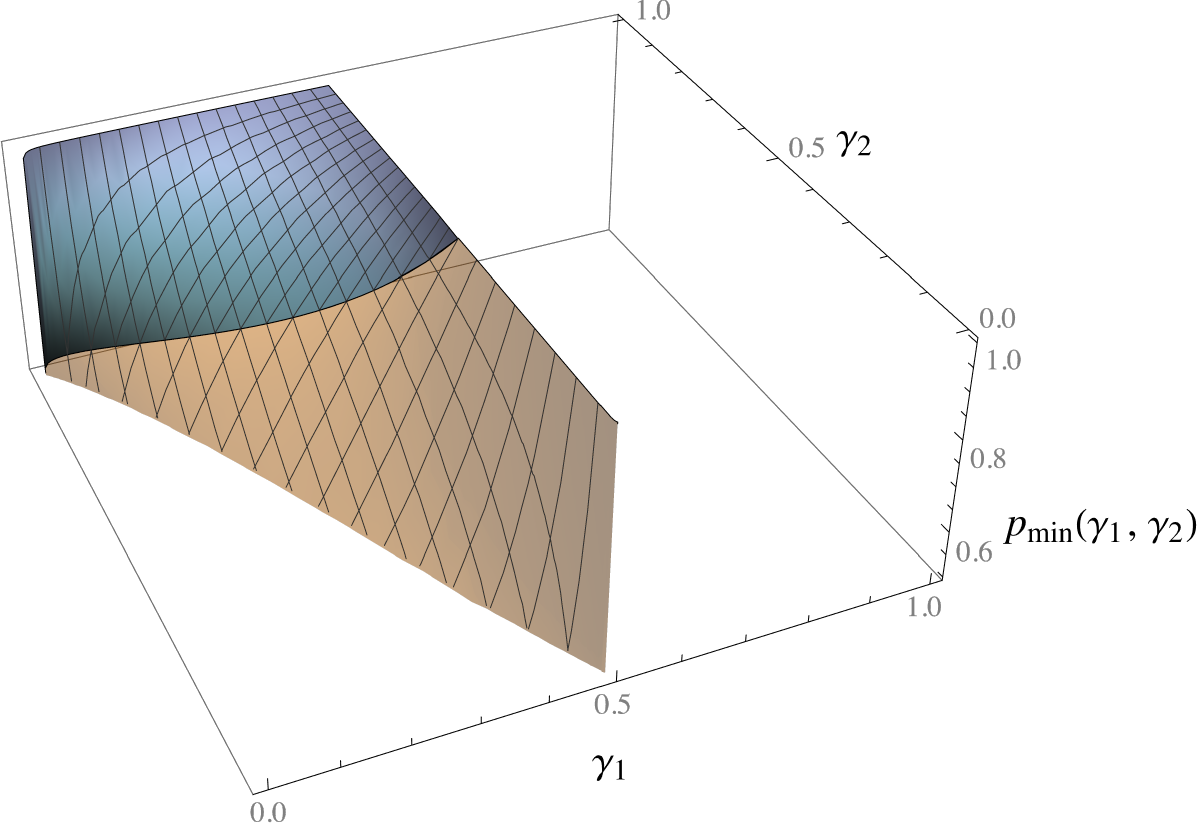} 
    \caption{We plot $p_{min}(\gamma_1,\gamma_2)$, an upper bound on the cutoff probability above which $\Psi_{p, \gamma_1,\gamma_2}$ is less noisy by Proposition~\ref{prop:ampdamp regions}. The highlighted region is the region where $\Psi_{p, \gamma_1,\gamma_2}$ is less noisy but not degradable for any $p\geq p_{min}(\gamma_1,\gamma_2)$. (See the degradability regions in Fig.~\ref{fig:deg region}.)}
    \label{fig:rainbow}
\end{figure}}





\section{Conclusion and Open Problems}\label{sec:conclusion and open problems}


In this work, we explored the DPI and reverse DPI of relative entropy through contraction and expansion coefficients of quantum channels. We first showed that channels with greater input dimension than output dimension cannot have non-zero expansion coefficient, and hence can not fulfill a reverse DPI. We studied relative expansion and contraction coefficients and gave quantitative estimates for several important pairs of quantum channels, including in particular pairs of amplitude damping channels.
Based on those new estimates for amplitude damping channels, we provide the first rigorous construction of level-1 less noisy channels which are not degradable. Several interesting open problems remain, some of which we list below.

\textbf{Other information measures.} Quantum DPI, contraction coefficients, and partial orders can also be defined with respect to other information measures, such as quantum $f$-divergences \cite{Petz85,OP93}, which have been explored in prior works \cite{Wilde18,HRS22,HT24,GZB24}. \added{Some of the results established in this work can be generalized to other information measures in a straightforward way. For example, Theorem \ref{main:impossibility} holds for a larger class of functions beyond the $f$-divergence that corresponds to the relative entropy. However, establishing the equivalence between $f$-divergences and their Riemannian metric can be intriguing and we leave it for future work.}

\textbf{Complete relative contraction and expansion coefficients.}
Another interesting question pertains to the definition of relative contraction and expansion coefficients in terms of the quantum mutual information with fully quantum systems. Using mathematical terminology, it asks whether our inequalites tensorize. Specifically, we propose the following complete versions of these coefficients:
\medskip
\begin{definition}\label{remark:complete version}
We define the complete relative contraction and expansion coefficients as:
\begin{align*}
    \eta_{\mc N,\mc M}^{cb} &= \sup_{\rho_{VA}} \frac{I(V;B_1)}{I(V;B_2)}, \quad 
    \widecheck{\eta}^{cb}_{\mc N,\mc M} = \inf_{\rho_{VA}} \frac{I(V;B_1)}{I(V;B_2)}.
\end{align*}
Here, $\mc N : \mb B(\mc H_A) \to \mb B(\mc H_{B_1})$ and $\mc M : \mb B(\mc H_A) \to \mb B(\mc H_{B_1})$, and $I(V;B)$ denotes the mutual information. 
\end{definition}
\added{The above quantities are non-trivial. In fact, using the joint convexity of the relative entropy, we can easily obtain that $\eta_{\mc D_p}^{cb} \leq 1-p < 1$ for the depolarizing channel as defined in \eqref{def: depolarizing}. Extending techniques from this work, \cite{H2024} and \cite{WBCDT24} to bound complete relative expansion and contraction coefficients remains an open question. In comparison, this definition of complete contraction coefficient $\eta_{\mc D_1,\id}^{cb}$ differs from the one proposed in \cite{HRS22}, which is always equal to 1. Another type of tensorization
\begin{equation}
    \sup_{n\ge 1}\sup_{\rho^n\neq \sigma^n}\frac{D\bigl(\mathcal{N}^{\otimes n}(\rho^n)\,\big\|\,
    \mathcal{N}^{\otimes n}(\sigma^n)\bigr)}{D\bigl(\mathcal{M}^{\otimes n}(\rho^n)\,\big\|\,
    \mathcal{M}^{\otimes n}(\sigma^n)\bigr)}
\end{equation}
is also known not to tensorize well for relative entropy \cite{Cao_2019}. Therefore, the tensorization proposed in Definition \ref{remark:complete version} is a promising tensorization approach. 
}Using the construction outlined in Proposition~\ref{main:less noisy}, for two degradable channels $\mc N$ and $\mc M$, if $\widecheck{\eta}^{cb}_{\mc N,\mc M} > 0$ and $\eta_{\mc D_1,\id}^{cb} < 1$ where $\mc D_1$ is the degrading channel $\mc D_1 \circ \mc N = \mc N^c$, then for any $p \in \left[\frac{1}{1 + \widecheck{\eta}^{cb}_{\mc N,\mc M} (1 - \eta_{\mc D_1,\id}^{cb})}, 1\right]$, the channel
\begin{align*}
    \Psi_{p,\mc N,\mc M} = p \ketbra{0}{0}\otimes \mc N + (1-p) \ketbra{1}{1}\otimes \mc M^c
\end{align*}
is informationally degradable which leads to additivity of quantum capacity.


\textbf{Closed-form expressions.} Numerically computing exact values of the relative expansion coefficient $\widecheck{\eta}_{\mc N,\mc M}$ involves optimizing over pairs of density operators, which is generally computationally expensive. To obtain exact values of $\widecheck{\eta}_{\mc N,\mc M}$ for our examples, we represent a density operator $\rho \in \mb M_d$ through its purification $\ket{\psi} \in \mathbb{C}^{d^2}$, which can be expressed as a unit real vector $v \in \mathbb{R}^{2d^2}$. The function
\[
F(\rho, \sigma) = \frac{D(\mc N(\rho) \|\mc N(\sigma)) }{D(\mc M(\rho) \|\mc M(\sigma))}
\]
is then viewed as a function of two unit real vectors $v_1, v_2 \in \mathbb{R}^{2d^2}$ and optimized using standard numerical schemes such as \texttt{fminunc} in MATLAB. To mitigate the risk of the optimization being trapped in local minima, we randomize the initial values and select the global minimum from multiple runs.

For several of our qubit channel examples, we observe that the optimum is achieved around a certain state with a channel-dependent perturbation. This observation leads us to conjecture closed-form expressions for the relative contraction and relative expansion coefficients of the amplitude damping channel.

For two amplitude damping channels with $\gamma_1, \gamma_2 \in (0,1)$, we observe that the infimum appearing in $\widecheck{\eta}_{\mc A_{\gamma_1}, \mc A_{\gamma_2}}$ is achieved around the state $\ketbra{1}{1}$. Specifically, setting $\rho = \ketbra{1}{1}$ and $\sigma_{\varepsilon} = \varepsilon \ketbra{0}{0} + (1-\varepsilon) \ketbra{1}{1}$, we find that
\[
\widecheck{\eta}'_{\mc A_{\gamma_1}, \mc A_{\gamma_2}} = \lim_{\varepsilon \to 0} \frac{D(\mc A_{\gamma_1}(\rho) \|\mc A_{\gamma_1}(\sigma_{\varepsilon}))}{D(\mc A_{\gamma_2}(\rho) \|\mc A_{\gamma_2}(\sigma_{\varepsilon}))} = \frac{\gamma_2(1-\gamma_1)}{\gamma_1(1-\gamma_2)},
\]
which matches perfectly with the numerical results obtained from the standard procedure. We therefore conjecture that $\widecheck{\eta}'_{\mc A_{\gamma_1}, \mc A_{\gamma_2}} =\widecheck{\eta}_{\mc A_{\gamma_1}, \mc A_{\gamma_2}}$.

For the relative contraction coefficient $\eta_{\mc A_{\gamma_1}, \mc A_{\gamma_2}}$, let $\rho = (1-p) \ketbra{0}{0} + p \ketbra{1}{1}$ and $\sigma_{\varepsilon} = \rho + \varepsilon \sigma_x$, where $\sigma_x$ is the Pauli-X operator. Optimizing over $p \in [0,1]$ and letting $\varepsilon \rightarrow 0$, we can obtain an explicit formula
\[
\eta'_{\mc A_{\gamma_1}, \mc A_{\gamma_2}} = \frac{1-\gamma_1}{1-\gamma_2} \, c(\gamma_1,\gamma_2),
\]
where
\begin{align*}
    c(\gamma_1,\gamma_2) = \max_{p\in [0,1]} \frac{(1 - 2(1-\gamma_2)p) \left[\log(1-(1-\gamma_1)p) - \log((1-\gamma_1)p)\right]}{(1 - 2(1-\gamma_1)p) \left[\log(1-(1-\gamma_2)p) - \log((1-\gamma_2)p)\right]}.
\end{align*}
This formula perfectly matches the numerical results for computing the true value of $\eta_{\mc A_{\gamma_1}, \mc A_{\gamma_2}}$. We therefore conjecture that $\eta'_{\mc A_{\gamma_1}, \mc A_{\gamma_2}}=\eta_{\mc A_{\gamma_1}, \mc A_{\gamma_2}}$.

For two dephasing channels, numerical evidence shows that the optimizers of $\widecheck{\eta}_{\Phi_{p_1}, \Phi_{p_2}}$ for $p_1, p_2 \in (0,1)$ are around a pure state. However, the exact value of $\widecheck{\eta}_{\Phi_{p_1}, \Phi_{p_2}}$ depends on the choice of the pure state, and we do not obtain a closed-form formula based on this method.

\appendix

\section{Proofs}\label{appendix:proof}
\subsection{Proof of Lemma~\ref{lemma: comparison}}
\label{appendix: proof of comparison}
This was proven previously in \cite[Lemma 2.1]{GR22}.
\begin{proof}
    Using operator anti-monotonicity for the function $f(t) = \frac{1}{t + r}$, i.e., $A^{\dagger} (\rho + rI)^{-1} A \le A^{\dagger}  (c\sigma+ rI)^{-1} A$ for any operator $A$, and by the cyclicity of the trace, we have
    \begin{align*}
        g_{\rho}(X) & = \int_0^{\infty} \tr\left(X^{\dagger} (\rho + rI)^{-1} X (\rho + rI)^{-1}\right) \dd r \\
        & = \int_0^{\infty} \tr\left((X(\rho + rI)^{-\frac{1}{2}})^{\dagger} (\rho + rI)^{-1} X (\rho + rI)^{-\frac{1}{2}}\right) \dd r \\
        & \le \int_0^{\infty} \tr\left((X(\rho + rI)^{-\frac{1}{2}})^{\dagger} (c \sigma + rI)^{-1} X (\rho + rI)^{-\frac{1}{2}}\right) \dd r \\
        & = \int_0^{\infty} \tr\left((c \sigma + rI)^{-\frac{1}{2}}X (\rho + rI)^{-1} ((c \sigma + rI)^{-\frac{1}{2}}X )^{\dagger}\right) \dd r \\
        & \le \int_0^{\infty} \tr\left((c \sigma + rI)^{-\frac{1}{2}}X (c \sigma + rI)^{-1} ((c \sigma + rI)^{-\frac{1}{2}}X )^{\dagger}\right) \dd r \\
        & = \int_0^{\infty} \tr\left(X^{\dagger} (c \sigma + rI)^{-1} X (c \sigma + rI)^{-1}\right)\dd r \\
        & = \frac{1}{c} g_{\sigma}(X),
    \end{align*}
    where for the last equality, we used the change of variable $r \mapsto r/c$.
\end{proof}

\subsection{Proof of Lemma~\ref{lemma:contraction coefficient entropy and trace}} \label{appendix: proof of contraction entropy and trace}
This proof is extracted from \cite[Theorem 5.3, Theorem 7.1]{HR15} and \cite[Lemma 4.1]{HT24} . 
\begin{proof}
    \textbf{To prove $(\eta_{\mc N}^{\tr})^2 \le \eta_{\mc N}$}, we show that for any Hermitian traceless operator $X$, there exists a density operator $\sigma$, such that 
    \begin{equation}\label{contraction coefficient entropy and trace: key step 1}
        \frac{(\tr |\mc N(X)|)^2}{(\tr |X|)^2} \le \frac{g_{\mc N (\sigma)}(\mc N(X) )}{g_{\sigma}(X)}.
    \end{equation}
    Then via Lemma \ref{lemma: criteria general}, we conclude the proof by taking the supremum over $X$. To show \eqref{contraction coefficient entropy and trace: key step 1}, we claim that for any Hermitian traceless operator $X$ and density operator $\sigma$, we have 
    \begin{equation}\label{contraction coefficient entropy and trace: key step 2}
        (\tr |X|)^2 \le g_{\sigma}(X).
    \end{equation}
    In fact, given $X$, denote $\mc E_{X}$ as the trace-preserving conditional expectation onto the sub-algebra generated by $X$ (in particular $\mc E_{X}(X) = X$), then using data processing inequality, we have
    \begin{align*}
        g_{\sigma}(X) = \langle X, \mc J_{\sigma}(X)\rangle & \ge \langle \mc E_{X}(X), \mc J_{\mc E_{X}(\sigma)}(\mc E_{X}(X) )\rangle = \langle X, \mc J_{\mc E_{X}(\sigma)}(X)\rangle \\
        & = \tr(X \int_0^{\infty} (\mc E_{X}(\sigma) +r I)^{-1} X (\mc E_{X}(\sigma) +r I)^{-1} dr) \\
        & = \tr(\mc E_{X}(\sigma)^{-1} X^2) = \tr(\mc E_{X}(\sigma)) \cdot \tr(\mc E_{X}(\sigma)^{-1} X^2) \\
        & \ge \big(\tr(\mc E_{X}(\sigma)^{1/2} \mc E_{X}(\sigma)^{-1/2} |X|)\big)^2 = (\tr |X|)^2,
    \end{align*}
    where the integral calculation follows from the fact that $\mc E_{X}(\sigma)$ commutes with $X$ and the last inequality is Cauchy-Schwartz inequality. 

    Then replacing $X$ by $\mc N(X)$ and $\sigma$ by $\mc N(\sigma)$ in \eqref{contraction coefficient entropy and trace: key step 2}, we have 
    \begin{align*}
        (\tr |\mc N(X)|)^2 \le  g_{\sigma}(X).
    \end{align*}
    To compare the denominator in \eqref{contraction coefficient entropy and trace: key step 1}, we choose a special density operator 
    \begin{equation}\label{contraction coefficient entropy and trace: choice of state}
        \sigma = \frac{|X|}{\tr(|X|)},\quad X \neq 0.
    \end{equation}
    Then using the commutativity of $\sigma$ and $X$, we have 
    \begin{align*}
         g_{\sigma}(X)  = \tr(\sigma^{-1} X^2 ) = (\tr |X|)^2. 
    \end{align*}
    In summary, we proved \eqref{contraction coefficient entropy and trace: key step 1} with $\sigma$ given by \eqref{contraction coefficient entropy and trace: choice of state} thus finished the proof of $(\eta_{\mc N}^{\tr})^2 \le \eta_{\mc N}$.  

    \noindent \textbf{To prove $\eta_{\mc N} \le \eta_{\mc N}^{\tr}$}, we use the $L^1$-type integral representation of relative entropy, given by 
    \begin{equation}\label{eqn: L^1 representation}
        D(\rho \|\sigma) = \int_1^{\infty}  \left(\frac{1}{s} E_{s}(\rho \| \sigma) + \frac{1}{s^2} E_{s}(\sigma \| \rho) \right)ds,
    \end{equation}
    where the Hockey-Stick divergence $E_{s}(\rho \| \sigma)$ is given by 
    \begin{equation}
        E_{s}(\rho \| \sigma):= \tr\big( (\rho - s \sigma)_+ \big).
    \end{equation}
    We refer the reader to \cite[Corollary 2.3]{HT24} and \cite[Theorem 6]{Frenkel_2023} for the proof of \eqref{eqn: L^1 representation}. For the Hockey-Stick divergence, we have 
    \begin{equation}\label{eqn: contraction coefficient Hockey-Stick}
        \eta_{s}(\mc N): = \sup_{\rho \neq \sigma} \frac{E_{s}(\mc N(\rho) \| \mc N(\sigma))}{E_{s}(\rho \| \sigma)} = \sup_{\ket{\psi}, \ket{\phi}} E_{s}(\mc N(\ketbra{\psi}{\psi}) \| \mc N(\ketbra{\phi}{\phi})) \le \eta_{\tr}(\mc N),
    \end{equation}
    where the last equality is proved in \cite{Hirche_privacy} and the inequality follows from $E_{s}(\rho \| \sigma) \le E_{1}(\rho \| \sigma)$.
    Then given any $\rho,\sigma$, we have 
    \begin{equation}
    \begin{aligned}
        D(\mc N(\rho) \|\mc N(\sigma) ) & = \int_1^{\infty} \left( \frac{1}{s} E_{s}(\mc N(\rho) \|\mc N(\sigma) ) + \frac{1}{s^2} E_{s}(\mc N(\sigma) \|\mc N(\rho) ) \right) ds \\
        & \le \int_1^{\infty} \left( \frac{1}{s} \eta_{s}(\mc N) E_{s}(\rho \| \sigma)  + \frac{1}{s^2} \eta_{s}(\mc N) E_{s}(\sigma \| \rho)\right) ds \\
        & \le \int_1^{\infty} \left( \frac{1}{s} \eta_{\tr}(\mc N) E_{s}(\rho \| \sigma)  + \frac{1}{s^2} \eta_{\tr}(\mc N) E_{s}(\sigma \| \rho) \right) ds \\
        & = \eta_{\tr}(\mc N) D(\rho \|\sigma ),
    \end{aligned}
    \end{equation}
    which concludes the proof. Note that the first inequality uses the definition of $\eta_{s}(\mc N)$ and for the second inequality, we used \eqref{eqn: contraction coefficient Hockey-Stick}.
\end{proof}

\section*{Acknowledgements} The authors would like to thank Christoph Hirche for helpful comments on the draft and Frederik vom Ende for pointing out the reference \cite{davies1976quantum}. 

\section*{Funding} PB, GS and PW acknowledge funding from the Canada First Research Excellence Fund. LG acknowledges funding by the National Natural Science Foundation of China (grant No. 12401163).

\section*{Data Availability} Data sharing not applicable to this article as no datasets were generated or analyzed during the current study. 

\section*{Conflict of interest} The authors have no conflicts of interest to declare that are relevant to the content of this article.

\bibliography{rdpi}
\end{document}